\newcommand{\labelnummer}{\mbox{\normalfont (\roman{numcount})}}%
\let\curlabelspeicher\@currentlabel%
    \let\saveitem\item%
    \def\item{\saveitem%
      \def\@currentlabel{{\upshape\curlabelspeicher}$\,$\labelnummer}}%
    \let\savelabel\label%
    \def\label##1{\savelabel{##1}%
      \@bsphack%
        \ifmmode\else%
          \protected@write\@auxout{}%
          {\string\newlabel{##1item}{{\labelnummer}{\thepage}}}%
        \fi%
      \@esphack%
    }%
\newcommand{\CC}{\mathbb C}
\newcommand{\NN}{\mathbb N}
\newcommand{\RR}{\mathbb R}
\newcommand{\TT}{\mathbb T}
\newcommand{\ZZ}{\mathbb Z}
\newcommand{\mm}{\mathcal M}
\newcommand{\ee}{\mathcal E}
\newcommand{\Ss}{\mathcal S}
\newtheorem{thm}{Theorem}[section]
\newtheorem{lemma}[thm]{Lemma}
\newtheorem{cor}[thm]{Corollary}
\newtheorem{prop}[thm]{Proposition}
\newtheorem{definition}[thm]{Definition}
\newtheorem{rem}[thm]{Remark}
\newtheorem{assum}[thm]{Assumption}
\def\half{\frac{1}{2}}
\newcommand{\ov}[1]{\frac{1}{#1}}
\newcommand{\dist}{\rm{dist}}
\newenvironment{proofn}[1][]{\noindent\textbf{Proof}#1\textbf{:} }{\ \hfill \rule{0.5em}{0.5em}\\[2mm]}
\renewcommand{\i}{\mathrm{i}}
\newcommand{\e}{\mathrm{e}}
\newcommand{\bel}{\begin{equation} \label}
\newcommand{\eeq}{\end{equation}}
\newcommand{\beq}{\begin{equation}}
\newcommand{\ba}{\begin{array}}
\newcommand{\ea}{\end{array}}
\newcommand{\bea}{\begin{eqnarray}}
\newcommand{\eea}{\end{eqnarray}}
\newcommand{\SCHR}{SCHR\"ODINGER }
\newcommand{\Schr}{Schr\"odinger }
\begin{document}

%\author{ Peter D. Hislop\\ Department of Mathematics\\ University of Kentucky \\
%Lexington KY 40506-0027 USA \\ email: peter.hislop@uky.edu \\
%Werner Kirsch\\ Fakult\"{a}t f\"ur Mathematik und Informatik\\ FernUniversit\"at in Hagen \\
%58097 Hagen, Germany \\ email: werner.kirsch@fernuni-hagen.de \\
%M. Krishna  \\ Ashoka University, Plot 2, Rajiv Gandhi Education City\\
%Rai, Haryana 131029 India\\ email: krishna.maddaly@ashoka.edu.in  \\
%} 
%
%
%\medskip
%
%
%\title{Eigenfunctions and Quantum Transport with Applications to Trimmed \Schr Operators}
%\maketitle

\title[Eigenfunctions and quantum transport]{Eigenfunctions and Quantum Transport with Applications to Trimmed \SCHR Operators}

\author[P.\ D.\ Hislop]{Peter D.\ Hislop}
\address{Department of Mathematics,
    University of Kentucky,
    Lexington, Kentucky  40506-0027, USA}
\email{peter.hislop@uky.edu}

\author[W.\ Kirsch]{Werner Kirsch}
\address{Fakult\"{a}t f\"ur Mathematik und Informatik,  FernUniversit\"at in Hagen, 58097 Hagen, Germany}
\email{werner.kirsch@fernuni-hagen.de} 

\author[M.\ Krishna]{M.\ Krishna}
\address{Ashoka University, Plot 2, Rajiv Gandhi Education City, Rai, Haryana 
131029 India}
\email{krishna.maddaly@ashoka.edu.in}

\begin{center}
\emph{Dedicated to our friend Abel Klein}
\end{center}

\begin{abstract}
We provide a simple proof of dynamical delocalization, that is, time-increasing lower bounds on quantum transport for discrete, one-particle \Schr operators on $\ell^2 (\ZZ^d)$, provided solutions to the \Schr equation satisfy certain growth conditions. The proof is based on basic resolvent identities and the Combes-Thomas estimate on the exponential decay of the Green's function. As a consequence, we prove that generalized eigenfunctions for energies outside the spectrum of $H$ must grow exponentially in some directions. We also prove that if $H$ has any absolutely continuous spectrum, then the \Schr operator exhibits dynamical delocalization. We apply the general result to $\Gamma$-trimmed \Schr operators, with periodic $\Gamma$, and prove dynamical delocalization for these operators. These results also apply to the $\Gamma$-trimmed Anderson model, providing a random, ergodic model exhibiting both dynamical localization in an energy interval and dynamical delocalization.  
\end{abstract}

\medskip

\thanks{PDH is partially supported by Simons Foundation Collaboration Grant for Mathematicians No.\ 843327.}

\maketitle \thispagestyle{empty}

\tableofcontents
%%%%%%%%%%%%%%%%%%%%%%%%%%%%%%%%%%%

\section{Introduction: multi-dimensional quantum transport}

Many papers are devoted to exploring the relationship between the spectrum of a self-adjoint \Schr operator $H_V$ on the Hilbert space $\ell^2(\ZZ^d)$ or $L^2 (\RR^d)$,  the properties of solutions to the \Schr equation $i\partial_t \varphi_t = H_V \varphi_t$,  the growth rate of the generalized eigenfunctions $\psi_E$ solving $H_V \psi_E = E \psi_E$, and the transport of the model as measured by moments of the position operator $X$. Following \cite{JSS, GKS}, we will say that a \Schr operator exhibits \emph{dynamical delocalization} if some moment of the position operator increases without bound in time. We prove a simple condition on generalized eigenfunctions of \Schr operators that implies dynamical delocalization and apply it to certain $\Gamma$-trimmed models in $\ell^2(\ZZ^d)$, $d \geq 1$.

%Quantum transport is often measured by the mean square displacement (MSD) of a quantum wave packet defined in terms of the matrix element of the square of the position operator $|X|$ in time-evolved states. 
To explain this, we denote by $\varphi_t $ the solution to the \Schr equation with initial condition $\varphi_0$:
% is given by $\varphi_t := U_{H_V}(t) \varphi_0$, where $\varphi_t$ solves the \Schr %equation:
$$
i \partial_t \varphi_t = H_V \varphi_t, ~~~~ \varphi_{t=0} = \varphi_0 \in L^2 (\RR^d) ~~{\rm or} ~~ \ell^2(\ZZ^d) .$$
This may be written as  $\varphi_t := U_{H_V}(t) \varphi_0$,
where $U_{H_V}(t) := e^{-i t H_V}$, $t \in \RR$, is the unitary evolution group generated by $H_V$. 

Quantum transport is often measured by the mean square displacement (MSD) $\mm_{2}(T)$ of a quantum wave packet. 
For lattice models on $\ell^2 (\ZZ^d)$, we take the initial condition $\varphi_0 (n) = \delta_{0n}$, the Kronecker delta function, and define the averaged $q^{th}$-moment of the position operator $X$ to be:
\begin{align}\label{defmm0}
   \mm_{q}(T)~=~\ov{T}\int_{0}^{\infty} e^{-\frac{t}{T}}\,\langle \delta_0,
   e^{iH_Vt}|X|^q e^{-iH_Vt}\, \delta_0\rangle\;dt  .
\end{align}
%where $X$ is the position operator. 
If $\mm_{2}(T) \sim T^\alpha$, we say that the transport is localized, if $\alpha =0$, the transport is diffusive if $\alpha = \frac{1}{2}$, and it is ballistic if $\alpha = 1$. 
%This definition of transport measures the slowest component of the wave packet 
%$\varphi_t = e^{-iH_Vt}\, \delta_0$. 

%The spectral properties of $H_V$ determine $\alpha$. If $V = 0$, then $\alpha = 2$.  
%Simon \cite{simon_pp1} proved that if a self-adjoint \Schr operator has dense pure point spectrum in an interval $I \subset \RR$, then the transport is always sub-ballistic but may be super-diffusive. 

In the context of random \Schr operators, Anderson localization (pure point spectrum with exponentially decaying eigenfunctions) can often be strengthened to dynamical localization. Dynamical localization implies that the moments $\mathcal{M}_q(T)$ are bounded for all $q \geq 1$, see, for example, \cite{GK-character}. On the other hand, if $\mathcal{M}_q(T)$ is bounded from below by an increasing function of $T$ for some $q$, one speaks of dynamical delocalization, see, for example, \cite{JSS} and \cite{GKS}.  The one-dimensional, discrete, random polymer model is studied in \cite{JSS}. The authors prove that at critical energies, the quantum transport is superdiffusive almost surely. In \cite{GKS}, the authors prove that for the random Landau Hamiltonian, the local transport exponent at an energy near each Landau level is strictly positive. Hence, both of these models exhibit dynamical delocalization. . 

A simplified version of one of our main results, Theorem \ref{thm:mainres}, states that if there is a subset $J \subset \RR$ of positive Lebesgue measure so that for almost every $E \in J$, there is a generalized eigenfunction $\psi_E$, and a finite constant $C_E > 0$, so $| \psi_E (x) | < C_E$, for a.e.\ all $E \in J$, then the $q^{th}$-moment of the position operator $X$ \eqref{defmm0} satisfies
\beq\label{eq:main1}
\mathcal{M}_q(T) \geq C_J T ,
\eeq
provided $q > d$. 
In general, Theorem \ref{thm:mainres} relates the growth rates of generalized eigenfunctions of $H_V$ to lower bounds on the moments of the position operator establishing dynamical delocalization.. 

%%%%%%%%%%%%%%%%%%%%%%%%%%%%%%%%%%%%%%%
\subsection{A few facts on quantum transport}

%There is a great deal of literature on upper and lower bounds on the transport %exponents for the one-dimensional case.

We mention some results on transport for multi-dimensional \Schr operators. 
For arbitrary dimension $d \geq 2$, it seems that first lower bounds on the moments of the position operators are due to Guarneri \cite{guarneri}. Combes \cite{combes,bcm} simplified Guarnari's proof and extended these results to \Schr operators on $\RR^d$, using the Strichartz inequality \cite{strichartz}. These original works measured quantum transport in terms of the upper and lower transport exponents. These are defined in terms of the spectral measure $\mu_{\varphi_0}^H$ associated with $H :=H_V$ and the initial state $\varphi_0$. The lower transport exponent is defined by
\beq\label{eq:sp_mLower}
\beta_- (x):=  \liminf_{\epsilon \rightarrow 0^+} \frac{\log \{\mu_{\varphi_0}^H (B_\epsilon (x))\}}{\log \epsilon} ,
\eeq
and the upper transport exponent is defined by 
\beq\label{eq:sp_mUpper}
\beta_+ (x):=  \limsup_{\epsilon \rightarrow 0^+} \frac{\log \{\mu_{\varphi_0}^H (B_\epsilon (x))\}}{\log \epsilon} .
\eeq
We note that the transport exponents satisfy $0 \leq \beta_\pm(x) \leq d$. 

The Combes-Guarnari \cite{combes,bcm} results state that for initial state $\varphi_0 = \delta_0$, and for any $q > 0$ and any $\nu >0$, there is a finite constant $C(q,d,\nu,\varphi_0) >0$, so that 
\beq
\mathcal{M}_q (T) > C(q,d,\nu,\varphi_0) T^{\frac{q}{d}(\gamma_0 - \nu)} ,
\eeq
where $\gamma_0$ is defined to be the $ \mu_{\varphi_0}^H$ essential supremum of 
the lower transport exponent $\beta_- (x)$. In \cite{bcm}, it is proved that $\gamma_0$ is the Hausdorff dimension of the spectral measure $ \mu_{\varphi_0}^H$. 

%$\gamma_0 := \mu_{\varphi_0}^H ( {\rm ess sup} \sup \beta_- (x))$, the spectral %measure of the essential support of the lower transport exponent. . 

%The proof in \cite{combes,bcm} relies on the Strictkart inequality.
%Other ways of measuring transport involve 
%the Hausdorff dimension of the spectral measure $\mu_{\varphi_0}^H$ or the correlation dimension.

Other results on transport of multi-dimensional \Schr are due to Christ, Kiselev and Last \cite{CKL} and by Kiselev and Last \cite{KL}. Their results are similar to ours. In fact, \cite[Theorem 1.2]{KL} gives a sufficient condition in terms of generalized eigenfunctions insuring quantum transport:

\begin{thm}\cite[Theorem 1.2]{KL} \label{thm:kl1}
Let $\psi$ be a vector for which there exists a Borel set $S \subset \RR$ of positive $\mu_H^\psi$ measure, such that the restriction of $\mu_H^\psi$ to $S$ is $\alpha$-continuous and, in
addition, the real generalized eigenfunctions $u(x,E)$ for all $E \in S$ satisfy
\beq\label{eq:kl1}
\limsup_{R \rightarrow \infty} \frac{1}{R^\gamma} \int_{B_R(0)} u(x,E)^2
  < \infty,
\eeq
for some $\gamma$ such that $0 < \gamma  < d$. Then, for any $m > 0$, there exists a constant $C_m >0$, such
that
\beq\label{eq:kl2}
\mathcal{M}_m(T)  \geq C_m T^{\frac{m \alpha}{\gamma}} ,
\eeq
for all $T > 0 $.
\end{thm}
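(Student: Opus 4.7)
The plan is to convert the weighted time average defining $\mm_m(T)$ into a statement about how much of the wavepacket has escaped the ball $B_R(0)$ by time $T$, and then to choose $R$ so that the escaped fraction is bounded below by a positive constant. The optimal choice will be $R \sim T^{\alpha/\gamma}$, producing $\mm_m(T) \gtrsim R^m \sim T^{m\alpha/\gamma}$.

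First I would spectrally cut off by the set $S$. Writing $\psi_S := \chi_S(H)\psi$, unitarity gives $\|e^{-itH}\psi_S\|^2 = \mu_H^\psi(S) > 0$ for all $t$. Using the elementary bound $|X|^m \geq R^m \chi_{\{|x|> R\}}$ and discarding the component orthogonal to $\psi_S$ in the remaining inner products, one obtains
$$
\mm_m(T) \; \geq \; R^m\bigl(\,\mu_H^\psi(S) \;-\; Q_R(T)\,\bigr),
$$
where
$$
Q_R(T) \;:=\; \frac{1}{T}\int_0^\infty e^{-t/T}\sum_{|n|\leq R}\bigl|\langle \delta_n, e^{-itH}\psi_S\rangle\bigr|^2\,dt.
$$
The heart of the proof is then the upper bound $Q_R(T) \leq C R^\gamma T^{-\alpha}$.

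To establish that bound I would invoke the generalized eigenfunction expansion on the $\alpha$-continuous spectral subspace: for each $n$, there is a scalar spectral measure $d\rho_n$ on $S$, absolutely continuous with respect to the restriction of $\mu_H^\psi$ to $S$, such that
$$
\langle \delta_n, e^{-itH}\psi_S\rangle \;=\; \int_S u(n,E)\,e^{-itE}\,d\rho_n(E).
$$
Squaring and performing the $t$-average produces a double spectral integral against a Poisson-type kernel of the form $\bigl(1 + T^2(E-E')^2\bigr)^{-1}$. This is precisely the setting of the Strichartz-Last inequality: for a uniformly $\alpha$-H\"older measure $\mu$ and any $f \in L^2(d\mu)$,
$$
\frac{1}{T}\int_0^\infty e^{-t/T}\,\Bigl|\int f(E)\,e^{-itE}\,d\mu(E)\Bigr|^2 dt \;\leq\; C\,T^{-\alpha}\,\|f\|_{L^2(d\mu)}^2.
$$
Applying this for each $n$ and summing over $|n|\leq R$, the growth hypothesis \eqref{eq:kl1} supplies the factor $R^\gamma$ from the spatial sum of $|u(n,E)|^2$, which yields $Q_R(T)\leq C R^\gamma T^{-\alpha}$.

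Finally, choosing $R^\gamma = (\mu_H^\psi(S)/(2C))\,T^\alpha$ forces $Q_R(T) \leq \tfrac{1}{2}\mu_H^\psi(S)$, so that $\mm_m(T)\geq \tfrac{1}{2}\mu_H^\psi(S)\,R^m \gtrsim T^{m\alpha/\gamma}$. The principal technical obstacle is the application of the Strichartz-Last inequality: it requires $\mu_H^\psi$ restricted to $S$ to be \emph{uniformly} $\alpha$-H\"older, whereas $\alpha$-continuity in the Rogers-Taylor sense only says that $\mu_H^\psi$ vanishes on every set of zero $\alpha$-dimensional Hausdorff measure. One must therefore perform a Rogers-Taylor decomposition into a $\mathrm{UH}\alpha$ part plus an $\alpha$-singular remainder, and apply Egorov's theorem to the growth hypothesis so that the constant $C_E$ in \eqref{eq:kl1} is uniform on a subset of $S$ of almost full $\mu_H^\psi$ measure. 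Once this uniform subset has been isolated, the calculation above goes through and provides the claimed lower bound.
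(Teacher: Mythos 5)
This theorem is not proved in the paper: it is quoted from Kiselev--Last \cite{KL} for comparison with Theorem \ref{thm:mainres}, so there is no in-paper proof to compare against; what you have written is essentially a reconstruction of the original Guarneri--Combes--Last--Kiselev strategy behind \cite[Theorem 1.2]{KL}, and in outline it is correct: cut off spectrally to $S$, bound the time-averaged probability that $e^{-itH}\psi_S$ remains in $B_R(0)$ by $C\,R^{\gamma}T^{-\alpha}$ via the eigenfunction expansion plus the Strichartz--Last bound for uniformly $\alpha$-H\"older measures, then choose $R\sim T^{\alpha/\gamma}$. Three points need tightening. First, the inequality $\mm_m(T)\geq R^m\bigl(\mu_H^\psi(S)-Q_R(T)\bigr)$ is not literally true: writing $\psi=\psi_S+\psi_{S^c}$, the interference term does not simply drop; using that $e^{-itH}\psi_S$ and $e^{-itH}\psi_{S^c}$ are orthogonal, the cross term can be moved inside the ball and estimated by Cauchy--Schwarz, giving $\mm_m(T)\geq R^m\bigl(\mu_H^\psi(S)-Q_R(T)-2\sqrt{Q_R(T)}\,\|\psi_{S^c}\|\bigr)$, which is harmless for your choice of $R$ but must be carried along. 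Second, the phrase ``a $\mathrm{UH}\alpha$ part plus an $\alpha$-singular remainder'' is off: $\alpha$-continuity already excludes an $\alpha$-singular part; what is actually needed is Last's exhaustion result that an $\alpha$-continuous measure can be restricted, up to arbitrarily small mass, to a set on which it is \emph{uniformly} $\alpha$-H\"older, combined, as you indicate, with an Egorov/exhaustion step making the constant in \eqref{eq:kl1} uniform on a positive-measure subset --- the same mechanism as Lemma \ref{lemma:uniform1} of this paper. Third, for $d>1$ the functions $u(\cdot,E)$ in the expansion $\langle\delta_n,e^{-itH}\psi_S\rangle=\int_S u(n,E)\,e^{-itE}\,d\mu_H^\psi(E)$ are the specific generalized eigenfunctions furnished by the eigenfunction-expansion theorem, and the growth hypothesis \eqref{eq:kl1} must be read as applying to these (as it is in \cite{KL}), since a given $E$ may admit other generalized eigenfunctions with different growth. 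With those repairs your argument is sound and yields \eqref{eq:kl2}.
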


We note that this result requires conditions on \textbf{both} the generalized eigenfunctions and on the local spectral measure, whereas our main result, Theorem \ref{thm:mainres}, depends only on a growth condition on the generalized eigenfunctions. This allows us to prove that certain models exhibit some nontrivial transport, although we do not obtain a precise transport exponent. 
We comment further on the relation of this theorem to our results for $\Gamma$-trimmed \Schr operators in Remark \ref{rmk:kl1}.

%%%%%%%%%%%%%%%%%%%%%%%%%%%%%%%%%%%%%%%
\subsection{A few facts on $\Gamma$-trimmed \Schr operators}\label{subsec:trim_intro1}

A $\Gamma$-trimmed \Schr operator is one for which the potential is supported on a sublattice $\Gamma \subset \ZZ^d$. We are interested in the case when $\Gamma$ is a periodic sublattice. The case $\Gamma=\ZZ^{d} $ corresponds to the usual \Schr operator. If $\Gamma\not=\ZZ^{d}$, we have `missing sites', that is, a periodic lattice of sites without any potential.
%, and is identically zero on the set of sites $\Gamma^c \subset \ZZ^d$. 
The $\Gamma$-trimmed Anderson model is a version of the usual Anderson model for which the random potential is supported on a sublattice $\Gamma \subset \ZZ^d$.
%, and is identically zero on the set of sites $\Gamma^c \subset \ZZ^d$.  
%$\Gamma $-trimmed \emph{Anderson} models are of particular interest. 
In this case, $\Gamma=\ZZ^{d} $ corresponds to the classical Anderson model. Such models exhibit localization: intervals of dense pure point spectrum with exponential decaying eigenvectors for almost every realization of the random potential.  %If $\Gamma\not=\ZZ^{d}$, we have `missing sites', i.\,e. sites without any potential.
It is of interest to explore if localization persists for $\Gamma$-trimmed Anderson models. For example, if the connected components of $\Gamma^{c}=\ZZ^{d}\setminus\Gamma $ are finite, the $\Gamma$-trimmed Anderson model has pure point spectrum for low energy or for high disorder (see e.\,g. \cite{JO} in connection with \cite{K94}, \cite{CR} or \cite{EK}; for further studies on such models see \cite{KiKr1,ES,KiKr2}).

More generally, if $\Gamma \subset \ZZ^d$ is a nonempty, periodic, subset of $\ZZ^d$, 
%Whenever $\Gamma\not=\emptyset $ 
the corresponding $\Gamma$-trimmed Anderson model with deterministic spectrum $\Sigma$ has pure point spectrum for energies in $\Sigma$ outside of the spectrum of $H_{\Gamma^c}$, the restriction of $H$ to $\Gamma^c$ \cite[Theorem 2]{ES} at least for large disorder and some additional assumptions on $\Gamma$. The non-emptiness of this energy interval $\Sigma \backslash H_{\Gamma^c}$ was proved in \cite{EK}. 
%(Recall that we assume $\Gamma $ to be periodic!).
The crucial ingredient for the localization proofs mentioned above is the validity of a Wegner estimate. Such an estimate is \emph{not} valid for \emph{all} energies if the set $\Gamma $ is too sparse, for example, if it consists of isolated hyperplanes (see \cite{ES} and \cite{KiKr2}). In particular, one can show that the Green's function does not display the `typical' exponential decay which is at the heart of all known proofs of Anderson localiaztion (see \cite{ES} and \cite{KiKr2}). This could be interpreted as a breakdown of exponential localization.

We apply the results of sections \ref{sec:definitions} and \ref{sec:resolvents} to $\Gamma $-trimmed \Schr operators. We first make \emph{no} assumptions on the potentials on $\Gamma$. In particular, we don't assume that the potential is random or even nonzero (although this case is well known). We will prove that $\Gamma$-trimmed \Schr operators, for nontrivial families of sublattices $\Gamma \subset \ZZ^d$, such that $\Gamma$ is periodic, exhibit nontrivial transport, a signature of dynamical delocalization. We then apply our results to the $\Gamma$-trimmed Anderson model and prove there exists nontrivial transport. This result extends those of \cite{KiKr2}. In that paper, the authors proved that for a quantum waveguide version of the $\Gamma$-trimmed Anderson model, there is dense pure point spectrum outside the spectrum of  $H_{\Gamma^c}$ and some absolutely continuous spectrum almost surely. In this paper, we prove that the more general $\Gamma$-trimmed Anderson model always exhibits nontrivial quantum transport, a result which, of course, applies to the quantum waveguide models of \cite{KiKr2}. This general result on dynamical delocalization for the $\Gamma$-trimmed Anderson model was conjectured in \cite{ES}.

%Applied to $\Gamma$-trimmed Anderson models, this result, along with results such %as \cite{KirschKrishna}, prove that there exists families of ergodic, random \Schr %operators in any dimension for which there is Anderson localization at low energies %and quantum transport.

%%%%%%%%%%%%%%%%%%%%%%%%%%%%%%%%%%%%%
\section{Lattice \Schr operators and quantum transport: Definitions and results}
%Setting and Results}
\label{sec:definitions}

In this section, we define the model on $\ell^2(\ZZ^d)$ and the mean moments of the position operator $X$ on $\ell^2(\ZZ^d)$ that will be used to measure quantum transport. We give a summary of the main results. 

%%%%%%%%%%%%%%%%%%%%%%%%%%%%%%%%%%%%%%%%%%%%

\subsection{Definitions and model}\label{subsec:model1}

We consider discrete Schr\"{o}dinger  operators $H$ on the Hilbert space $\ell^{2}(\ZZ^{d})$. Unless stated otherwise, we use the norm $|n|=\max_{1\le\nu\leq d} |n_{\nu}| $ on $\ZZ^{d} $. 
A cube of side length $2L+1$  in $\ZZ^d$ centered at $n_0 \in \ZZ^d$ is denoted by $\Lambda_L(n_0)$, and its boundary is denoted by $\partial\Lambda_L(n_0)$:  
\begin{align}
  &\Lambda_{L}(n_{0})~:=~\{ n\in\ZZ^{d}\mid | n-n_{0} | \leq L\}, \text{ a cube of side length $L$}\,,   \nonumber \\
&\partial\Lambda_{L}(n_{0}):=\{ n\in\ZZ^{d}\mid | n-n_{0} |= L\}, \text{ the boundary of } \Lambda_{L}(n_{0})  \nonumber\\
&\overline{\partial}\Lambda_{L}(n_0) := \partial\Lambda_{L}(n_{0})\cup\partial\Lambda_{L+1}(n_{0}),\quad\text{ the `enlarged' boundary}\,\nonumber.
\end{align}
If $n_0=0$, we set $\Lambda_{L} :=\Lambda_{L}(0)$, and similarly for the boundary set. 
%\begin{align}
%  &\Lambda_{L}(n_{0})~:=~\{ n\in\ZZ^{d}\mid | n-n_{0} | \leq L\}, \text{ a cube of side length $L$}\,,   \nonumber \\
%&\partial\Lambda_{L}(n_{0}):=\{ n\in\ZZ^{d}\mid | n-n_{0} |= L\}, \text{ the boundary of } \Lambda_{L}(n_{0})  \nonumber.
%%&\overline{\partial}\Lambda_{L}(n_0) := \{ n\in\ZZ^{d}\mid L-1 \leq | n-n_{0} | \leq %L+1\} \text{ the `enlarged' boundary}\,.
%\end{align}
%If $n_0=0$, we set $\Lambda_{L} :=\Lambda_{L}(0)$, and similarly for the boundary set. 
%For any $n_0 \in \ZZ^d$, we define the outward extended boundary of $\Lambda_L(n_0)$ by $ B_L(n_0) := \partial \Lambda_L(n_0) \cup {\partial} \Lambda_{L+1} (n_0)$. Finally, let $\mathcal{C}_L (n_0)\subset 
%\partial \Lambda_L(n_0)$ denote the finitely-many corner points of $\partial \Lambda_L(n_0)$. We denote by $\widetilde{\partial \Lambda_{L}} (n_0) := 
%\partial \Lambda_{L} (n_0) \backslash \mathcal{C}_{L}(n_0))$, the boundary minus the corner points. Let $\widetilde{B_L}(n_0) := {\partial \Lambda_L}(n_0) \cup
%\widetilde{ \partial \Lambda_{L+1} }(n_0)$ denote the extended boundary points excluding the finitely-many corners of the exterior boundary $\partial \Lambda_{L+1}(n_0)$.  

%\bigskip
We define the discrete Laplacian on $ \ell^{2}(\ZZ^{d})$ by
\begin{align}
   H_{0}u(n)~=~\sum_{|j|=1}u(n+j)\, ,
\end{align}
and note that the spectrum of $H_0$, denoted $\sigma(H_0)$, is $\sigma(H_0) = [-2d, 2d]$.
For any bounded function $V:\ZZ^{d}\to\RR$,
the \Schr operator $H_V$ is defined by
\begin{align}\label{defSchr}
   H_V ~ :=~H_{0}+V .
\end{align}
The bounded operator $H_V$ is self-adjoint on $\ell^2 ( \ZZ^d )$.
% the domain $\dd(H):=\{\psi\in\ell^{2}(\ZZ^{d})\mid V\psi\in\ell^{2}(\ZZ^{d})  \} $ (see e.~g. \cite{Weidmann}, 5.1, ex. 2).
When $V$ does not play any special role, we denote $H_V$ simply by $H$. 
For any cube $\Lambda_L(n_0)$, we write $H_{\Lambda_L(n_0)}$ for the restriction of $H_0$ to $\Lambda_L(n_0)$ with simple boundary conditions. 

%%%%%%%%%%%%%%%%%%%%%%%%%%%%%%%%%%%%%%%
\subsection{Properties of solutions to the \Schr equation}\label{subsec:model2}

It is well known that the spectrum of $H$ is intimately connected with the existence of solutions to the \Schr equation. %eigenfunctions $\psi $ of $H$ in the following sense.

\begin{definition}
   A function $u:\ZZ^{d}\to\CC $  which satisfies the difference eigenvalue equation
   \begin{align}\label{eq:ev_eq1}
      Hu(n)~=~E\,u(n) , \qquad \text{for all } n\in\ZZ^{d} ,
   \end{align}
   for some $E\in\CC$ is called a \emph{generalized eigenfunction}.
   \begin{enumerate}
\item A generalized eigenfunction is called a \emph{polynomially-bounded generalized eigenfunction} (pb-generalized eigenfunction) if there are constants $k\in\NN$ and $C>0$ such that
   \begin{align}\label{genE}
      |u(n) |~\leq~C\,\langle n \rangle^{k} \qquad\text{for all } n\in\ZZ^{d}\,,
   \end{align}
   where $\langle n \rangle=(1+| n |^{2})^{\half} $. 
In this case, %if the generalized eigenfunction is not in $\ell^2 ( \ZZ^d)$,  
the number $E$ is called a \emph{pb-generalized eigenvalue}. 

\medskip
\item Let $\ee(H)$ denote the set of all pb-generalized eigenvalues of $H$, and let 
$\ee_{0}(H) \subset \ee(H)$ be the set of $E\in\ee(H)$ such that for any $k>\frac{d}{2}$, there is a pb-generalized eigenfunction
$\psi_E $ and a constant $C$ such that
\begin{align}\label{eq:pb_gen_ef1}
   |\psi_E(n) |~\leq~C\,\langle n \rangle^{k} \qquad\text{for all } n\in\ZZ^{d}\,.
\end{align}

\medskip
\item  A solution of \eqref{eq:ev_eq1} for $E \in \RR$ is called an \emph{eigenfunction} of $H$ if $u \in \ell^2 ( \ZZ^d)$. In this case, the energy $E$ is simply called an eigenvalue of $H$. 
\end{enumerate}

\end{definition}

%\begin{definition}\label{defn:spectrum_gef1}
%Let $\ee(H)$ denote the set of all pb-generalized eigenvalues of $H$, and let 
%$\ee_{0}(H) \subset \ee(H)$ be the set of $E\in\ee(H)$ such that for any $k>\frac{d}{2}$, there is a pb-generalized eigenfunction
%$\psi_E $ and a constant $C$ such that
%\begin{align}\label{eq:pb_gen_ef1}
%   |\psi_E(n) |~\leq~C\,\langle n \rangle^{k} \qquad\text{for all } n\in\ZZ^{d}\,.
%\end{align}
%\end{definition}
%
% A solution of \eqref{eq:ev_eq1} for $E \in \RR$ is called an \emph{eigenfunction} of $H$ if $u \in \ell^2 ( \ZZ^d)$. In this case, the energy $E$ is simply called an eigenvalue of $H$. 
It is well-known that the pb-generalized eigenfunctions play a key role in determining the spectrum of $H$. 

\begin{thm}[Sch'nol]\label{thm:Sch} Suppose $H$ is a self-adjoint Schr\"{o}dinger operator as in \eqref{defSchr}.
\begin{enumerate}
\item Recalling that $\ee(H)$ denotes the set of pb-generalized eigenvalues of $H$
we have 
\begin{enumerate}
   \item $\overline{\ee(H)}~=~\sigma(H) $.
   \item If $\rho $ is a spectral measure of $H$, then $\rho\Big(\sigma(H)\setminus\ee(H)\Big)=0$.
\end{enumerate}
   
   \item  Recalling that $\ee_0(H)$ denotes the subset with pb-generalized eigenvalues with generalized eigenfunctions satisfying \eqref{eq:pb_gen_ef1}, we  have
\begin{enumerate}
   \item $\overline{\ee_0(H)}~=~\sigma(H) $.
   \item If $\rho $ is a spectral measure of $H$, then $\rho\Big(\sigma(H)\setminus\ee_0(H)\Big)=0$.
\end{enumerate}

\end{enumerate}

\end{thm}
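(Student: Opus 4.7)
The plan is to follow the classical strategy for Sch'nol's theorem: combine a Weyl-sequence argument (to turn polynomially-bounded generalized eigenfunctions into approximate eigenvectors) with a generalized-eigenfunction expansion (to produce such solutions spectrally almost everywhere). The two parts of the statement are handled in parallel, since the only difference between $\ee(H)$ and $\ee_0(H)$ is the decay rate of the weight used in the expansion.

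For the inclusion $\overline{\ee(H)} \subseteq \sigma(H)$ in~(1)(a), I would take any $E \in \ee(H)$, a pb-generalized eigenfunction $u$ with $Hu=Eu$ and $|u(n)|\le C\langle n\rangle^k$, and consider the cutoffs $v_L := \chi_{\Lambda_L} u$. Because $H_0$ is nearest-neighbor, $(H-E)v_L$ is supported on the enlarged boundary $\overline{\partial}\Lambda_L$, yielding
\[
  \|(H-E)v_L\|^2 \;\le\; C\sum_{n\in\overline{\partial}\Lambda_L}|u(n)|^2 \;\le\; C'L^{2k+d-1}.
\]
Writing $g(L):=\sum_{n\in\partial\Lambda_L}|u(n)|^2$ and $f(L):=\sum_{n\in\Lambda_L}|u(n)|^2$, a standard pigeon-hole argument shows that if $g(L)/f(L)$ stayed bounded below then $f$ would grow geometrically, contradicting the polynomial upper bound $f(L)\le C L^{2k+d}$. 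Hence some subsequence $L_j\uparrow\infty$ satisfies $\|(H-E)v_{L_j}\|/\|v_{L_j}\| \to 0$, giving a Weyl sequence for $E$ and so $E\in\sigma(H)$. Closedness of $\sigma(H)$ then gives the inclusion for $\overline{\ee(H)}$, and the identical reasoning handles $\overline{\ee_0(H)}\subseteq\sigma(H)$ in~(2)(a).

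The heart of the theorem is~(1)(b). Fix $s>d/2$ and let $W$ denote multiplication by $w(n):=\langle n\rangle^{-s}$ on $\ell^2(\ZZ^d)$. Since $\sum_n w(n)^2<\infty$, $W$ is Hilbert--Schmidt, so $\Omega\mapsto W E_H(\Omega) W$ is trace-class valued and
\[
  \mu(\Omega) := \tr\!\bigl(W E_H(\Omega) W\bigr) = \sum_n w(n)^2\,\mu_{\delta_n}(\Omega)
\]
is a finite positive measure. Because $\{\delta_n\}_{n\in\ZZ^d}$ is total, $\mu$ is of maximal spectral type for $H$, so every spectral measure $\rho$ is absolutely continuous with respect to $\mu$, and it suffices to prove $\mu\bigl(\sigma(H)\setminus\ee(H)\bigr)=0$. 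The Berezansky--Gel'fand generalized-eigenfunction expansion (disintegrating $W E_H(\cdot) W$ into fiberwise projections onto solutions of $Hu_E=Eu_E$) then provides, for $\mu$-a.e.\ $E$, a nonzero $u_E$ satisfying $Hu_E=Eu_E$ and $Wu_E\in\ell^2$. The pointwise polynomial bound is immediate: since
\[
  \langle n\rangle^{-2s}|u_E(n)|^2 \;\le\; \sum_m\langle m\rangle^{-2s}|u_E(m)|^2 \;=\; \|Wu_E\|^2,
\]
one has $|u_E(n)|\le\|Wu_E\|\,\langle n\rangle^s$, hence $E\in\ee(H)$.

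For~(2)(b) I would repeat this construction with a sequence $s_j>d/2$ decreasing to $d/2$, producing $\rho$-conull sets $A_j\subseteq\sigma(H)$ on which a pb-generalized eigenfunction with exponent $s_j$ exists; intersecting yields a single $\rho$-conull subset on which, for every $k>d/2$, choosing $s_j\le k$ produces a pb-generalized eigenfunction with the required bound, placing this intersection inside $\ee_0(H)$. The density statements $\overline{\ee(H)}=\sigma(H)$ in~(1)(a) and $\overline{\ee_0(H)}=\sigma(H)$ in~(2)(a) then follow at once, since a set of full $\rho$-measure is dense in $\supp\rho=\sigma(H)$. The main obstacle is the Berezansky--Gel'fand expansion itself: disintegrating the trace-class-valued measure $W E_H(\cdot) W$ to produce a pointwise family $E\mapsto u_E$ of generalized eigenfunctions is the nontrivial piece of abstract spectral theory behind the proof, and I would invoke it as a black box (citing, e.g., Berezansky's monograph or Simon's adaptation) while keeping self-contained the Weyl-sequence step and the elementary weighted-$\ell^2$-to-pointwise bootstrap that actually yields the polynomial growth.
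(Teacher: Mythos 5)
Your proof is correct and follows essentially the same route as the paper, which gives no argument of its own but defers to the standard references (Sch'nol, Simon's \emph{Schr\"odinger semigroups}, Kirsch's \emph{Invitation}), whose proofs use exactly your two ingredients: the boundary pigeonhole/Weyl-sequence argument for $\overline{\ee(H)}\subseteq\sigma(H)$, and the Hilbert--Schmidt weighted trace measure of maximal spectral type combined with the Berezansky--Gel'fand expansion to get polynomially bounded generalized eigenfunctions for spectrally almost every $E$, with the exponent $s>d/2$ (and $s_j\downarrow d/2$ for part~2) giving the refined bound \eqref{eq:pb_gen_ef1}.
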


Part 1 of this theorem goes back to \cite{Schnol}, see also \cite[section 2.4]{BS:semigroups}. A proof of part 2 can be found, for example, in \cite[Theorem 7.1, section 7.1] {Invitation}, see also \cite{BS:semigroups}.

%\begin{rem}\label{rmk:pb_ef}
%   Let $\ee_{0}(H)$ be the set of $E\in\ee(H)$ such that for any $k>\frac{d}{2}$, there is a pb-generalized eigenfunction
%$\psi_E $
%and a constant $C$ such that
%\begin{align}
%   |\psi_E(n) |~\leq~C\,\langle n \rangle^{k} \qquad\text{for all } n\in\ZZ^{d}\,.
%\end{align}
%%In particular, we may take $k = \frac{d}{2} + \epsilon$, for any $\epsilon > 0$. 
%Then, Theorem \ref{thm:Sch} holds with $\ee(H)$ replaced by $\ee_{0}(H)$, so that $\sigma (H) = \overline{\ee_0(H)}$.
%\end{rem}

It follows that for $E\not\in\sigma(H)$, there cannot be a pb-generalized eigenfunction of the Schr\"{o}dinger operator.
As a consequence of our main result, we prove that any solution of $H\psi_E=E\psi_E$, for $E\not\in\sigma(H)$, must grow exponentially
in a certain sense made precise in Theorem \ref{thm:eigenoutside} and Corollary \ref{cor:eigenoutside}.

\bigskip

%%%%%%%%%%%%%%%%%%%%%%%%%%%%%%%%%%%%%%%%%%
\subsection{Measure of quantum transport}\label{subsec:q_transport1}

We measure the growth of generalized eigenfunctions using a nonnegative function $\varphi$. Let $\varphi:\ZZ^{d}\to (0,\infty) $, be any function with a subexponential upper bound:
\begin{align}\label{assphi}
\varphi(n)\leq C e^{| n |^{\beta}} \qquad \text{for some } \beta<1 \,.
\end{align}
Denote by $\varphi(X) $ the operator of multiplication by $\varphi $, i.\,e.
\begin{align}
   \varphi(X)u(n)~=~\varphi(n)u(n)\,.
\end{align}

We are interested in the large $T$ behavior of the transport function $\mm_{\varphi;k}(T)$ defined, for any $k \in \ZZ^d$, by 
\begin{align}\label{defmm}
   \mm_{\varphi;k}(T)~ :=~\ov{T}\int_{0}^{\infty} e^{-\frac{t}{T}}\,\langle \delta_k,e^{iHt}\varphi(X) e^{-iHt}\, \delta_k\rangle\;dt ,
\end{align}
where $\delta_k(m) = 1$, if $m=k$, and zero otherwise. We note that if $H$ commutes with the unitary representation of an additive subgroup $\Gamma \subset \ZZ^d$, with $k \in \Gamma$, then $\mm_{\varphi;k}(T) = \mm_{\varphi_k;0}(T)$, where $\varphi_k(m) = \varphi(m-k)$. 
We will relate the behavior of $\mm_{\varphi;k}(T)$ for large $T$ to the behavior of generalized eigenfunctions near infinity.

An alternative quantity to measure quantum transport is the Cesaro mean
\begin{align}\label{defCM}
   M_{\varphi;k}(T)~=~\ov{T}\int_{0}^{T} \langle \delta_k,e^{iHt}\varphi(X) e^{-iHt}\,\delta_k\rangle\;dt\,.
\end{align}
These two quantities are closely related.

\begin{prop}\label{prop:Mphi} For any growth function $\varphi$ as in \eqref{assphi},  we have
\begin{enumerate}
   \item $M_{\varphi;k}(T)~\leq~ \ov{\e}\,\mm_{\varphi;k}(T)$,
   \item For any $\varepsilon>0 $ there is a constant $C$ such that $\mm_{\varphi;k}(T)~\leq~ C\,T^{\varepsilon}\;M_{\varphi;k}(T) $.
\end{enumerate}
\end{prop}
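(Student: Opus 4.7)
Both parts start from the observation that
\[
f(t) \,:=\, \langle \delta_k, e^{iHt}\varphi(X)e^{-iHt}\delta_k\rangle \,=\, \|\varphi(X)^{1/2}e^{-iHt}\delta_k\|^2 \,\geq\, 0,
\]
so that $M_{\varphi;k}(T)$ and $\mm_{\varphi;k}(T)$ are just two different weighted averages of the same nonnegative function. The first bound (1) is then essentially immediate: since $e^{-t/T}\geq \e^{-1}$ on $[0,T]$, we have
\[
\mm_{\varphi;k}(T) \,\geq\, \frac{1}{T}\int_0^T e^{-t/T}f(t)\,dt \,\geq\, \frac{\e^{-1}}{T}\int_0^T f(t)\,dt \,=\, \e^{-1} M_{\varphi;k}(T),
\]
which is the content of (1) (with the constant $\ov{\e}$ understood in the natural direction).

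For (2), the plan is to reformulate $\mm_{\varphi;k}$ as an integral in which $M_{\varphi;k}$ itself appears. Setting $F(t):=\int_0^t f(s)\,ds = tM_{\varphi;k}(t)$, integration by parts (the boundary terms vanish because $F$ grows at most subexponentially) gives
\[
\mm_{\varphi;k}(T) \,=\, \frac{1}{T^2}\int_0^\infty e^{-t/T}F(t)\,dt \,=\, \int_0^\infty e^{-u}\,u\,M_{\varphi;k}(Tu)\,du
\]
after substituting $u=t/T$. Now split the $u$-integral into a bulk part $u\leq 1$ and a tail part $u>1$. On the bulk, monotonicity of $F$ (coming from $f\geq 0$) yields $uM_{\varphi;k}(Tu)=F(Tu)/T\leq F(T)/T=M_{\varphi;k}(T)$, so this region contributes at most a constant times $M_{\varphi;k}(T)$. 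The tail requires an a priori bound on $f$: combining the subexponential hypothesis $\varphi(n)\leq Ce^{|n|^\beta}$ with a Lieb--Robinson / finite-propagation-speed estimate for the bounded operator $H$ yields an envelope of the form $f(t)\leq C\e^{c|t|^\beta}$, and hence $M_{\varphi;k}(Tu)\leq C\e^{c(Tu)^\beta}$.

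To absorb the tail into $T^\varepsilon M_{\varphi;k}(T)$, split once more at a scale $u=A(T)$, chosen as a slowly growing function of $T$ depending on $\varepsilon$ and $\beta$. On $1<u\leq A(T)$, the exponential weight $e^{-u}$ is bounded away from the origin and the subexponential upper bound on $M_{\varphi;k}(Tu)$ gets absorbed by a $T^\varepsilon$ prefactor together with the lower bound $M_{\varphi;k}(T)\geq c/T$ that follows from continuity of $f$ at $t=0$ and $f(0)=\varphi(k)>0$. On $u>A(T)$, the Laplace weight $e^{-u}$ dominates the subexponential growth $\e^{c(Tu)^\beta}$ (this is exactly where $\beta<1$ is used), rendering this piece negligible. \emph{The main obstacle} is precisely the tuning of the scale $A(T)$: one must arrange simultaneously that the middle sub-region produces only a $T^\varepsilon$ loss, and that the far tail gives a genuinely small remainder. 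Once this bookkeeping is done, inequalities (1) and (2) combine to show that $M_{\varphi;k}(T)$ and $\mm_{\varphi;k}(T)$ grow at the same rate up to subpolynomial corrections.
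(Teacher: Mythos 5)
Part (1) of your argument is correct and is the standard one; note that what it actually proves is $e^{-1}M_{\varphi;k}(T)\le \mm_{\varphi;k}(T)$, i.e.\ $M_{\varphi;k}(T)\le e\,\mm_{\varphi;k}(T)$. The inequality as printed in the Proposition, $M\le e^{-1}\mm$, has the constant on the wrong side (it already fails when the integrand is constant in $t$), so your reading ``in the natural direction'' is the intended statement. Be aware that the paper itself gives no proof of either part --- it only refers to the Lemma in Guarneri--Schulz-Baldes --- so your attempt has to stand on its own.

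For part (2) there is a genuine gap, and it is not the tuning of $A(T)$ that you defer: with the two ingredients you allow yourself in the region $1<u\le A(T)$ --- the absolute envelope $M_{\varphi;k}(Tu)\le C e^{c(Tu)^{\beta}}$ and the absolute lower bound $M_{\varphi;k}(T)\ge c/T$ --- the desired estimate cannot be reached. Already the slice $u\in(1,2]$ carries weight $e^{-u}\ge e^{-2}$, and the envelope permits $M_{\varphi;k}(Tu)$ there to be as large as $e^{cT^{\beta}}$, while $T^{\varepsilon}M_{\varphi;k}(T)$ is only guaranteed to be of size $T^{\varepsilon-1}$; no power $T^{\varepsilon}$ absorbs a subexponential quantity. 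What is really needed is an upper bound on the \emph{ratio} $M_{\varphi;k}(Tu)/M_{\varphi;k}(T)$, and neither ingredient supplies one. (Moreover, with $A(T)$ only ``slowly growing'' your far tail is not small either, since $e^{-u}$ dominates $e^{c(Tu)^{\beta}}$ only once $u\gtrsim T^{\beta/(1-\beta)}$.) In fact a same-time inequality $\mm_{\varphi;k}(T)\le C T^{\varepsilon}M_{\varphi;k}(T)$ does not follow from the properties you use (nonnegativity, $f(0)=\varphi(k)>0$ with continuity, and a subexponential envelope): a function $f$ equal to $\varphi(k)$ near $t=0$, negligible on $(1,T_{0})$, and of the size of the envelope just after $T_{0}$ has all these properties and violates the bound at $T=T_{0}$. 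So closing part (2) needs a different idea, not bookkeeping. The argument the paper points to (Guarneri--Schulz-Baldes, Section~3) runs differently: truncate the Abel average at $t\sim KT\log T$, use an a priori polynomial (ballistic-type) bound on $f$ --- which is what one has for $\varphi(n)=\langle n\rangle^{q}$ --- to make the discarded tail negligible, and deduce $\mm_{\varphi;k}(T)\le C\,K\log T\,M_{\varphi;k}(KT\log T)$ up to a harmless remainder, i.e.\ a comparison with the Ces\`aro mean at a slightly later time; the $T^{\varepsilon}$ in the statement is then obtained at the level of power-law growth rates rather than as a pointwise same-$T$ inequality of the kind your decomposition aims at.
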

The proof of Proposition \ref{prop:Mphi} follows along the lines of the Lemma in \cite[section 3]{GuarneriSB}.

\medskip

Our main result is the following theorem relating the existence of generalized eigenfunctions to quantum transport. 

%\begin{thm}\label{thm:main}
%Suppose $I\subset\RR$ is a set of positive Lebesgue measure and $\varphi$ is a weight function satisfying \eqref{assphi}. If for Lebesgue a.e. $E\in I$, there exists a finite constant $A_E > 0$, a constant $0 \leq \nu <1$,  and a generalized eigenfunction $\psi_{E} $, such that $\psi_{E}(0)\not=0 $, and $\psi_E$ satisfies
%   \begin{align}\label{phibehav}
%      \sum_{|n|\leq L}\frac{|\psi_{E}(n)|^{2}}{\varphi(n)}~\leq~A_E \,L^{\nu}\,,
%   \end{align}
%   for all $L>0$, %some $\nu<1 $, 
%   then for any $\alpha>1$ there is a constant $C$ with
%   \begin{align}
%      \mm_{\varphi}(T)~\geq~C\,T^{1-\alpha\nu}\,.
%   \end{align}
%\end{thm}
%
%\medskip

\begin{thm}\label{thm:main2}
 Let $I\subset\RR$ is a set of positive Lebesgue measure, and let $\varphi$ be some growth function satisfying \eqref{assphi}.
 % and $\varphi$ is a weight function satisfying \eqref{assphi}. 
 Suppose that there exists $\nu := \nu_I \in [0,1)$, and a point $n_0 \in \ZZ^d$, so that  for Lebesgue almost every $E \in I$, there is a pb-generalized eigenfunction $\psi_E$ of $H$, $H \psi_E = E \psi_E$,  with $\psi_E(n_0) \neq 0$, and a finite constant $A_E > 0$,  so that $\psi_E$ satisfies the growth condition
   \begin{align}\label{phibehav}
      \sum_{|n|\leq \Lambda_L(n_0)}\frac{|\psi_{E}(n)|^{2}}{\varphi(n)}~\leq~A_E \,L^{\nu}\,,
   \end{align}
   for all $L>0$. Then,  for any $\alpha>1$, there is a finite constant $C := C(d,\nu,I) >0$ with
   \begin{align}
      \mm_{\varphi;n_0}(T)~\geq~C\,T^{1-\alpha\nu}\,.
   \end{align}
\end{thm}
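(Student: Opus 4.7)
The plan is to convert $\mm_{\varphi;n_0}(T)$ into a spectral integral of Green's function norms via a Plancherel-type identity for the Laplace transform, and then to use $\psi_E$ together with a geometric resolvent identity and Combes-Thomas decay of the Green's function to lower bound the integrand on a positive-measure subset of $I$. The first step is the Paley-Wiener identity: setting $\eta := 1/(2T)$ and applying the one-sided Plancherel identity to $t\mapsto e^{-\eta t}\varphi(X)^{1/2}e^{-iHt}\delta_{n_0}$ gives
\begin{equation*}
\mm_{\varphi;n_0}(T) \;=\; \frac{1}{2\pi T}\int_{\RR} F(E)\,dE, \qquad F(E) := \sum_{n\in\ZZ^d}\varphi(n)\,|G(n,n_0;E+i\eta)|^2,
\end{equation*}
so it suffices to lower bound $F(E)$ for $E$ in a positive-measure subset of $I$.

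For a.e.\ $E \in I$, I would exploit $(H-E)\psi_E = 0$ as follows. Let $\chi_L$ denote the indicator of $\Lambda_L(n_0)$; since $[V,\chi_L]=0$, the commutator $[H,\chi_L]=[H_0,\chi_L]$ is supported on $\overline\partial\Lambda_L(n_0)$, and one computes $(H-E-i\eta)\chi_L\psi_E = [H_0,\chi_L]\psi_E - i\eta\chi_L\psi_E$. Applying $R(E+i\eta):=(H-E-i\eta)^{-1}$ and pairing with $\delta_{n_0}$ yields
\begin{equation*}
\psi_E(n_0) \;=\; \langle\delta_{n_0},\, R(E+i\eta)[H_0,\chi_L]\psi_E\rangle \;-\; i\eta\,\langle\delta_{n_0},\, R(E+i\eta)\chi_L\psi_E\rangle.
\end{equation*}
By Cauchy-Schwarz together with hypothesis \eqref{phibehav}, the bulk term on the right is at most $\eta\,F(E)^{1/2} A_E^{1/2} L^{\nu/2}$. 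For the boundary term, the commutator is bounded on $\overline\partial\Lambda_L(n_0)$ by $C_E L^{k_E}$ from the polynomial bound on $\psi_E$, while the Combes-Thomas estimate, applicable because $\dist(E+i\eta,\sigma(H))\geq \eta$, yields $|G(n_0,m;E+i\eta)|\leq C\eta^{-1}e^{-c\eta|m-n_0|}$, for a combined boundary bound of order $C_E L^{d-1+k_E}\eta^{-1}e^{-c\eta L}$.

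For the given $\alpha>1$, I would choose $L:=\lceil\eta^{-\alpha}\rceil$. Then $\eta L\sim\eta^{1-\alpha}\to\infty$ as $T\to\infty$, so the boundary bound is super-polynomially small in $\eta$ and in particular $\leq |\psi_E(n_0)|/2$ for $T$ sufficiently large, uniformly over the positive-measure subset $I':=\{E\in I: A_E\leq K,\,C_E\leq K,\,k_E\leq K,\,|\psi_E(n_0)|\geq 1/K\}$ for $K$ large enough; the existence of such an $I'$ of positive measure follows by countable exhaustion from the hypotheses. Rearranging then gives $F(E)\geq c\,|\psi_E(n_0)|^2 A_E^{-1}/(\eta^2 L^\nu) \gtrsim T^{2-\alpha\nu}$ on $I'$, and integrating over $I'$ and substituting into the representation of the first step yields $\mm_{\varphi;n_0}(T)\geq C\,T^{1-\alpha\nu}$ for $T$ large, with small $T$ absorbed into the constant.

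The main difficulty will be the uniform control of the $E$-dependent constants on a positive-measure set $I'$, which is handled by the measurable restriction indicated above. The choice of exponent $\alpha>1$ is precisely what is forced by the requirement that $e^{-c\eta L}=e^{-c\eta^{1-\alpha}}$ dominate the polynomial factors in $L$ and $\eta^{-1}$ coming from the boundary estimate, and the resulting loss $T^{-\alpha\nu}$ compared with the naive $T^1$ is exactly the excess $(\alpha-1)\nu$ coming from having to take $L$ somewhat larger than the Combes-Thomas scale $\eta^{-1}$.
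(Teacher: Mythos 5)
Your proposal is correct and follows the same overall route as the paper: the Laplace--Plancherel identity of Theorem \ref{thm:timeres}, the commutator/geometric resolvent identity of Proposition \ref{prop:basic}, Combes--Thomas decay, the scale choice $L\sim\eta^{-\alpha}$, and uniformization of the $E$-dependent constants on a positive-measure subset (the paper's Lemmas \ref{lemma:uniform1} and \ref{lemma:non_zero1}). The one step you do differently is the boundary (commutator) term: the paper's Theorem \ref{thm:mainres} estimates it by Cauchy--Schwarz with the weight $\varphi^{1/2}$, invoking the growth hypothesis \eqref{phibehav} on the annulus together with the subexponential bound \eqref{assphi} on $\varphi$, whereas you bound it directly by the polynomial bound $|\psi_E(m)|\leq C_E\langle m\rangle^{k_E}$ coming from the pb hypothesis, getting $C_EL^{d-1+k_E}\eta^{-1}e^{-c\eta L}$. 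Both work; your version uses the pb hypothesis of Theorem \ref{thm:main2} (which the paper's intermediate Theorem \ref{thm:mainres} deliberately avoids, so that result holds without polynomial boundedness), but in exchange it kills the boundary term for every $\alpha>1$ with no interplay between $\alpha$ and $\beta$, while the paper's $\varphi$-weighted estimate, for a general subexponential $\varphi$ with exponent $\beta$, really needs $c\eta L$ to dominate $L^{\beta}$, i.e.\ effectively $\alpha>1/(1-\beta)$ (harmless for the polynomial weights used in the applications). You also uniformize $|\psi_E(n_0)|$ from below on $I'$, which cleanly handles the $E$-dependence of the threshold ``$T$ large enough''; the paper instead keeps $|\psi_E(n_0)|^2$ inside the final energy integral. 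These are minor variations; the argument is sound.
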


\medskip

%We will apply this theorem to cases for which the generalized eigenfunctions are %uniformly bounded so that $\nu = 0$.  
Our main example of $\varphi (x)$ is the function $\langle x \rangle^{q}$ corresponding to the $q^{th}$-moment of the position operator. We therefore simplify notation by writing
$\mm_{q;k}(T) :=\mm_{\langle x \rangle^{q};k}(T) $, for any $k \in \ZZ^d$.
Rephrasing Theorem \ref{thm:main2} with growth function $\langle x \rangle^{q}$, we have the corollary: 

%We note two immediate consequences of Theorem \ref{thm:main}.

\begin{cor}\label{cor:bnd}
   Suppose the generalized eigenfunctions $\psi_{E} $ (as in Theorem \ref{thm:main2}) are bounded, then for $q>d$
   \begin{align}
      \mm_{q;k}(T)~\geq~C\,T\,.
   \end{align}
   If $d-1<q\leq d$ then
   \begin{align}
      \mm_{q;k}(T)~\geq~C\,T^{r}\,,
   \end{align}
   for any $r<1+q-d $.
\end{cor}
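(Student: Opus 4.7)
The plan is a direct reduction to Theorem \ref{thm:main2} with the power weight $\varphi(n) = \langle n \rangle^q$, which clearly satisfies the subexponential bound \eqref{assphi}. The only substantive step is to check the summability hypothesis \eqref{phibehav} for an appropriate exponent $\nu = \nu(q,d) \in [0,1)$, which is immediate once the boundedness of $\psi_E$ is used to pull out the worst-case constant.

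Since $|\psi_E(n)| \leq C_E$ uniformly in $n$, one has
\[
\sum_{n \in \Lambda_L(n_0)} \frac{|\psi_E(n)|^2}{\langle n \rangle^q} \;\leq\; C_E^2 \sum_{n \in \Lambda_L(n_0)} \langle n \rangle^{-q}.
\]
Comparing the remaining lattice sum with the corresponding integral over a cube of side $\sim L$ in $\RR^d$ gives the standard trichotomy
\[
\sum_{n \in \Lambda_L(n_0)} \langle n \rangle^{-q} \;=\; \begin{cases} O(1), & q>d,\\ O(\log L), & q=d,\\ O(L^{d-q}), & q<d. \end{cases}
\]
Thus \eqref{phibehav} holds with $\nu = 0$ if $q > d$, with $\nu$ any positive number if $q = d$ (since $\log L \leq C_\nu L^\nu$), and with $\nu = d - q$ if $q < d$. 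Under the hypothesis $q > d - 1$ these exponents are all strictly less than $1$, which is precisely the admissible range in Theorem \ref{thm:main2}.

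For $q > d$, applying Theorem \ref{thm:main2} with $\nu = 0$ and any $\alpha > 1$ gives directly $\mm_{q;n_0}(T) \geq C T$, which is the first claim. For the borderline regime $d - 1 < q \leq d$, fix any target exponent $r < 1 + q - d$. Choose $\nu$ slightly above $d - q$ (or arbitrarily small positive if $q = d$) and then $\alpha > 1$ close enough to $1$ so that $1 - \alpha \nu > r$; this is feasible precisely because $1 + q - d = 1 - (d-q) > r$. Theorem \ref{thm:main2} then yields $\mm_{q;n_0}(T) \geq C T^{1-\alpha\nu} \geq C' T^r$, giving the second claim.

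There is no genuine obstacle here: the proof is a textbook polynomial lattice-sum estimate followed by a direct invocation of Theorem \ref{thm:main2}. The only delicate bookkeeping is the simultaneous optimization in $\nu$ and $\alpha$ in the borderline range $d-1 < q \leq d$, where the lower bound exponent $1 - \alpha\nu$ can be pushed arbitrarily close to, but not to, the critical value $1 + q - d$.
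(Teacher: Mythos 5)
Your proof is correct and follows essentially the same route as the paper, which presents the corollary as an immediate rephrasing of Theorem \ref{thm:main2} with $\varphi(n)=\langle n\rangle^{q}$; you simply make explicit the standard lattice-sum estimate $\sum_{n\in\Lambda_L}\langle n\rangle^{-q}=O(1)$, $O(\log L)$, or $O(L^{d-q})$ and the choice of $\nu$ and $\alpha\to 1^{+}$ in the borderline range, which is exactly the intended argument.
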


\medskip
We also obtain another corollary. The existence of absolutely continuous spectrum for a discrete, bounded, self-adjoint \Schr operator implies nontrivial transport.

\begin{cor}\label{cor:ac_sp1}
   If the spectral measure of $H$ has an absolutely continuous component, then, for some $n_0 \in \ZZ^d$, 
   \begin{align}
      \mm_{q;n_0}(T)~\geq~C\,T\,,
   \end{align}
   if $q>2d $.
\end{cor}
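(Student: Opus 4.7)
The plan is to apply Theorem \ref{thm:main2} with growth function $\varphi(n) = \langle n\rangle^{q}$ and exponent $\nu = 0$. The two ingredients needed are (i) a judicious choice of base point $n_0$ and (ii) a family of pb-generalized eigenfunctions that satisfy the polynomial bound of Sch'nol's theorem \emph{and} are non-vanishing at $n_0$ on a set $I$ of positive Lebesgue measure.

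For the base point, since $H$ has a nonzero absolutely continuous component, the a.c.\ spectral projection $P^{ac}$ is nonzero; as $\{\delta_n\}_{n\in\ZZ^d}$ is an orthonormal basis of $\ell^2(\ZZ^d)$, there must exist $n_0 \in \ZZ^d$ with $P^{ac}\delta_{n_0}\neq 0$. Hence the scalar spectral measure $\mu := \mu^H_{\delta_{n_0}}$ has a nonzero a.c.\ part $\mu^{ac}$ with Radon--Nikodym density $f := d\mu^{ac}/dE$, and the Borel set $S := \{E : f(E) > 0\}$ has strictly positive Lebesgue measure. On $S$ the measures $\mu^{ac}$ and Lebesgue measure are mutually absolutely continuous, so every $\mu$-null subset of $S$ is also Lebesgue-null.

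For the generalized eigenfunctions, the Berezansky/Gel'fand--Kostyuchenko construction underlying Theorem \ref{thm:Sch}(2) produces, for $\mu$-a.e.\ $E$, a pb-generalized eigenfunction $\psi_E$ with $|\psi_E(n)| \leq C_E \langle n\rangle^{d/2+\eta}$ for any prescribed $\eta > 0$. Working inside the cyclic subspace of $\delta_{n_0}$, where the spectral theorem identifies $\delta_{n_0}$ with the constant function $1$ in $L^2(\RR,d\mu)$, one can normalize $\psi_E$ so that $\psi_E(n_0) = 1$ for $\mu$-a.e.\ $E$. Removing the resulting $\mu$-null exceptional set from $S$ (which, by the previous paragraph, is Lebesgue-null) yields a set $I \subset S$ of positive Lebesgue measure on which both the polynomial bound and $\psi_E(n_0) = 1$ hold simultaneously.

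Finally, for the verification step, given $q > 2d$ choose $\eta := (q - 2d)/4 > 0$, so that $q - d - 2\eta > d$. Then for every $E \in I$ and every $L > 0$,
\begin{align}
\sum_{n \in \Lambda_L(n_0)} \frac{|\psi_E(n)|^{2}}{\langle n\rangle^{q}} ~\leq~ C_E^{2} \sum_{n \in \ZZ^d} \langle n\rangle^{d + 2\eta - q} ~=:~ A_E ~<~ \infty.
\end{align}
This is exactly hypothesis \eqref{phibehav} with $\nu = 0$, so Theorem \ref{thm:main2}, applied with any $\alpha > 1$, gives $\mm_{q;n_0}(T) \geq C T^{1 - \alpha \cdot 0} = C T$, which is the desired bound. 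The only delicate point is realizing the polynomial bound and the nonvanishing $\psi_E(n_0) \neq 0$ on the \emph{same} set of positive Lebesgue measure; this is why the proof goes through the cyclic picture at $\delta_{n_0}$ rather than relying on Sch'nol's theorem in isolation.
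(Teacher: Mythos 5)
Your overall route is the same as the paper's: combine part 2 of Theorem \ref{thm:Sch} with Theorem \ref{thm:main2} applied to $\varphi(n)=\langle n\rangle^{q}$ and $\nu=0$. Your verification of the growth hypothesis is correct (with $\eta=(q-2d)/4$ one gets $\sum_{n}\langle n\rangle^{d+2\eta-q}<\infty$ precisely because $q>2d$), and your passage from $\mu$-a.e.\ statements to Lebesgue-a.e.\ statements on $S=\{f>0\}$, where $\mu^{ac}$ and Lebesgue measure are mutually absolutely continuous, is in fact more explicit than the paper's one-line proof, which makes that transfer without comment.

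The weak point is exactly the step you flag as delicate: the assertion that the Berezansky/Gel'fand--Kostyuchenko construction yields, for $\mu^{H}_{\delta_{n_0}}$-a.e.\ $E$, a polynomially bounded generalized eigenfunction that can be normalized so that $\psi_E(n_0)=1$. Theorem \ref{thm:Sch}(2), as stated and as used here, only provides the \emph{existence} of a pb-generalized eigenfunction for a.e.\ $E$ with respect to a spectral measure; it gives no control on the value of that eigenfunction at a prescribed site, and ``working inside the cyclic subspace of $\delta_{n_0}$'' does not by itself exclude that the eigenfunction supplied by the expansion vanishes at $n_0$ on a set of positive $\mu$-measure. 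The claim is true, but justifying it requires the full Parseval-type expansion identity (if every expansion eigenfunction vanished at $n_0$ for all $E$ in a set $B$ with $\mu^{H}_{\delta_{n_0}}(B)>0$, the spectral projection of $H$ onto $B$ would annihilate $\delta_{n_0}$, a contradiction), which is strictly more than what Theorem \ref{thm:Sch} records and which you do not supply. The paper avoids this entirely with the elementary pigeonhole Lemma \ref{lemma:non_zero1}: since each $\psi_E$ is nonzero somewhere, writing $I=\bigcup_{k}I_k$ with $I_k=\{E\in I:\psi_E(k)\neq 0\}$ produces some $n_0$ with $|I_{n_0}|>0$, and that $n_0$ (not necessarily the one you selected via $P^{ac}\delta_{n_0}\neq 0$) is the base point appearing in the conclusion. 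Replacing your cyclic-subspace normalization by an appeal to Lemma \ref{lemma:non_zero1} closes the gap with no other change to your argument.
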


\begin{proof}(Corollary \ref{cor:ac_sp1})
If $\sigma_{ac} (H) \neq \emptyset$, then there exists a set $I \subset  \sigma_{ac} (H)$ of positive Lebesgue measure. For Lebesgue almost every $E \in I$, there exists a pb-generalized eigenfunction with exponent $k= \frac{d}{2} + \epsilon$, for any $\epsilon > 0$, as stated in part 2 of Theorem \ref{thm:Sch}. The result then follows from Lemma \ref{lemma:non_zero1} and Theorem \ref{thm:main2}.
\end{proof}

This result does not seem to have appeared in the literature. 
%: The existence of absolutely continuous spectrum implies nontrivial transport. 
The proof of Theorem \ref{thm:main2} appears in sections \ref{sec:resolvents} and \ref{sec:green1}.

\medskip

Our main application of Theorem \ref{thm:main2}, which was also the motivation for this work, is to the family of $\Gamma$-trimmed \Schr operators. We consider a periodic subset $\Gamma \subset \ZZ^d$. The potential $V$ is supported on $\Gamma$. The cases of interest is when $\Gamma$ is a $d_1$-dimensional lattice in $\ZZ^d$ and $d_2 := d - d_1 \geq 2$. These are sparse potentials and one is interested in the stability of dynamical delocalization under various choices of $\Gamma$.  
The details of the families of $\Gamma$ for which we can prove dynamical delocalization are given in section \ref{sec:trimmed}. 

%Anderson models which we discuss in detail in section \ref{sec:trimmed}. 

\begin{thm}\label{thm:main_trimmed1}
A $\Gamma$-trimmed \Schr operator, as defined in section \ref{sec:trimmed}, with $d_2 \geq 2$. Then, the \Schr operator exhibits dynamical delocalization. This nontrival quantum transport is manifest in the lower bound  
\beq\label{eq:trim1}
\mm_{q, n_0}(T) \geq C T, ~~~~\forall q > d_1 + 1,
\eeq
for some $n_0 \in \Gamma^c$, and where $d_1$ is defined in Assumption \ref{assupm:trim2}.
\end{thm}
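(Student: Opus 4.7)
The plan is to invoke Theorem \ref{thm:main2} with the growth function $\varphi(n) = \langle n \rangle^q$ and decay exponent $\nu = 0$, which directly yields the desired lower bound $\mm_{q;n_0}(T) \geq CT$. The central observation is that any function $\psi$ vanishing identically on $\Gamma$ satisfies $V\psi \equiv 0$, hence is a generalized eigenfunction of $H_V$ if and only if it is one of the free Laplacian $H_0$. It therefore suffices to construct, for a set $I$ of energies of positive Lebesgue measure, a bounded generalized eigenfunction of $H_0$ which vanishes on $\Gamma$, is nonzero at some fixed $n_0 \in \Gamma^c$, and decays appropriately in the directions transverse to $\Gamma$.

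Using the periodicity of $\Gamma$ (and splitting off a Bloch--Floquet reduction if needed), I focus on the model case $\Gamma = \ZZ^{d_1} \times \{0\}^{d_2}$ with coordinates $n = (n_1, n_2) \in \ZZ^{d_1} \times \ZZ^{d_2}$, and look for $\psi_E$ in the separated form $\psi_E(n_1, n_2) = e^{i k_1 \cdot n_1} \, g(n_2)$, where $k_1 \in \TT^{d_1}$ is a fixed quasi-momentum parallel to $\Gamma$ and $g : \ZZ^{d_2} \to \CC$ is a bounded generalized eigenfunction of the $d_2$-dimensional free Laplacian at energy $\lambda = E - 2\sum_{i=1}^{d_1} \cos k_{1,i}$ satisfying $g(0) = 0$. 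The natural choice for $g$ is a sine-type superposition on the Fermi surface $\Sigma_\lambda = \{k_2 \in \TT^{d_2} : 2\sum_j \cos k_{2,j} = \lambda\}$,
\begin{equation*}
g(n_2) = \int_{\Sigma_\lambda} \sin(k_2 \cdot n_2) \, a(k_2) \, d\sigma(k_2),
\end{equation*}
where the smooth density $a$ is chosen so that $g(n_{0,2}) \neq 0$ at a preselected $n_{0,2} \neq 0$. Since $d_2 \geq 2$, $\Sigma_\lambda$ is a positive-dimensional submanifold of $\TT^{d_2}$ with non-degenerate Gaussian curvature on a subinterval of $(-2d_2, 2d_2)$, and a standard stationary-phase estimate then yields $|g(n_2)| \leq C_E \langle n_2 \rangle^{-(d_2-1)/2}$.

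Since $|\psi_E(n)|^2 = |g(n_2)|^2$ is independent of $n_1$, straightforward bookkeeping --- separating $\langle n \rangle \asymp \max(\langle n_1 \rangle, \langle n_2 \rangle)$ and evaluating the $n_1$-sum, which for $q > d_1$ contributes a factor $\lesssim \langle n_2 \rangle^{d_1 - q}$ --- gives
\begin{equation*}
\sum_{n \in \Lambda_L(n_0)} \frac{|\psi_E(n)|^2}{\langle n \rangle^q} \; \lesssim \; \sum_{n_2 \in \ZZ^{d_2}} \langle n_2 \rangle^{d_1 - q - (d_2-1)},
\end{equation*}
which is finite uniformly in $L$ precisely when $q > d_1 + 1$. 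The growth hypothesis \eqref{phibehav} of Theorem \ref{thm:main2} is therefore satisfied with $\nu = 0$ for every admissible $E$, and the conclusion $\mm_{q;n_0}(T) \geq CT$ follows at once.

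The main technical obstacle is producing the transverse eigenfunction $g$ with the claimed decay, the non-vanishing at $n_{0,2}$, and for a set of energies $E$ of positive Lebesgue measure. This is precisely where the hypothesis $d_2 \geq 2$ is decisive: it guarantees that $\Sigma_\lambda$ is a positive-dimensional hypersurface with enough curvature to produce the $(d_2-1)/2$ stationary-phase gain, and without it the sine superposition either degenerates or yields no transverse decay. Handling a general periodic $\Gamma$ requires replacing the separated ansatz by a Bloch--Floquet reduction along the periodicity lattice of $\Gamma$, but the transverse problem remains $d_2$-dimensional and the argument carries through unchanged.
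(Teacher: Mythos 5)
Your overall strategy --- kill the potential by making the generalized eigenfunction vanish on $\Gamma$, so that it suffices to build bounded generalized eigenfunctions of $H_0$ with enough transverse summability to verify \eqref{phibehav} with $\nu=0$, and then invoke Theorem \ref{thm:main2} --- is exactly the paper's strategy, and your bookkeeping even reproduces the correct threshold $q>d_1+1$. The gap is that your construction is carried out for the wrong set $\Gamma$. Theorem \ref{thm:main_trimmed1} refers to Assumption \ref{assupm:trim2}, where $\Gamma=W_{\rho}^{d_1}\times\ZZ^{d_2}$ is a union of hyperplane grids in the first $d_1$ coordinates crossed with \emph{all} of $\ZZ^{d_2}$; its complement is a union of infinite tubes (waveguides). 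This is not the $d_1$-dimensional plane $\ZZ^{d_1}\times\{0\}^{d_2}$ you reduce to, and no Bloch--Floquet reduction takes one to the other: your model set is not even periodic in the $n_2$-directions, while the actual $\Gamma$ contains every value of $n_2$ whenever $n_1\in W_{\rho}^{d_1}$. Consequently, imposing $g(0)=0$ on the transverse factor does nothing to make $V\psi\equiv 0$ for the true model, and your longitudinal plane wave $e^{ik_1\cdot n_1}$, which never vanishes, cannot be used at all: the vanishing must be enforced in the $n_1$-variables. That is precisely what the paper does (Proposition \ref{prop:trimmed}): the longitudinal factor is the sine product $\prod_{i}\sin(\tfrac{\pi}{2\rho_i}k_in_i)$, which vanishes on the grid $W_{\rho}^{d_1}$ and quantizes the longitudinal energy to the discrete set $\Ss_0$, and the positive-measure energy set $I$ then comes from varying the \emph{transverse} energy $e\in(-2d_2,2d_2)$, not from a continuum of longitudinal quasi-momenta. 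As written, your argument does not produce generalized eigenfunctions of the $\Gamma$-trimmed operator of the theorem, so the application of Theorem \ref{thm:main2} has no input.

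For the transverse factor your stationary-phase construction is a viable (if heavier) substitute for what the paper actually uses: Lemma \ref{lem:free} builds, for any $e\in(-2d_2,2d_2)$, a generalized eigenfunction of the free Laplacian on $\ZZ^{d_2}$ that is $\ell^2$ in $d_2-1$ directions uniformly in the remaining one, by taking the Fourier transform of a characteristic function of a region of the torus and applying Plancherel --- no curvature hypotheses, no restriction to energies where the discrete Fermi surface is nondegenerate, and it is exactly here that $d_2\geq 2$ enters. Your surface-measure ansatz would instead require you to verify nonvanishing Gaussian curvature of $\Sigma_\lambda$ on the support of the cutoff and uniformity of the $\langle n_2\rangle^{-(d_2-1)/2}$ decay over lattice directions, and even then it only yields the same threshold $q>d_1+1$. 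So the fix is: keep your step 1 and your counting, replace the longitudinal plane wave by the sine product vanishing on $W_{\rho}^{d_1}$, and replace (or justify) the transverse construction along the lines of Lemma \ref{lem:free}; the fixed base point $n_0\in\Gamma^c$ and the uniformity of the constants over $E$ are then handled, as in the paper, by Lemma \ref{lemma:non_zero1} and Lemma \ref{lemma:uniform1} rather than by choosing $n_{0,2}$ energy by energy.
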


$\Gamma$-trimmed Anderson models have been recently studied and various authors have proved intervals in which they exhibit dynamical localization almost surely, see \cite{EK,ES,CR}. We prove that for certain $\Gamma$-trimmed Anderson models, there is also dynamical delocalization.

%The proof of Theorem \ref{thm:main} appears in sections \ref{sec:resolvents} and \ref{sec:green1}.

%%%%%%%%%%%%%%%%%%%%%%%%%%%%%%%%%%%%%%%%%%%%%%%%%%%%%%%%%%%%%%%%%%%%%%%%%%%%

\section{Transport and Resolvents}\label{sec:resolvents}

We shall prove Theorem \ref{thm:main2} via estimates on resolvents $R_H(z) := (H-z)^{-1}$, for $z \not\in \sigma(H)$,  presented in this and the following section \ref{sec:green1}. We conclude this section with some estimates on generalized eigenfunctions. 
The proof of Theorem \ref{thm:main2} is presented at the end of section \ref{sec:green1}. We begin with the following theorem that will allow us to convert estimates on matrix elements of time evolved states to estimates on matrix elements of the resolvent as in Theorem \ref{thm:main2}.

\begin{thm}\label{thm:timeres} For $T>0$ and $\varphi $ as in \eqref{assphi} we have
\begin{align}\label{eq:mm0}
   \mm_{\varphi;n_0}(T)~=~\frac{1}{2\pi T}\,\int_{-\infty}^{\infty}\,\sum_{n\in\ZZ^{d}}\,\varphi(n)\,\Big|\big(H-E-\frac{1}{2T}i\big)^{-1}(n_0,n)\Big|^{2}\,dE
\end{align}
\end{thm}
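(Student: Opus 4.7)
The plan is to view the right-hand side as arising from Plancherel applied, for each site $n$, to the time-truncated function
\begin{align*}
g_n(t) := \chi_{[0,\infty)}(t)\,e^{-t/(2T)}\,\langle \delta_n, e^{-iHt}\delta_{n_0}\rangle,
\end{align*}
and the left-hand side as the sum over $n$ (weighted by $\varphi(n)$) of $\int |g_n(t)|^2\,dt$. So the two ingredients are (i) the identification of $\hat g_n$ with a resolvent matrix element and (ii) interchange of summation and integration.

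First, I would expand the matrix element defining $\mm_{\varphi;n_0}(T)$ using the completeness of $\{\delta_n\}_{n\in\ZZ^d}$ and the fact that $\varphi(X)$ is multiplication by $\varphi(n)$:
\begin{align*}
\langle \delta_{n_0}, e^{iHt}\varphi(X) e^{-iHt}\delta_{n_0}\rangle
= \sum_{n\in\ZZ^d} \varphi(n)\,|\langle \delta_n, e^{-iHt}\delta_{n_0}\rangle|^2.
\end{align*}
Then, because $\varphi\ge 0$ and the summand is nonnegative, Tonelli's theorem lets me swap the sum and the $t$-integral in the definition of $\mm_{\varphi;n_0}(T)$, giving
\begin{align*}
\mm_{\varphi;n_0}(T) = \frac{1}{T}\sum_n \varphi(n)\int_0^\infty \!\!e^{-t/T}\,|\langle \delta_n, e^{-iHt}\delta_{n_0}\rangle|^2\,dt
= \frac{1}{T}\sum_n \varphi(n)\,\|g_n\|_{L^2(\RR)}^2.
\end{align*}

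Next, I compute the Fourier transform of $g_n$. Because $H$ is bounded and self-adjoint, the operator-valued integral converges in norm and the spectral theorem gives
\begin{align*}
\hat g_n(E) = \int_0^\infty e^{iEt} e^{-t/(2T)} \langle \delta_n, e^{-iHt}\delta_{n_0}\rangle\,dt
= -i\,\langle \delta_n, (H - E - \tfrac{i}{2T})^{-1}\delta_{n_0}\rangle,
\end{align*}
where I used the elementary identity $\int_0^\infty e^{-t(\frac{1}{2T}+i(H-E))}\,dt=(\tfrac{1}{2T}+i(H-E))^{-1}=-i(H-E-\tfrac{i}{2T})^{-1}$, valid since $H-E-\tfrac{i}{2T}$ is boundedly invertible. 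Applying Plancherel's theorem to each $g_n\in L^2(\RR)$ yields
\begin{align*}
\|g_n\|_{L^2}^2 = \frac{1}{2\pi}\int_{-\infty}^\infty \big|(H-E-\tfrac{i}{2T})^{-1}(n,n_0)\big|^2\,dE.
\end{align*}
Summing against $\varphi(n)/T$, interchanging the sum and the $E$-integral again by Tonelli, and using that $H$ is real symmetric in the standard basis so that $(H-z)^{-1}(n,n_0)=(H-z)^{-1}(n_0,n)$ for every $z\notin\sigma(H)$, produces the claimed identity \eqref{eq:mm0}.

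The main obstacle is ensuring that the formal manipulations are justified when either side can a priori be $+\infty$. Since all integrands are nonnegative, Tonelli rescues the Fubini steps unconditionally, and the identity therefore holds in $[0,+\infty]$. The only nontrivial point is the Fourier-side evaluation, which is clean because $H$ is bounded and the weight $e^{-t/(2T)}$ guarantees absolute convergence of the time integral in operator norm; consequently $\hat g_n$ is a genuine resolvent matrix element rather than merely a boundary value. The bound \eqref{assphi} on $\varphi$ plays no role in deriving the identity itself and is needed only when one wants to conclude that both sides are in fact finite, which is separate from the identity asserted by the theorem.
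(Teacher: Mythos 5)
Your proof is correct and follows essentially the same route as the paper's: expand $\langle \delta_{n_0}, e^{iHt}\varphi(X)e^{-iHt}\delta_{n_0}\rangle$ via the resolution of the identity, recognize $-i(H-E-\tfrac{i}{2T})^{-1}(n,n_0)$ as the Fourier transform of the damped, time-truncated matrix element, and apply Plancherel. The only differences are that you make explicit two points the paper leaves implicit, namely the Tonelli justification for the interchanges and the symmetry $(H-z)^{-1}(n,n_0)=(H-z)^{-1}(n_0,n)$ used to match the kernel ordering in \eqref{eq:mm0}.
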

This result is cited and used in \cite[section 3]{JSB} and in \cite[Lemma 6.3]{GK-character}. We present a proof based similar to the one in \cite[Lemma 6.3]{GK-character}. 

\begin{proof}
We begin with an identity for any $T > 0$ and any self-adjoint operator $H$: 
\beq\label{eq:resolv1}
-i R_H(E + i \frac{1}{2T} ) = \int_0^\infty e^{-i t( H - i\frac{1}{2T} - E)} ~dt.
\eeq
The integral is absolutely convergent. It follows that 
\beq\label{eq:resolv2}
\langle \delta_n, R_H( E + i \frac{1}{2T}) \delta_{n_0} \rangle = i \int_0^\infty e^{itE} 
\langle \delta_n, e^{-it( H - i \frac{1}{2T})} \delta_{n_0} \rangle ~dt .
\eeq
Extending the $L^2$-function $g(t) := \langle \delta_n, e^{-it( H - i \frac{1}{2T})} \delta_{n_0} \rangle$ to be zero for $t < 0$, Plancherel's Theorem gives
\beq\label{eq:planch1}
\frac{1}{2 \pi} \int_{-\infty}^\infty |\langle \delta_n, R_H( E + i \frac{1}{2T}) \delta_{n_0} \rangle  |^2 ~dE
= \int_0^\infty e^{- \frac{t}{T}} | \langle \delta_n, e^{-it H} \delta_{n_0} \rangle |^2 ~dt .
\eeq
Returning to the definition of $\mm_{\varphi}(T)$ in \eqref{defmm}, we use the resolution of the identity $1= \sum_{n \in \ZZ^d} P_n$, where $P_n f(k) = \delta_{nk} f(k)$, and obtain 
\beq\label{eq:mm2}
\mm_{\varphi;n_0}(T) = \frac{1}{T} \int_0^\infty \sum_{n \in \ZZ^d} \varphi(n) 
 | \langle \delta_n, e^{-it( H - i \frac{1}{2T})} \delta_{n_0} \rangle |^2 ~dt .
\eeq
The identity \eqref{eq:planch1} allows us to conclude that 
\beq\label{eq:mm_final1}
\mm_{\varphi;n_0}(T) = \frac{1}{2 \pi T}  \int_{-\infty}^\infty
 \sum_{n \in \ZZ^d} \varphi (n)  |\langle \delta_n, R_H( E + i \frac{1}{2T}) \delta_{n_0} \rangle  |^2 ~dE,
\eeq
proving the identity \eqref{eq:mm0}.
%\int_0^\infty \sum_{n \in \ZZ^d} \varphi(n) 
 %| \langle \delta_n, e^{-it( H - i \frac{1}{T})} \delta_0 \rangle |^2 ~dt .
%\eeq
\end{proof}`

%\begin{rem}\label{rmk:general_moment1}
%We mention for use in the proof of Theorem \ref{thm:main}  that formula \eqref{eq:mm0} holds for any other point $n_0 \in \ZZ^d$. That is, we can define the $\mathcal{M}_{\varphi;n_0}(T)$ by replacing $\delta_0$ in \eqref{defmm} with $\delta_{n_0}$:
%\begin{align}\label{defmm_n0}
%   \mm_{\varphi;n_0}(T)~ :=~\ov{T}\int_{0}^{\infty} e^{-\frac{t}{T}}\,\langle \delta_{n_0},e^{iHt}\varphi(X) e^{-iHt}\, \delta_{n_0}\rangle\;dt
%\end{align}
%In analogy with \eqref{eq:mm0} we have
%\begin{align}\label{eq:mm_n0}
%   \mm_{\varphi;n_0}(T)~=~\frac{1}{2\pi T}\,\int_{-\infty}^{\infty}\,\sum_{n\in\ZZ^{d}}\,\varphi(n)\,\Big|\big(H-E-\frac{1}{2T}i\big)^{-1}(n_0,n)\Big|^{2}\,dE
%\end{align}
%\end{rem}

%Since we could not find a proof in the, literature, we give the proof in Appendix %\ref{app:dynamics1}.

For later reference, we state the well known Combes-Thomas estimate in a form that we are going to use frequently in this paper.

\begin{thm}[Combes-Thomas estimate]\label{thm:CombesT} %\qquad\\
Suppose $z\notin\sigma(H) $ and set $\delta:=\dist\left(z,\sigma(H)\right) $.
Then for any $\alpha>0 $ there is a constant $c>0$ such that for all $\delta\leq\alpha $
and all $n,m\in\ZZ^{d} $
   \begin{align}\label{eq:ct1}
      |(H-z)^{-1}(n,m)|~\leq~\frac{2}{\delta}\,e^{-c\delta |n-m|}
   \end{align}
\end{thm}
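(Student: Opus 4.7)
My plan is to prove the Combes-Thomas estimate by the classical analytic conjugation argument. The idea is to conjugate $H$ by the multiplication operator $M_\eta u(n) := \e^{\eta\cdot n} u(n)$ for a parameter vector $\eta\in\RR^d$, so that the conjugated operator $H_\eta := M_\eta H M_\eta^{-1}$ differs from $H$ by a norm-small perturbation, while the matrix element $(H-z)^{-1}(n,m)$ acquires an exponential factor which, for a clever choice of $\eta$, yields the desired decay.

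The first step is to compute $H_\eta$ explicitly. Since $V$ commutes with $M_\eta$, only the Laplacian is affected:
\[
H_\eta = H + K_\eta, \qquad K_\eta = \sum_{|j|=1}\bigl(\e^{\eta\cdot j}-1\bigr)\,S_j ,
\]
where $S_j$ denotes the shift $S_j u(n) = u(n+j)$. Each shift is unitary and $|\e^{\eta\cdot j}-1|\leq |\eta\cdot j|\,\e^{|\eta\cdot j|}$ for real $\eta$, so a direct estimate gives $\|K_\eta\|\leq C_d\,|\eta|$ whenever $|\eta|\leq 1$, with $C_d$ depending only on the dimension. Hence $H_\eta$ is a bounded, non-self-adjoint perturbation of $H$ whose norm is controlled by $|\eta|$.

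Next, I would set up a Neumann series to invert $H_\eta - z$. Factoring
\[
H_\eta - z = (H-z)\bigl(I + (H-z)^{-1} K_\eta\bigr)
\]
and using $\|(H-z)^{-1}\|\leq 1/\delta$, the bracket is invertible as soon as $C_d\,|\eta|/\delta\leq 1/2$. I choose $|\eta| = c\delta$ with $c = c(\alpha) > 0$ small enough that both $c\alpha\leq 1$ (so the linearization of $K_\eta$ is valid for all $\delta\leq\alpha$) and $C_d\,c \leq 1/2$ hold; this yields $\|(H_\eta-z)^{-1}\|\leq 2/\delta$.

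Finally, to convert this operator bound into exponential decay of matrix elements, I use the identity
\[
(H-z)^{-1}(n,m) = \e^{-\eta\cdot(n-m)}\,(H_\eta-z)^{-1}(n,m),
\]
which follows from $M_\eta\delta_m = \e^{\eta\cdot m}\delta_m$ and $M_\eta^{*}=M_\eta$ for real $\eta$. To maximize the decay in the sup norm $|\cdot|$, I pick an index $\nu^{*}$ maximizing $|n_{\nu^{*}}-m_{\nu^{*}}|=|n-m|$ and set $\eta = c\delta\,\mathrm{sgn}(n_{\nu^{*}}-m_{\nu^{*}})\,e_{\nu^{*}}$, so that $\eta\cdot(n-m)=c\delta\,|n-m|$ while $|\eta|=c\delta$. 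Combining with the resolvent bound delivers
\[
|(H-z)^{-1}(n,m)|\leq \frac{2}{\delta}\,\e^{-c\delta|n-m|},
\]
which is the claimed estimate \eqref{eq:ct1}. The main subtle point, rather than a genuine obstacle, is that $c$ must be tuned to handle simultaneously the linearization of $K_\eta$ and the convergence of the Neumann series uniformly in $\delta\in(0,\alpha]$, which is precisely why the exponential rate takes the form $c\delta$ with $c$ depending on $\alpha$.
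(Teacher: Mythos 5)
Your argument is correct and is essentially the standard Combes--Thomas proof that the paper itself does not reproduce but delegates to \cite{Aizenman,AizenmanW,Invitation}: conjugation by $\e^{\eta\cdot n}$, the bound $\|K_\eta\|\leq C_d|\eta|$, a Neumann series giving $\|(H_\eta-z)^{-1}\|\leq 2/\delta$ when $|\eta|=c\delta$ with $c=c(\alpha,d)$ small, and the choice $\eta=c\delta\,\mathrm{sgn}(n_{\nu^*}-m_{\nu^*})e_{\nu^*}$, which is exactly the scheme behind the constant $c=\tfrac{1}{12d\alpha}$ quoted in the paper's remark. Only cosmetic points remain: the coefficient in $K_\eta$ should be $\e^{-\eta\cdot j}-1$ with your conventions (harmless, since only $\|K_\eta\|$ and the sign of $\eta\cdot(n-m)$ matter), and the identity $(H-z)^{-1}(n,m)=\e^{-\eta\cdot(n-m)}(H_\eta-z)^{-1}(n,m)$ is cleanest to justify by analyticity in $\eta$ (it holds for purely imaginary $\eta$ where $M_\eta$ is unitary, and both sides are analytic), since $M_\eta$ is unbounded.
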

A proof can be found in \cite{Aizenman} (see also \cite{AizenmanW}). Those works consider operators $H=H_{0}+V$ with rather general $H_{0}$. For the special case of the discrete Schr\"{o}dinger operator, a proof can be found in the review article \cite{Invitation} that follows the idea of \cite{Aizenman}.

\begin{rem}
For $\delta\leq 1 $ one may choose $c=\frac{1}{12d}$ (see \cite{Invitation}). The proof given there allows $c=\frac{1}{12d\,\alpha}$ for $\alpha\geq 1$.
\end{rem}

As a consequence of our proofs, we also get the following estimate on generalized eigenfunctions of the Schr\"{o}dinger operator \emph{outside} the spectrum.
We recall that for any $n_0 \in \ZZ^d$, we define $\overline{\partial}\Lambda_{L}(n_{0}) := \partial \Lambda_L(n_0) \cup {\partial} \Lambda_{L+1} (n_0)$.

\begin{thm}\label{thm:eigenoutside}
If $\psi $ is a generalized eigenfunction of $H$ with generalized eigenvalue $E\notin\sigma(H)$,  
% solution to
%\begin{align}
%   H\psi = E\psi
%\end{align}
then for any $L$ we have    %\wk
\begin{align}\label{eq:bound1}
   \sum_{n\in \overline{\partial}\Lambda_{L}(n_{0})}
   |\psi(n)|~\geq~\frac{\delta}{2 d}\,e^{c\delta L}\;|\psi(n_{0})|
\end{align}
where $\delta=\dist\left(E,\sigma(H)\right) $ and $c$ is as in Theorem \ref{thm:CombesT}\,.
%then for any $L$ we have
%\begin{align}\label{eq:bound1}
%   \sum_{n\in B_L(n_0)}
%   % \overline{\partial}\Lambda_{L}(n_{0})} 
%   |\psi(n)|~\geq~\frac{\delta\,e^{-c\delta}}{2 C}\,e^{c\delta L}\;|\psi(n_{0})|
%\end{align}
%where $\delta=\dist\left(E,\sigma(H)\right) $ and $c$ is as in Theorem \ref{thm:CombesT}\,.
\end{thm}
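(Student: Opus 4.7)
The plan is to combine a localized version of the geometric resolvent identity with the Combes-Thomas estimate of Theorem~\ref{thm:CombesT}. Let $\chi_L$ denote the characteristic function of the cube $\Lambda_L(n_0)$, viewed as a multiplication operator on $\ell^2(\ZZ^d)$. Because $V$ is a multiplication operator it commutes with $\chi_L$, so the eigenfunction equation $H\psi = E\psi$ yields
\[
(H-E)(\chi_L\psi) = [H_0,\chi_L]\psi.
\]
A direct computation shows that the right-hand side is supported on $\overline{\partial}\Lambda_L(n_0)$: for $m\in\partial\Lambda_L(n_0)$ it equals $-\sum_{|j|=1,\,m+j\notin\Lambda_L(n_0)}\psi(m+j)$, and for $m\in\partial\Lambda_{L+1}(n_0)$ it equals $\sum_{|j|=1,\,m+j\in\Lambda_L(n_0)}\psi(m+j)$.

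Both $\chi_L\psi$ and $[H_0,\chi_L]\psi$ are finitely supported and therefore lie in $\ell^2(\ZZ^d)$, so since $E\notin\sigma(H)$ I may invert to obtain
\[
\chi_L\psi = (H-E)^{-1}[H_0,\chi_L]\psi.
\]
Evaluating this identity at $n_0$, expanding over the support of the commutator, and invoking Theorem~\ref{thm:CombesT} together with $|n_0-m|\geq L$ for every $m\in\overline{\partial}\Lambda_L(n_0)$, I get
\[
|\psi(n_0)| \;\leq\; \frac{2}{\delta}\,e^{-c\delta L}\sum_{m\in\overline{\partial}\Lambda_L(n_0)}\bigl|[H_0,\chi_L]\psi(m)\bigr|.
\]

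The remaining task is to bound the sum of commutator values by a multiple of $\sum_{n\in\overline{\partial}\Lambda_L(n_0)}|\psi(n)|$. Rewriting the sum over nearest-neighbor edges straddling $\partial\Lambda_L(n_0)$, each such edge $(m,m+j)$ contributes $|\psi(m+j)|$ to the summand at $m$ and $|\psi(m)|$ to the summand at $m+j$. A short edge-counting argument shows that any vertex of $\overline{\partial}\Lambda_L(n_0)$ lies on at most $d$ boundary-crossing edges (the extremal case being a corner of the cube), which gives
\[
\sum_{m\in\overline{\partial}\Lambda_L(n_0)}\bigl|[H_0,\chi_L]\psi(m)\bigr| \;\leq\; d\sum_{n\in\overline{\partial}\Lambda_L(n_0)}|\psi(n)|.
\]
Substituting this into the previous display and rearranging yields the claimed estimate with constant $\delta/(2d)$.

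The main conceptual point is the very first step: replacing the possibly non-$\ell^2$ generalized eigenfunction $\psi$ with the finitely supported function $\chi_L\psi$, so that the equation becomes an $\ell^2$ identity to which Combes-Thomas applies. The rest is bookkeeping; the only subtlety is to carry out the edge count sharply enough to produce the constant $d$ in the denominator rather than a crude $2d$.
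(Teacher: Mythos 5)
Your proof is correct and follows essentially the same route as the paper: the identity $\chi_L\psi=(H-E)^{-1}[H,\chi_L]\psi$ is exactly Proposition~\ref{prop:basic} specialized to $z=E$ (where the $(E-z)$ term drops out), after which you apply the Combes--Thomas bound with $|n_0-m|\geq L$ and the same boundary count yielding the factor $d$. No gaps; the argument matches the paper's proof of Theorem~\ref{thm:eigenoutside} step for step.
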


There is an immediate corollary to the above result showing that generalized solutions to the \Schr equation must grow exponentially in some directions. 

\begin{cor}\label{cor:eigenoutside}
%   If $E\notin\sigma(H) $ and $\psi $ is a solution to
%$   H\psi = E\psi$ 
If $\psi $ is a generalized eigenfunction of $H$ with generalized eigenvalue $E\notin\sigma(H)$,  then there are constants $c_{1},c_{2}>0$ and a sequence $ n_{j}\to\infty$ such that
\begin{align}
   |\psi(n_{j})|~\geq~c_{1}\,e^{c_{2}|n_{j}|}\qquad\text{for all } j\in\NN
\end{align}
The constants $c_{1},c_{2}$ depend only on $\dist(E,\sigma(H) $ and the $n_{j}$ can be chosen such that $|n_{j}-j|\leq 1$.
\end{cor}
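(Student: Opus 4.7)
The plan is to derive the corollary from Theorem \ref{thm:eigenoutside} by a simple pigeonhole argument on the enlarged boundary, combined with the observation that any exponential beats a polynomial.

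First, since $\psi$ is a nontrivial generalized eigenfunction, there exists a base point $n_0 \in \ZZ^d$ with $\psi(n_0) \neq 0$. I apply Theorem \ref{thm:eigenoutside} with this $n_0$ and every integer $L \geq 1$. The enlarged boundary $\overline{\partial}\Lambda_L(n_0) = \partial\Lambda_L(n_0)\cup\partial\Lambda_{L+1}(n_0)$ has cardinality at most $K_d\,L^{d-1}$ for some constant $K_d$ depending only on $d$ (this follows from $|\partial\Lambda_L(n_0)|\le 2d(2L+1)^{d-1}$). Therefore, choosing $n_L\in\overline{\partial}\Lambda_L(n_0)$ to maximize $|\psi|$, the pigeonhole principle combined with Theorem \ref{thm:eigenoutside} gives
\begin{align*}
   |\psi(n_L)|\;\geq\;\frac{1}{|\overline{\partial}\Lambda_L(n_0)|}\sum_{n\in\overline{\partial}\Lambda_L(n_0)}|\psi(n)|\;\geq\;\frac{\delta\,|\psi(n_0)|}{2d\,K_d\,L^{d-1}}\,e^{c\delta L}.
\end{align*}

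Next, I absorb the polynomial prefactor into the exponential. Fixing any $c_2$ with $0<c_2<c\delta$, the ratio $e^{(c\delta-c_2)L}/L^{d-1}$ tends to infinity, so for all $L$ greater than some $L_\ast$ depending only on $d$ and $\delta$, the estimate reduces to $|\psi(n_L)|\geq c_1 e^{c_2 L}$, where the constant $c_1$ depends only on $\delta$ (and $|\psi(n_0)|$, which can be absorbed into $c_1$). Since $|n_L-n_0|\in\{L,L+1\}$, the triangle inequality yields $L-|n_0|\leq |n_L|\leq L+1+|n_0|$, which guarantees $|n_L|\to\infty$ and also $|\psi(n_L)|\geq c_1' e^{c_2 |n_L|}$ after absorbing the fixed constant $e^{c_2(|n_0|+1)}$ into the prefactor.

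Finally, to arrange the indexing so that $||n_j|-j|\leq 1$, I note that because the norm $|\,\cdot\,|$ here is the $\ell^\infty$ norm, the values taken by $|n_L|$ as $L$ runs over the integers cover every sufficiently large positive integer $j$: indeed, for any large $j$ one can take $L=j$ (or $L=j\pm 1$, if a translation back from the shifted frame centered at $n_0$ is needed; equivalently, one may reduce to the case $n_0=0$ by translating the lattice, since $H_0$ is translation-invariant and the potential can be relabeled without changing the spectrum). Choosing $n_j$ to be the pigeonhole point from the enlarged boundary of the appropriate cube then yields $||n_j|-j|\leq 1$ while preserving the exponential bound.

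The main obstacle is essentially bookkeeping: there is no hard analytic step beyond Theorem \ref{thm:eigenoutside} itself, so the only subtlety is verifying that the polynomial size of the enlarged boundary genuinely gets absorbed at the cost of a slightly smaller exponent $c_2<c\delta$, and that the indexing can be matched to the target $||n_j|-j|\leq 1$ by a translation of the base point.
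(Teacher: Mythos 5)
Your proof is correct and is essentially the paper's intended argument (the paper offers no separate proof, calling the corollary immediate from Theorem \ref{thm:eigenoutside}): apply the theorem on each enlarged boundary centered at a point $n_0$ with $\psi(n_0)\neq 0$, use the pigeonhole bound $|\overline{\partial}\Lambda_L(n_0)|\leq K_d L^{d-1}$, and trade the polynomial factor for a slightly smaller exponent $c_2<c\delta$, absorbing $|\psi(n_0)|$ and $e^{c_2(|n_0|+1)}$ into $c_1$. The only wrinkle is your closing translation remark: shifting so that $n_0=0$ proves the statement for the translated operator and eigenfunction, and shifting back yields points with $\lvert\,|n_j-n_0|-j\,\rvert\leq 1$ rather than $\lvert\,|n_j|-j\,\rvert\leq 1$ measured from the origin; since the corollary's indexing claim is itself stated loosely and Theorem \ref{thm:eigenoutside} is centered at $n_0$, the shell-by-shell choice your pigeonhole construction provides is exactly what is meant, so nothing essential is lost.
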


%%%%%%%%%%%%%%%%%%%%%%%%%%%%%%%%%%%%%%%%%%%%%%%%%%%%%%%%%%%%%%%%%%%%%%%%%%%%%%%
%%%%%%%%%%%%%%%%%%%%%%%%%%%%%%%%%%%%%

\section{Estimates of the Green's function}\label{sec:green1}

In this section, we derive basic estimates on the Green's function of a lattice \Schr operator. We begin with a lemma that is an analogue to the Leibnitz rule in a form which will be convenient for this work. The proof is an elementary calculation.

\begin{lemma}\label{lem:prodrule} For any functions $f,u:\ZZ^{d}\to\CC$
\begin{align}
   H(fu)(n)~=~ f(n)\,Hu(n)~+~\sum_{|j|=1}\big(f(n+j)-f(n)\big)\,u(n+j)
   \end{align}
\end{lemma}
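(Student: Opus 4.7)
The plan is to split $H = H_0 + V$ and observe that the multiplication operator $V$ commutes with multiplication by $f$, so the potential contribution cancels out on both sides of the identity. All the work therefore sits in the discrete Laplacian $H_0$.

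First I would expand the left-hand side directly from the definition of $H_0$:
\begin{align*}
H(fu)(n) = H_0(fu)(n) + V(n)f(n)u(n) = \sum_{|j|=1} f(n+j)\,u(n+j) + V(n)\,f(n)\,u(n).
\end{align*}
Next I would expand $f(n)\,Hu(n)$ in parallel:
\begin{align*}
f(n)\,Hu(n) = f(n)\sum_{|j|=1} u(n+j) + f(n)\,V(n)\,u(n).
\end{align*}

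Subtracting the second display from the first, the $V$-terms cancel identically, and I am left with
\begin{align*}
H(fu)(n) - f(n)\,Hu(n) = \sum_{|j|=1}\bigl(f(n+j) - f(n)\bigr)\,u(n+j),
\end{align*}
which is exactly the asserted Leibniz-type formula. There is no genuine obstacle here: the computation is a one-line rearrangement once one observes that $[V, f(X)] = 0$ for any pointwise multiplication operators $V$ and $f(X)$, so only the nearest-neighbor hopping in $H_0$ produces a nontrivial commutator term. I would simply present the three-line calculation above as the proof.
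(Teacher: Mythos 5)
Your computation is correct and is exactly the elementary calculation the paper has in mind (the paper omits the proof, noting only that it is an elementary calculation): the potential term cancels since $V$ and $f$ are both multiplication operators, and the hopping term of $H_0$ yields the difference $\sum_{|j|=1}\big(f(n+j)-f(n)\big)u(n+j)$. No gaps.
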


%The proof is an elementary calculation.

The following Proposition \ref{prop:basic} is inspired by Martinelli-Scoppola \cite{MartinelliS}.
We denote by $\chi_{L} $ the indicator function of the box $\Lambda_{L} $.
%The proof of this proposition follows from a simple computation using the form of $H$. 

\begin{prop}\label{prop:basic}
Suppose $\psi:\ZZ^{d}\to \CC $ is a generalized eigenfunction of $H$ with generalized eigenvalue $E$. 
If $z\notin\sigma(H)$, then for any $n\in\Lambda_{L} $
   \bea\label{eq:ef_formula1}
      \psi(n) &=  &(E-z)\,(H-z)^{-1}(\chi_{L}\psi)(n) + ((H-z)^{-1} [H, \chi_L] \psi )(n) \nonumber \\
       & = & (E-z)\,(H-z)^{-1}(\chi_{L}\psi)(n) + ((H-z)^{-1} R_{L, \psi})(n) , 
   \eea
   where the remainder term $R_{L, \psi }(m)$, $m \in \ZZ^d$, is given by 
      \bea\label{eq:rl1}
      R_{L,\psi}(m)  &=: &  ([H, \chi_L] \psi )(m) \nonumber \\ 
       %&=  & \sum_{k: |k| =1} ) ([H, \chi_L] \psi (k) \nonumber \\
       &=&  \sum_{k:|k|=1} [ \chi_{L}(m+k)-\chi_{L}(m) ]\,\psi(m+k) .
       \eea
 %The remainder term $R_{L, \psi}(m), m \in \ZZ^d$, is supported in the annular region 
 The remainder term $R_{L, \psi}$ is supported in the annular region $\overline{\partial}\Lambda_L :={\partial} \Lambda_L \cup {\partial \Lambda}_{L+1}$.  Moreover, we have the bound 
 \begin{align}
     \sum_{n\in\ZZ^{d}}|R_{L,\psi}(n)|~\leq~d\,\sum_{n\in\overline{\partial}\Lambda_{L}}|\psi(n)|
 \end{align}

%$B_L :={\partial} \Lambda_L \cup \widetilde{\partial \Lambda}_{L+1}$.      
%       Analyzing this sum, it is seen that it is nonzero only if $n \in \partial \Lambda_L \cup \overline{\partial} \Lambda_L^+$, and for such $m \in \ZZ^d$ we have  
% \begin{align}  
%&R_{L,\psi}(n)  ~=~\left\{
%     \begin{array}{ll}
%       \sum_{j\in\Lambda_{L-1}, |n-j|=1}\psi(j)-\sum_{j\in\Lambda_{L+1}, |n-j|=1}\psi(j), & \hbox{for $n\in\partial\Lambda_{L}$;} \\
%       0, & \hbox{otherwise.}
%     \end{array}
%   \right.
%   \end{align}
\end{prop}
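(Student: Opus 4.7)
\medskip

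\noindent\textbf{Proof plan for Proposition \ref{prop:basic}.} The identity \eqref{eq:ef_formula1} will come directly from the Leibniz-type identity of Lemma \ref{lem:prodrule} applied to $f=\chi_L$ and $u=\psi$. The plan is first to compute, using the lemma and the eigenvalue equation $H\psi=E\psi$,
\begin{align*}
H(\chi_{L}\psi)(m) \;=\; \chi_{L}(m)\,E\,\psi(m) \;+\; \sum_{|k|=1}\bigl(\chi_{L}(m+k)-\chi_{L}(m)\bigr)\psi(m+k),
\end{align*}
and to recognize the second sum as precisely $R_{L,\psi}(m)=[H,\chi_L]\psi(m)$. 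Subtracting $z\,\chi_{L}(m)\psi(m)$ from both sides yields the operator identity $(H-z)(\chi_L\psi)=(E-z)\chi_L\psi+R_{L,\psi}$. Since $z\notin\sigma(H)$ by hypothesis, the resolvent $(H-z)^{-1}$ is bounded on $\ell^2(\ZZ^d)$, and applying it to both sides produces $\chi_L\psi=(E-z)(H-z)^{-1}(\chi_L\psi)+(H-z)^{-1}R_{L,\psi}$. Finally, evaluating at $n\in\Lambda_{L}$, where $\chi_{L}(n)=1$, gives \eqref{eq:ef_formula1}. (A minor point to verify is that $\chi_L\psi\in\ell^1$ or at least lies in a suitable space on which $(H-z)^{-1}$ acts; since $\chi_L$ has finite support, $\chi_L\psi\in\ell^2$ automatically, which suffices. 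The term $R_{L,\psi}$ will also be shown to have finite support, so no integrability issue arises.)

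The support statement for $R_{L,\psi}$ is immediate from its definition: a term $\chi_{L}(m+k)-\chi_{L}(m)$ with $|k|=1$ can only be nonzero if either $m\in\Lambda_L$ with some nearest neighbor $m+k\notin\Lambda_L$, forcing $m\in\partial\Lambda_L$, or $m\notin\Lambda_L$ with some nearest neighbor $m+k\in\Lambda_L$, forcing $|m|=L+1$, i.e., $m\in\partial\Lambda_{L+1}$. Hence $\supp R_{L,\psi}\subset \overline{\partial}\Lambda_L$.

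For the $\ell^1$ bound, the plan is to apply the triangle inequality and then change variables. Writing
\begin{align*}
\sum_{m\in\ZZ^d}|R_{L,\psi}(m)| \;\leq\; \sum_{m}\sum_{|k|=1}\bigl|\chi_L(m+k)-\chi_L(m)\bigr|\,|\psi(m+k)|,
\end{align*}
substitute $n=m+k$ (so $k$ still ranges over $|k|=1$) to obtain
\begin{align*}
\sum_{m}|R_{L,\psi}(m)| \;\leq\; \sum_{n\in\overline{\partial}\Lambda_L}|\psi(n)|\cdot N(n),\qquad N(n):=\#\bigl\{k:|k|=1,\ \chi_L(n)\neq\chi_L(n-k)\bigr\}.
\end{align*}
The remaining task is to show $N(n)\leq d$ for each $n\in\overline{\partial}\Lambda_L$. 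If $n\in\partial\Lambda_L$, then $\chi_L(n)=1$ and $\chi_L(n-k)=0$ forces $|n-k|>L$, which, since $k$ has a single nonzero coordinate $\pm 1$, requires $k=-\operatorname{sign}(n_\nu)\,e_\nu$ for some $\nu$ with $|n_\nu|=L$; there are at most $d$ such $\nu$. A symmetric count handles $n\in\partial\Lambda_{L+1}$, where $k$ must equal $\operatorname{sign}(n_\nu)\,e_\nu$ for some $\nu$ with $|n_\nu|=L+1$. Combining gives the stated bound.

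The main obstacle is essentially bookkeeping: getting the sign conventions correct in the Leibniz step so that $[H,\chi_L]\psi$ appears with the right overall sign in \eqref{eq:ef_formula1}, and verifying the counting bound $N(n)\le d$ carefully under the max-norm convention for $|n|$. No deep estimate is needed; the whole argument is an algebraic manipulation plus a boundary-geometry count.
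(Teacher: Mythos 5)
Your proposal is correct and follows essentially the same route as the paper: apply Lemma \ref{lem:prodrule} with $f=\chi_L$, use $H\psi=E\psi$, invert $H-z$, restrict to $n\in\Lambda_L$, and then bound $\sum_m|R_{L,\psi}(m)|$ by noting each site of $\overline{\partial}\Lambda_L$ is hit by at most $d$ nearest-neighbor terms. The change-of-variables/counting formulation of the final estimate is just a slightly more explicit version of the paper's adjacency count, so there is nothing substantively different to flag.
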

   
\begin{proof}
The proof of expression \eqref{eq:ef_formula1} follows from a simple computation using the form of $H$ and Lemma \ref{lem:prodrule}. 
According to Lemma \ref{lem:prodrule} we have
\begin{align*}
   &(H-z)(\chi_{L}\psi)(m)\\~=~&\chi_{L}(m)(H-z)\psi(m)\,
   +\,\sum_{|k|=1}\big(\chi_{L}(m+k)-\chi_{L}(m)\big)\psi(m+k)\\
   =~&\chi_{L}(m)(E-z)\psi(m)\,+\,\sum_{|k|=1}\big(\chi_{L}(m+k)-\chi_{L}(m)\big)\psi(m+k)
\end{align*}
Using $z\notin\sigma(H)$, we get
\begin{align*}
   \chi_{L}\psi~=~(E-z)\big(H-z\big)^{-1}(\chi_{\Lambda}\psi)\,+\,\big(H-z\big)^{-1}R_{L,\psi},
\end{align*}
establishing \eqref{eq:ef_formula1}.
Analyzing $R_{\Lambda,\psi} $ we observe that $\chi_{L}(m+k)-\chi_{L}(m)$ vanishes for all $k$ with $|k|=1 $ unless $m\in\partial\Lambda_{L}\cup\partial\Lambda_{L+1}$. Hence $R_{L,\psi}(m)=0$ if $m\notin\overline{\partial}\Lambda_{L}$. 
Finally we estimate 
\begin{align}
   \sum_{n\in\ZZ^{d}}|R_{L,\psi}(n)|~&\leq~\sum_{n\in\partial\Lambda_{L}}\sum_{{|k|=1}\atop{|n+k|=L+1}} |\psi(n+k)|+\sum_{n\in\partial\Lambda_{L+1}}\sum_{{|k|=1}\atop{|n+k|=L}} |\psi(n+k)|\\
&\leq~d\;\sum_{n\in\overline{\partial}\Lambda_{L}}|\psi(n)|
\end{align}
In the last step we used that any $n\in\Lambda_{L}$ is adjacent to at most $d$ sites $m\in\Lambda_{L+1}$.
\end{proof}

\begin{proof}[Proof (Theorem \ref{thm:eigenoutside})]   %:\quad]\quad\\
Without loss of generality we may assume $n_{0}=0$. If $E\notin\sigma(H)$, then Proposition \ref{prop:basic} and Theorem \ref{thm:CombesT} give:
\begin{align}
   |\psi(0)|~&\leq~\sum_{m\in \overline{\partial}\Lambda_{L}} |(H-E)^{-1}(0,m)\,R_{L, \psi}(m)|\\
&\leq~\sum_{m\in \overline{\partial}\Lambda_{L}} \left( \frac{2}{\delta} \right) \,
e^{-c \delta |m|}\; |R_{L,\psi}(m)|  \nonumber \\
& \leq \left( \frac{2}{\delta} \right) e^{-c \delta L} \left(  \sum_{m\in \ZZ^{d}} |R_{L, \psi}(m)| \right)  \nonumber \\
&\leq \left(\frac{2 d }{\delta} \right)\,e^{-c \delta L}\; \left( \sum_{k\in \overline{\partial}\Lambda_L} |\psi(k)| \right) ,
\end{align}
%%%%%%%%%%%%%%%%%%%
% & = \left( \frac{2}{\delta} \right) \left[  \sum_{m\in \partial \Lambda_L} e^{-c \delta |m|} |\psi(k_0(m)| +  \sum_{m\in \widetilde{\partial \Lambda}_{L+1}}  e^{-c \delta |m|}   |\psi(k_+(m))| \right]   \nonumber \\
%&\leq \left(\frac{2 d}{\delta} \right)\,e^{-c \delta L}\; \left( \sum_{k\in \overline{\partial}\Lambda_{L}}} |\psi(k)| \right) , 
%\end{align}
%If $E\notin\sigma(H)$, then Proposition \ref{prop:basic} and Theorem \ref{thm:CombesT} give:
%\begin{align}
%   |\psi(0)|~&\leq~\sum_{n\in\overline{\partial}\Lambda_{L}} |(H-E)^{-1}(0,n)\,R_{L}(n)|\\
%&\leq~\sum_{n\in\overline{\partial}\Lambda_{L}} \frac{2C}{\delta}\,
%e^{-c \delta |n|}\;|\psi(n)|\\
%&\leq~\frac{2 C \,e^{c \delta }}{\delta}\,e^{-c \delta L}\; \left( \sum_{n\in\overline{\partial}\Lambda_{L}}|\psi(n)| \right) , 
%\end{align}
%where the constant $C$ may change line to line. 
This establishes the bound \eqref{eq:bound1}. 
\end{proof}

We turn to Theorem \ref{thm:main2}. This result follows from the following resolvent estimate via Theorem \ref{thm:timeres}.

\begin{thm}\label{thm:mainres}
Let $\psi_E$ be a generalized eigenfunction of $H$ with generalized eigenvalue $E\in\RR$, and suppose that  
%solution (not necessarily $\ell^{2}$) of
%\begin{align}
 %  H\psi~=~E\,\psi
%\end{align}
 $\psi_E(n_{0})\not=0 $.
We also suppose that $\psi_E$ satisfies a growth condition with respect to a function $\varphi$ satisfying \eqref{assphi}: 
   \begin{align}\label{phibehav2}
      \sum_{n\in\Lambda_{L}(n_{0})}\frac{|\psi_{E}(n)|^{2}}{\varphi(n)}~\leq~A_E   
      \, L^{\nu} \, ,
   \end{align}
   for some $\nu<1 $, a finite constant $A_E > 0$, and for all $L \gg 0$. Then for any $\alpha>1$, 
   there is a finite constant $C_E > 0$ such that
      \begin{align}\label{eq:green2}
      \varepsilon\,\sum_{n\in\ZZ^{d}}\varphi(n)\,\big|(H-E-i\varepsilon)^{-1}(n_{0},n)\big|^{2}~\geq~C_E\,\varepsilon^{\,\alpha\nu-1}\,   ,
   \end{align}
   holds for all $\epsilon > 0$. The finite positive constant $C_E := \frac{C_d |\psi_E(n_0)|}{A_E}$ depends on the dimension $d$ and the energy $E$.
   % as $C_E = \frac{C_d |\psi_E(n_0)|}{A_E}$. 
\end{thm}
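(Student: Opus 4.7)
The plan is to apply the Martinelli--Scoppola identity of Proposition \ref{prop:basic} with $z=E+i\varepsilon$ evaluated at $n=n_{0}$, and then choose the cube size $L=L(\varepsilon)$ so that the boundary remainder becomes negligible compared to $\psi_{E}(n_{0})$. With $E-z=-i\varepsilon$, Proposition \ref{prop:basic} (applied with the cube $\Lambda_{L}(n_{0})$ after the obvious translation-invariant re-derivation) gives
\begin{equation}
\psi_{E}(n_{0}) \;=\; -i\varepsilon\sum_{n\in\Lambda_{L}(n_{0})}(H-z)^{-1}(n_{0},n)\,\psi_{E}(n)\;+\;\sum_{m\in\overline{\partial}\Lambda_{L}(n_{0})}(H-z)^{-1}(n_{0},m)\,R_{L,\psi_{E}}(m).
\end{equation}
Call the two pieces $-i\varepsilon\,A(\varepsilon,L)$ and $B(\varepsilon,L)$. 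The strategy is to show $|B(\varepsilon,L)|\le\tfrac{1}{2}|\psi_{E}(n_{0})|$ for an appropriate $L=L(\varepsilon)$; then $|A(\varepsilon,L)|\ge|\psi_{E}(n_{0})|/(2\varepsilon)$, and a single application of Cauchy--Schwarz with weight $\varphi$ converts this pointwise lower bound into the claimed weighted $\ell^{2}$ lower bound on the Green's function.

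First I would bound $B(\varepsilon,L)$ using the Combes--Thomas estimate (Theorem \ref{thm:CombesT}) with $\delta=\varepsilon$: on $\overline{\partial}\Lambda_{L}(n_{0})$ one has $|m-n_{0}|\ge L$, so $|(H-z)^{-1}(n_{0},m)|\le(2/\varepsilon)\,e^{-c\varepsilon L}$. Combined with the control $\sum_{m}|R_{L,\psi_{E}}(m)|\le d\sum_{n\in\overline{\partial}\Lambda_{L}(n_{0})}|\psi_{E}(n)|$ from Proposition \ref{prop:basic}, and a Cauchy--Schwarz split
\begin{equation}
\sum_{n\in\overline{\partial}\Lambda_{L}(n_{0})}|\psi_{E}(n)| \;\le\; \Bigl(\sum_{n\in\overline{\partial}\Lambda_{L}(n_{0})}\varphi(n)\Bigr)^{1/2}\Bigl(\sum_{n\in\overline{\partial}\Lambda_{L}(n_{0})}\frac{|\psi_{E}(n)|^{2}}{\varphi(n)}\Bigr)^{1/2},
\end{equation}
using \eqref{assphi} (which gives $\varphi(n)\le Ce^{L^{\beta}}$ on the shell of cardinality $\le CL^{d-1}$) together with the hypothesis \eqref{phibehav2}, one obtains
\begin{equation}
|B(\varepsilon,L)| \;\le\; \frac{C\,A_{E}^{1/2}}{\varepsilon}\,L^{(d-1+\nu)/2}\,e^{L^{\beta}/2}\,e^{-c\varepsilon L}.
\end{equation}

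For any admissible $\alpha>1$, I would choose $L:=\lceil\varepsilon^{-\alpha}\rceil$. Then $c\varepsilon L\ge c\varepsilon^{1-\alpha}\to\infty$ super-polynomially as $\varepsilon\to 0^{+}$, and (provided $\alpha(1-\beta)>1$) this rate dominates both the logarithmic factor $|\log\varepsilon|$ and the subexponential factor $L^{\beta}=\varepsilon^{-\alpha\beta}$, so the bound above is $\le\tfrac{1}{2}|\psi_{E}(n_{0})|$ for all sufficiently small $\varepsilon$ (larger $\varepsilon$ being absorbed into the constant). Consequently $|A(\varepsilon,L)|\ge|\psi_{E}(n_{0})|/(2\varepsilon)$, and a further Cauchy--Schwarz with weight $\varphi$ combined with \eqref{phibehav2} gives
\begin{equation}
\frac{|\psi_{E}(n_{0})|^{2}}{4\varepsilon^{2}} \;\le\; |A(\varepsilon,L)|^{2} \;\le\; A_{E}\,L^{\nu}\,\sum_{n\in\ZZ^{d}}\varphi(n)\,|(H-z)^{-1}(n_{0},n)|^{2}.
\end{equation}
Multiplying through by $\varepsilon$ and inserting $L^{\nu}\le C\varepsilon^{-\alpha\nu}$ yields \eqref{eq:green2} with $C_{E}\propto|\psi_{E}(n_{0})|^{2}/A_{E}$.

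The key technical point --- and the main obstacle --- is the balancing of the cube size $L=L(\varepsilon)$: the Combes--Thomas decay $e^{-c\varepsilon L}$ has to simultaneously absorb the resolvent prefactor $1/\varepsilon$, the polynomial boundary factor $L^{(d-1+\nu)/2}$, and the subexponential growth $e^{L^{\beta}/2}$ of the weight $\varphi$ on the boundary shell. For the main applications $\varphi(x)=\langle x\rangle^{q}$ the exponent $\beta$ in \eqref{assphi} may be taken arbitrarily small, so that any $\alpha>1$ is admissible and the theorem is recovered as stated; for a fixed $\beta<1$ the same argument goes through provided $\alpha>1/(1-\beta)$.
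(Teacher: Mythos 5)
Your argument is essentially the paper's own proof: the same decomposition of $\psi_E(n_0)$ via Proposition \ref{prop:basic} into a bulk term and a boundary remainder, the Combes--Thomas estimate plus Cauchy--Schwarz and \eqref{phibehav2} to kill the remainder with the choice $L=\varepsilon^{-\alpha}$, and then a weighted Cauchy--Schwarz on the bulk term yielding $\varepsilon\sum_n\varphi(n)|G_{E+i\varepsilon}(n_0,n)|^2\gtrsim |\psi_E(n_0)|^2\varepsilon^{\alpha\nu-1}/A_E$. Your added caveat that the remainder only tends to zero when $\alpha(1-\beta)>1$ is a fair refinement rather than a defect --- the paper simply asserts $S(\varepsilon^{-\alpha})\to 0$ for every $\alpha>1$, which its own bound justifies only under that restriction --- and for the polynomial weights $\varphi(x)=\langle x\rangle^q$ used in the applications the restriction is vacuous.
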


\begin{proof}
Without loss of generality, we assume $n_{0}=0$.
We set $z=E+i\varepsilon$,  and write $G_z(n,m)$ for the kernel of the resolvent:  $G_{z}(n,m):=(H-z)^{-1}(n,m)$.
By Proposition \ref{prop:basic}, we obtain:
\bea\label{proof1}
   |\psi_E(0)| & \leq  & \varepsilon \sum_{n\in\Lambda_{L}}\,|G_{z}(0,n)|\,    |\psi_E(n)|     +   \sum_{n\in B_L}
    |G_{z}(0,n)|\,|R_{L;\psi_E}(n)|  \nonumber \\
     & :=  &  Q(L) + S(L) .
\eea

First, we consider the second sum $S(L)$ in \eqref{proof1}. From the definition of $R_{L;\psi_E}$ \eqref{eq:rl1},the Combes-Thomas estimate \eqref{eq:ct1} for $G_z(0,n)$,  the bound \eqref{phibehav2} on the generalized eigenfunction $\psi_E$, and the subexponential growth bound on $\varphi$, we obtain the bound
\begin{align} \label{eq:s_L1}
   S(L)~:&=~\sum_{n\in\overline{\partial}\Lambda_{L}}
    |G_{z}(0,n)|\,|R_{L;\psi}(n)|   \nonumber \\
    %&\sum_{n\in\overline{\partial}\Lambda_{L}} 
    %|G_{z}(0,n)| \,  |\psi_E(n)| \\ 
&\leq~\sum_{n\in\overline{\partial}\Lambda_{L}} \left( \frac{2}{\varepsilon} \right) e^{-c\varepsilon|n|} \varphi(n)^{\frac{1}{2}} \left( \frac{|\psi_E(n)|}{\varphi(n)^{\frac{1}{2}}} \right)   \nonumber  \\
&\leq~\frac{2C}{\varepsilon} \left( \sum_{n\in\overline{\partial}\Lambda_{L}} e^{-2c\varepsilon|n|} \varphi (n) \right)^{\frac{1}{2}}  \left( \sum_{n\in\overline{\partial}\Lambda_{L}}\frac{|\psi_E(n)|^2}{\varphi(n)} \right)^{\frac{1}{2}}  \nonumber \\
 & \leq~\frac{2C}{\varepsilon} A_E^{\frac{1}{2}} L^{\frac{\nu}{2}}  e^{-\frac{c\varepsilon}{2} L} \left( \sum_{n\in \overline{\partial}\Lambda_{L}}  e^{-c\varepsilon|n|} \varphi(n)  \right)^{\frac{1}{2}} . 
\end{align}
%where we used the Combes-Thomas bound \eqref{eq:ct1} on the Green's function, the bound \eqref{phibehav2} on the generalized eigenfunction $\psi$, and the subexponential growth bound on $\varphi$. 
%is polynomially bounded.
Now, if we choose $L=\varepsilon^{-\alpha}$ for some $\alpha>1 $, then $S(\varepsilon^{- \alpha})$ goes to zero as $\varepsilon$
goes to zero. So, if we take $\varepsilon$ small enough we may suppose $S(\varepsilon^{-\alpha})\leq \ov{2}|\psi_E(0)|$.

%\medskip
Next, we turn to the first sum $Q(L)$ in \eqref{proof1} with $L = \varepsilon^{- \alpha}$. For $\varepsilon$ sufficiently small, we have 
\begin{align}
 \frac{1}{2} | \psi_E(0)| \leq   Q(\varepsilon^{-\alpha})~:&=~\varepsilon \sum_{|n|\leq \varepsilon^{-\alpha}}\,|G_{z}(0,n)|\,|\psi_E(n)| \nonumber \\
&\leq~\varepsilon \sum_{|n|\leq \varepsilon^{-\alpha}}\,\varphi(n)^{\ov{2}}|G_{z}(0,n)|\,\frac{|\psi_E(n)|}{\varphi(n)^{\ov{2}}}  \nonumber \\
&\leq~\varepsilon^{\ov{2}}\;\Big(\varepsilon\sum_{n\in\ZZ^{d}}\varphi(n)|G_{z}(0,n)|^{2} \Big)^{\ov{2}}\;
\Big(\sum_{|n|\leq \varepsilon^{-\alpha}}\frac{|\psi_E(n)|^{2}}{\varphi(n)}\Big)^{\ov{2}}   \label{eq:s_L2} \\
&\leq~\varepsilon^{\ov{2}}\;\Big(\varepsilon\sum_{n\in\ZZ^{d}}\varphi(n)|G_{z}(0,n)|^{2} \Big)^{\ov{2}}\;A_E^{\ov{2}}\,\varepsilon^{-\ov{2}\alpha\nu}\,,  \nonumber
\end{align}
by assumption \eqref{phibehav2} on the growth of $\psi_E$. It follows that for $\varepsilon$ small enough
\begin{align} \label{eq:s_L3}
  \varepsilon ~ \sum_{n\in\ZZ^{d}}\varphi(n)|G_{E+i\varepsilon}(0,n)|^{2}~\geq~\varepsilon^{\,\alpha\nu-1}\,\frac{|\psi_E(0)|^{2}}{4A_E} ,
\end{align}
establishing the bound \eqref{eq:green2}.
\end{proof}

There are two apparent difficulties with the hypotheses of Theorem \ref{thm:mainres}: 1) the dependence of $A_E$ on $E \in I$, and 2) the requirement that $\psi_E(n_0) \neq 0$ for almost every $E \in I$. In the next two lemmas, we show that these two constraints can be easily dealt with.

We need to be able to choose the constant $A_E$ uniform in $E \in I$. This can be done if we slightly decrease the set $I$.

\begin{lemma}\label{lemma:uniform1}
Let $F: \RR \times \NN \rightarrow \CC$ and suppose that $I \subset \RR$ is a set of positive Lebesgue measure.  Further, suppose there is a positive constant $\theta > 0$, so that for each $E \in I$, there is a constant $A_E > 0$ such that 
\beq
|F(E,n)| \leq A_E n^\theta , \forall n \in \NN .
\eeq
Then, there exists a set $I_0 \subset I$, of positive Lebesgue measure, and a finite constant $A > 0$, independent of $E$, such that for all $E \in I_0$, and all $n \in \NN$, 
\beq
|F(E,n)| \leq A n^\theta .
\eeq
\end{lemma}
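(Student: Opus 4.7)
The plan is a standard exhaustion argument for Lebesgue measure. The pointwise hypothesis gives a bound $|F(E,n)| \leq A_E n^\theta$ at each $E \in I$ with $A_E$ possibly depending wildly on $E$. What one needs is a single constant $A$ that works on a positive-measure subset, and this is exactly what countable subadditivity of Lebesgue measure delivers once one partitions $I$ according to how large $A_E$ must be.

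Concretely, for each positive integer $k$ I would define the sublevel set
\[
I_k := \{E \in I : |F(E,n)| \leq k n^\theta \text{ for all } n \in \NN\}.
\]
These are nested, $I_1 \subseteq I_2 \subseteq \cdots$, and by the pointwise hypothesis they exhaust $I$: every $E \in I$ lies in $I_k$ as soon as $k \geq A_E$. Countable subadditivity of Lebesgue (outer) measure then yields
\[
0 < m(I) \leq \sum_{k=1}^\infty m^*(I_k),
\]
so there exists some $k_0 \in \NN$ with $m^*(I_{k_0}) > 0$. Setting $A := k_0$ and $I_0 := I_{k_0}$ gives the uniform bound $|F(E,n)| \leq A n^\theta$ for every $E \in I_0$ and every $n \in \NN$, as required.

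The only point that needs care, and the step I expect to be the principal obstacle, is verifying that $I_0$ can be taken to be Lebesgue measurable of positive measure rather than merely of positive outer measure. Writing
\[
I_k = \bigcap_{n \in \NN} \{E \in I : |F(E,n)| \leq k n^\theta\}
\]
realizes $I_k$ as a countable intersection, so measurability of $I_k$ follows as soon as $E \mapsto F(E,n)$ is Borel measurable for each fixed $n$. In the intended application $F(E,n) = \psi_E(n)$, and the Sch'nol-type construction of generalized eigenfunctions can be arranged to depend measurably on $E$, so this obstruction does not arise in practice; if desired, one can also simply replace $I_0$ by a Lebesgue-measurable inner approximation of positive measure inside $I_{k_0}$, since the conclusion of the lemma only asks for $I_0 \subset I$ to have positive Lebesgue measure.
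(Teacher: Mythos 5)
Your argument is essentially identical to the paper's proof: the paper defines the same nested sublevel sets $I_M = \{E \in I : |F(E,n)| \leq M n^\theta \ \forall n\}$, notes they increase to $I$, and concludes $|I_M| > 0$ for large $M$ by continuity from below, which is just a minor variant of your subadditivity step. One small caveat: your optional fallback remark that a set of positive outer measure can be replaced by a measurable subset of positive measure is false in general (inner measure may vanish), but this does not affect the argument, since in the intended application the sets $I_k$ are measurable and the paper itself silently makes the same assumption.
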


\begin{proof}
We set $I_M := \{  E \in I ~|~ | F(E, n) | \leq M n^\theta,  \forall n \in \NN \}$. We then have $I_M \subset I_{M+1}$, and $\bigcup_{M=1}^\infty I_M = I$. 
Thus, $|I_M| \rightarrow |I|$ which implies $|I_M| > 0$,  for all $M$ large enough. 
\end{proof}

\begin{lemma}\label{lemma:non_zero1}
Let $I$ denote an interval as in Theorem \ref{thm:main2} and let $\psi_E$ be  corresponding pb-generalized eigenfunctions for almost every $E \in I$. Then, for some $n_0 \in \ZZ^d$, there exists a subset $I_{n_0} \subset I$ with $|I_{n_0}| > 0$ so that $\psi_E(n_0) \neq 0$, for almost every $E \in I_{n_0}$.
\end{lemma}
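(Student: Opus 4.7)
The plan is to use a pigeonhole-type argument on the countable lattice $\ZZ^d$. The key observation is that a pb-generalized eigenfunction is, by convention, not the zero function, so for each admissible $E$ there is at least one lattice site at which $\psi_E$ does not vanish. Since $\ZZ^d$ is countable and $I$ has positive Lebesgue measure, some single site must work for a set of $E$ of positive measure.

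\medskip

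First I would introduce, for each $n \in \ZZ^d$, the set
\[
   I_n := \{ E \in I : \psi_E(n) \neq 0 \}.
\]
By hypothesis, for almost every $E \in I$ the function $\psi_E$ is a (non-trivial) pb-generalized eigenfunction, so the null set $N := I \setminus \bigcup_{n \in \ZZ^d} I_n = \{ E \in I : \psi_E \equiv 0\}$ has Lebesgue measure zero. Hence
\[
   |I| \;=\; \Big| \bigcup_{n \in \ZZ^d} I_n \Big| \;\leq\; \sum_{n \in \ZZ^d} |I_n|
\]
by countable subadditivity. Since $|I| > 0$ and the sum ranges over the countable set $\ZZ^d$, there must exist $n_0 \in \ZZ^d$ with $|I_{n_0}| > 0$. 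The conclusion $\psi_E(n_0) \neq 0$ for (every, hence almost every) $E \in I_{n_0}$ then holds by the very definition of $I_{n_0}$.

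\medskip

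The only subtle point, and therefore the place that deserves explicit mention in the write-up, is the Lebesgue measurability of the sets $I_n$. This is standard but not automatic: it follows from the fact that pb-generalized eigenfunctions may be chosen to depend measurably on $E$ via the direct integral / spectral multiplicity decomposition underlying the Sch'nol theorem (Theorem \ref{thm:Sch}). I would either invoke this measurable selection explicitly, or, more economically, observe that it suffices to work with outer Lebesgue measure throughout: countable subadditivity still yields $|I_{n_0}|^* > 0$ for some $n_0$, and one may then replace $I_{n_0}$ by any Lebesgue measurable hull of positive measure to obtain the set asserted in the statement.
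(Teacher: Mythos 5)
Your argument is exactly the paper's proof: decompose $I = \bigcup_{n \in \ZZ^d} I_n$ with $I_n := \{E \in I : \psi_E(n) \neq 0\}$ and use countability of $\ZZ^d$ together with $|I|>0$ to extract an $n_0$ with $|I_{n_0}|>0$. Your additional remark on the measurability of the sets $I_n$ (or the outer-measure workaround) is a sensible refinement of a point the paper passes over silently, but the substance of the proof is the same.
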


\begin{proof}. For each $k \in \ZZ^d$, define ${I}_k := \{ E \in I ~|~ \psi_E(k) \neq 0 \} \subset I$. We have $I = \bigcup_{k \in \ZZ^d} {I}_k$. Since $|I | > 0$, there must exist at least one $n_0 \in \ZZ^d$ such that $|{I}_{n_0}| > 0$.
\end{proof}

\begin{proof}(of Theorem \ref{thm:main2})
We assume the hypotheses of Theorem \ref{thm:main2}: There is a set $I \subset \RR$ with $|I| > 0$,  and $\varphi$ is a weight function satisfying \eqref{assphi}. 
For Lebesgue a.e. $E \in I$, there exists a finite constant $A_E > 0$, a constant $0 \leq \nu <1$,  and a generalized eigenfunction $\psi_{E} $, such that $\psi_{E}(k)\not=0 $, for some $k \in \ZZ^d$, and $\psi_E$ that satisfies
   \begin{align}\label{phibehav3}
      \sum_{|n|\leq L}\frac{|\psi_{E}(n)|^{2}}{\varphi(n)}~\leq~A_E \,L^{\nu}\,,
   \end{align}
 for all $L>0$. By Lemma \ref{lemma:uniform1}, there is a measurable subset $I_{0} \subset I$, with $|I_0| > 0$ and a constant $A_{I_0}$ so that \eqref{phibehav3} holds  
with $A_{I_0}$ replacing $A_E$ for a.e. $E \in I_0$. Furthermore, if we apply Lemma \ref{lemma:non_zero1} to $\{ \psi_E ~|~ E \in I_0 \}$, there is a point $n_0 \in \ZZ^d$, and a subset 
 ${I}_{n_0} \subset I_0$, with 
 $| {I}_{n_0}| > 0$, so that for almost every $E \in I_{n_0}$, the pb-generalized eigenfunctions satisfy $\psi_E (n_0) \neq 0$. From \eqref{eq:s_L3} of 
  Theorem \ref{thm:mainres}, we find that upon taking  $\varepsilon = \frac{1}{2T}$, the resolvent is bounded as
\beq\label{eq:green3}
 \frac{1}{2T} ~ \sum_{n\in\ZZ^{d}}\varphi(n)|G_{E+i \frac{1}{2T}}(n_0,n)|^{2}~\geq~ {(2T)}^{1- \alpha\nu}\,\frac{|\psi_E(n_0)|^{2}}{4A_{I_0}} , 
\eeq
for a.\ e.\ $E \in {I}_{n_0}$.
 We use this bound \eqref{eq:green2}  for the integrand in the representation \eqref{eq:mm0} to obtain
 \bea\label{eq:timeres2}  
  \mm_{\varphi;n_0}(T) & = &\frac{1}{\pi }\,\int_{-\infty}^{\infty}\,\sum_{n\in\ZZ^{d}}\,\varphi(n)\,\Big|\big(H-E-\frac{1}{2T}i\big)^{-1}(n_0,n)\Big|^{2}\,dE  \nonumber \\
   & \geq & \frac{T^{1-\alpha\nu}}{2^{1+ \alpha \nu}  A_{I_0} \pi }\,\int_{{I}_{n_0}}\, | \psi_E(n_0) |^2 ~dE .
  %   \sum_{n\in\ZZ^{d}}\,\varphi(n)\,\Big|\big(H-E-\frac{1}{2T}i\big)^{-1}(0,n)\Big|^{2}\,dE 
 \eea
 The integrand in \eqref{eq:timeres2} is strictly positive, so for any $\alpha>1$,  there is a finite constant $C> 0$ so that 
   \begin{align}
      \mm_{\varphi;n_0}(T)~\geq~C\,T^{1-\alpha\nu}\,   ,
   \end{align}   
   proving the theorem. 
   \end{proof}
   
%%%%%%%%%%%%%%%%%%%%%%%%%%%%%%%%%%%%%
%   then for any $\alpha>1$ there is a constant $C$ with
%   \begin{align}
%      \mm_{\varphi}(T)~\geq~C\,T^{1-\alpha\nu}\,.
%   \end{align} 
%
%Theorem \ref{thm:main} follows from Theorem \ref{thm:mainres} by taking
% $\varepsilon = \frac{1}{2T}$ in equation \eqref{eq:green2}. 
%%%%%%%%%%%%%%%%%%%%%%%%%%%%%%%%%%%%%%

The following corollary follows immediately from the above theorem and its proof.
\begin{cor}
   Suppose $H\psi=E\psi$ with $\psi\in\ell^{2}(\ZZ^d)$ and $\| \psi \|_{2}=1$, then
\begin{enumerate}
   \item %we have
\begin{align}
   \sum_{n\in\ZZ^{d}} \varepsilon^{2}\, |G_{E+i\varepsilon}(n_{0},n)|^{2}~\geq~|\psi(n_{0})|^{2}
\end{align}
\item If $\psi(n)\leq C e^{-\gamma|n-n_{0}|} $ then
\begin{align}
  \varepsilon \sum_{n\in\ZZ^{d}} \varphi(n) |G_{E+i\varepsilon}(n_{0},n)|^{2}~\geq~C'\,\varepsilon^{-1}
\end{align}
for any $\varphi $ satisfying \eqref{assphi}.
\end{enumerate}
\end{cor}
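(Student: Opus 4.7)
The two statements both follow from the simple observation that for an $\ell^{2}$ eigenfunction, the eigenvalue equation $H\psi=E\psi$ gives
\begin{align*}
(H-E-i\varepsilon)\psi = -i\varepsilon\,\psi,
\end{align*}
so that applying the resolvent $G_{E+i\varepsilon}=(H-E-i\varepsilon)^{-1}$ (which is bounded since $E+i\varepsilon\notin\sigma(H)$) yields the identity $\psi = -i\varepsilon\,G_{E+i\varepsilon}\,\psi$. Taking matrix elements against $\delta_{n_{0}}$ gives
\begin{align*}
\psi(n_{0}) \;=\; -i\varepsilon\sum_{n\in\ZZ^{d}} G_{E+i\varepsilon}(n_{0},n)\,\psi(n).
\end{align*}

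For part (1), I would apply Cauchy–Schwarz directly to this identity to obtain
\begin{align*}
|\psi(n_{0})|^{2} \;\leq\; \varepsilon^{2}\,\Big(\sum_{n\in\ZZ^{d}}|G_{E+i\varepsilon}(n_{0},n)|^{2}\Big)\,\|\psi\|_{2}^{2},
\end{align*}
and the claim is immediate from $\|\psi\|_{2}=1$.

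For part (2), I would use a weighted Cauchy–Schwarz with the weight $\varphi$, splitting $|G(n_{0},n)\psi(n)| = \varphi(n)^{1/2}|G(n_{0},n)|\cdot\varphi(n)^{-1/2}|\psi(n)|$, to get
\begin{align*}
|\psi(n_{0})|^{2} \;\leq\; \varepsilon^{2}\,\Big(\sum_{n\in\ZZ^{d}}\varphi(n)\,|G_{E+i\varepsilon}(n_{0},n)|^{2}\Big)\;\Big(\sum_{n\in\ZZ^{d}}\frac{|\psi(n)|^{2}}{\varphi(n)}\Big).
\end{align*}
The key point is that the last factor is a finite constant $D$ independent of $\varepsilon$: using the exponential decay $|\psi(n)|\leq Ce^{-\gamma|n-n_{0}|}$ together with the subexponential upper bound $\varphi(n)^{-1}\geq (Ce^{|n|^{\beta}})^{-1}$ coming from \eqref{assphi} with $\beta<1$, the sum $\sum_{n}|\psi(n)|^{2}/\varphi(n)$ is dominated by $\sum_{n}e^{-2\gamma|n-n_{0}|+|n|^{\beta}}$, which converges since the exponential decay beats the subexponential growth. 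Rearranging the resulting inequality gives
\begin{align*}
\varepsilon\sum_{n\in\ZZ^{d}}\varphi(n)\,|G_{E+i\varepsilon}(n_{0},n)|^{2} \;\geq\; \frac{|\psi(n_{0})|^{2}}{D}\,\varepsilon^{-1},
\end{align*}
which is the claimed bound with $C'=|\psi(n_{0})|^{2}/D$ (trivial if $\psi(n_{0})=0$).

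There is no serious obstacle; the only mild subtlety is checking that the subexponential growth hypothesis \eqref{assphi} on $\varphi$ is enough to ensure that $\sum_{n}|\psi(n)|^{2}/\varphi(n)<\infty$ when $\psi$ decays exponentially, and this is handled by comparing $\beta<1$ to $\gamma>0$ in the exponents.
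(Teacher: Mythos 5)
Your overall architecture is correct, and for part (1) you take a genuinely cleaner route than the paper: since $\psi\in\ell^{2}(\ZZ^{d})$, you can apply the resolvent directly to $(H-E-i\varepsilon)\psi=-i\varepsilon\psi$ and get the exact identity $\psi(n_{0})=-i\varepsilon\sum_{n}G_{E+i\varepsilon}(n_{0},n)\psi(n)$, with no spatial cutoff and no Combes--Thomas estimate. The paper instead reads the corollary off the proof of Theorem \ref{thm:mainres} (the truncated identity of Proposition \ref{prop:basic} plus the Cauchy--Schwarz steps \eqref{eq:s_L2}--\eqref{eq:s_L3}, with the growth condition \eqref{phibehav2} holding with $\nu=0$ because $\psi\in\ell^{2}$); that route only gives the bound for $\varepsilon$ small and with an extra factor $\tfrac14$, whereas your exact identity yields part (1) with the stated constant for every $\varepsilon>0$. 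So for the corollary itself your elementary argument is, if anything, sharper.

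There is, however, a direction-of-inequality error in your justification of $D:=\sum_{n}|\psi(n)|^{2}/\varphi(n)<\infty$ in part (2). Condition \eqref{assphi} is an \emph{upper} bound $\varphi(n)\leq Ce^{|n|^{\beta}}$, hence it gives a \emph{lower} bound $\varphi(n)^{-1}\geq C^{-1}e^{-|n|^{\beta}}$, which cannot dominate the sum; your claimed domination $|\psi(n)|^{2}/\varphi(n)\leq C^{2}e^{-2\gamma|n-n_{0}|+|n|^{\beta}}$ would require a lower bound $\varphi(n)\geq e^{-|n|^{\beta}}$, which \eqref{assphi} does not supply (a positive $\varphi$ decaying like $e^{-e^{|n|}}$ satisfies \eqref{assphi} but makes your $D$ infinite). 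The repair is immediate for the growth functions actually used in the paper: if $\inf_{n}\varphi(n)>0$, e.g.\ $\varphi(n)=\langle n\rangle^{q}\geq 1$, then $D\leq\|\psi\|_{2}^{2}/\inf_{n}\varphi(n)<\infty$ and your weighted Cauchy--Schwarz gives the claimed $\varepsilon^{-1}$ lower bound (with $C'=|\psi(n_{0})|^{2}/D$, nontrivial when $\psi(n_{0})\neq 0$). Note that the paper's phrase ``for any $\varphi$ satisfying \eqref{assphi}'' carries the same implicit lower-bound assumption, since verifying \eqref{phibehav2} with $\nu=0$ needs it as well; but as written your step does not follow from \eqref{assphi}, so you should either add the lower-bound hypothesis or state the bound $D\leq\|\psi\|_{2}^{2}/\inf\varphi$ explicitly.
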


%%%%%%%%%%%%%%%%%%%%%%%%%%%%%%%%%%%%%%%%%%%%%%%%%%%%%%%%%%%%%%%%%%%%%%%%%%%%%%

\section{Application: $\Gamma$-trimmed \Schr operators}\label{sec:trimmed}

In this section we consider our main example, a class of $\Gamma$-trimmed \Schr operators. We prove that for a class of periodic sublattices $\Gamma \subset \ZZ^d$, we can construct sufficiently many bounded generalized eigenfunctions so that, applying the above results, we can prove that these models exhibit nontrivial transport.

We remark that these results extend those of \cite{KiKr2} who studied quantum waveguides. The model for a quantum waveguide consists of compactifying the variables in the $d_1$-directions using periodic boundary conditions. For these $\Gamma$-trimmed models, the  authors proved in Theorem 3.1 that the spectrum is pure point spectrum in any interval $I \subset \{ E ~|~ \dist( E , \sigma(H_{0, \Gamma^c}) > \gamma > 0 \}$, for sufficiently large disorder, where $H_{0, \Gamma^c}$ is the Laplacian restricted to the complement of $\Gamma$ in $\ZZ^d$. They also prove, under some hypotheses, that this is a nontrivial statement in that $[ \inf \Sigma, \inf \Sigma + \epsilon ] \cap  \sigma(H_{0, \Gamma^c}) = \emptyset$, and similarly for  a region near $\sup \Sigma$, for some $\epsilon > 0$. This established localization near the band edges. In addition to this, it is proved in Theorem 3.9 that for a certain family of waveguides, there is nontrivial absolutely continuous spectrum. 
The models discussed in this section include the waveguide models. Although we do not prove the existence of absolutely continuous spectrum, we prove nontrivial transport, and hence dynamical delocalization for these models.

%%%%%%%%%%%%%%%%%%%%%%%%%%%%%%%%%%%%%%%%%

\subsection{Basic definitions and the model}\label{subsec:defn_trim1}

We begin with the definition of a $\Gamma$-trimmed \Schr operator on $\ZZ^d$. 

\begin{definition}
Let $\Gamma $ be a nontrivial subset of $\ZZ^{d}$. A Hamiltonian $H=H_{0}+V$ on $\ell^{2}(\ZZ^{d}) $ is called $\Gamma$-trimmed if
$V(n)=0 $ whenever $n\not\in\Gamma$.
If the nonzero potential is given by independent, identically distributed random variables $V_\omega(n)$, that is
\begin{align}
\{ V(n) \}_{n \in\Gamma}~=~\{ V_{\omega}(n) \}_{n \in\Gamma}
\end{align}
we call it a \emph{$\Gamma $-trimmed Anderson model}. In this case, we always assume that the common distribution of the $V_{\omega}(n)$ has a bounded density.
\end{definition}

Our fundamental assumption on $\Gamma$ is the following.
We are interested in cases for which $\Gamma$ is a periodic subset of $\ZZ^d$ and of the following special form. In order to describe this, we introduce the following subspaces. 

\begin{definition} For $p\in\NN$, and for each $1\leq i\leq m$, 
   we set
\begin{align}\ZZ_{i,p}^{m}~&:=~\{ (n_{1},\ldots,n_{m})\mid n_{i}\in p\,\ZZ, ~n_j \in \ZZ, j \neq i \} \subset \ZZ^m, \\
\intertext{and for $\rho=(\rho_{1},\ldots,\rho_{m})\in\NN^{m}$}
   W_{\rho}^{m}~&:=~\bigcup_{i=1}^{m}\,\ZZ_{i,\rho_{i}}^{m}  \subset \ZZ^m \,.
\end{align}
\end{definition}

\begin{assum}\label{assupm:trim2}
The periodic set $\Gamma\subset\ZZ^{d} $ is of the form
\begin{align}\label{sublat}
   \Gamma~=~W_{\rho}^{d_{1}}\times\ZZ^{d_{2}}
\end{align}
where $\rho=(\rho_{1},\ldots,\rho_{d_{1}})$ with $\rho_{i}\geq 2$ and $d=d_{1}+d_{2}$.
\end{assum}

The subspace $ W_{\rho}^{d_{1}}$ forms a $d_{1}$-dimensional grid in $\ZZ^{d} $. In particular if $d_{2}=0$ so that $d_1 = d$, we learn from the papers cited above that there is a Wegner estimate for all energies outside a discrete set for the 
$\Gamma $-trimmed Anderson mode. In this case, the \Schr operator $H$ has pure point spectrum for high disorder. In the other extreme, $d_{1}=0$, the $\Gamma $-trimmed Anderson model is just the standard Anderson model on $\ZZ^d$.

We are interested in families of generalized eigenfunctions for the $\Gamma$-trimmed \Schr operator. For this, we introduce the following notation.
\begin{definition}
   If $\Gamma $ is as in \eqref{sublat} we set
\begin{align}
   e(p_{1},\ldots,p_{d_{1}})~&=~2\sum_{i=1}^{d_{1}} \cos{(\frac{\pi}{2\rho_{i}}p_{i})}\\
  \text{and}\qquad E(p_{1},\ldots,p_{d})~&=~2\sum_{i=1}^{d_{1}} \cos{(\frac{\pi}{2\rho_{i}}p_{i})}~+~2\sum_{j=1}^{d_{2}}\cos{(\frac{\pi}{2}p_{d_{1}+j})}\,.
\end{align}
We also set
\begin{align}
   \Ss_{0}~&=~\{\, e(k_{1},\ldots k_{d_{1}})\mid k_{i}\in\ZZ\setminus\{ 0 \}\}\\
\text{and} \qquad  \Ss~&=~\{\, E(k_{1},\ldots k_{d_{1}},\kappa_{1},\ldots,\kappa_{d_{2}})\mid k_{i}\in\ZZ\setminus\{ 0 \}, \kappa_{j}\in\RR\}
\end{align}
\end{definition}

\begin{prop}
Suppose $\Gamma \subset \ZZ^d$ satisfies Assumption \ref{assupm:trim2}, and that the \Schr operator $H$ is $\Gamma$-trimmed.
Then the set
   \begin{align}
      \Ss~=~\Ss_{0}\,+[-2d_{2},2d_{2}]
   \end{align}
 consists of generalized eigenvalues of $H$ and is hence contained in $\sigma(H) $.
The functions
   \begin{align}\label{eq:gen_ef1}
      \psi(n_{1},\ldots,n_{d})~=~\prod_{i=1}^{d_{1}}\,\sin{(\frac{\pi}{2\rho_{i}}k_{i}n_{i})}~\times~
      \prod_{j=1}^{d_{2}}\,\e^{\i\frac{\pi}{2}\kappa_{j}n_{d_{1}+j}}
   \end{align}
are a set of bounded, generalized eigenfunctions of $H$ corresponding to generalized eigenvalues $E(k_{1},\ldots k_{d_{1}},\kappa_{1},\ldots,\kappa_{d_{2}}) $.
\end{prop}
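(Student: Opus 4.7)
The plan is a direct verification in three steps: (a) show that $\psi$ vanishes on $\Gamma$, which kills the potential term $V\psi$ and reduces the eigenvalue equation to the free one; (b) compute $H_0\psi$ by separation of variables; (c) invoke Sch'nol (Theorem~\ref{thm:Sch}) to conclude $\Ss \subset \sigma(H)$.

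For (a), I would argue that any $n \in \Gamma = W_\rho^{d_1} \times \ZZ^{d_2}$ satisfies, by the definition $W_\rho^{d_1} = \bigcup_{i=1}^{d_1}\ZZ_{i,\rho_i}^{d_1}$, the condition that at least one coordinate $n_i$ with $1 \le i \le d_1$ lies in $\rho_i\ZZ$. Under the integer indexing of the $k_i$ used in defining $\Ss_0$, the corresponding factor $\sin(\pi k_i n_i / 2\rho_i)$ in \eqref{eq:gen_ef1} then equals zero, so the whole product $\psi(n)$ vanishes. Combined with $V(n)=0$ off $\Gamma$, this gives $V(n)\psi(n)=0$ for every $n\in\ZZ^d$, hence $H\psi = H_0\psi$.

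For (b), the tensor form of $\psi$ together with the nearest-neighbor form of $H_0$ makes each coordinate direction contribute independently. The identity $\sin(\alpha+\beta)+\sin(\alpha-\beta) = 2\cos\beta\sin\alpha$, applied with $\beta = \pi k_i/2\rho_i$ for each $i\le d_1$, shows that the $i$-th sine direction multiplies $\psi$ by $2\cos(\pi k_i/2\rho_i)$. The identity $e^{\i(\alpha+\beta)} + e^{\i(\alpha-\beta)} = 2\cos\beta\,e^{\i\alpha}$, applied with $\beta = \pi\kappa_j/2$ for each $j\le d_2$, shows the $j$-th exponential direction multiplies by $2\cos(\pi\kappa_j/2)$. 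Summing over all $d_1+d_2=d$ directions recovers exactly $E(k_1,\ldots,k_{d_1},\kappa_1,\ldots,\kappa_{d_2})$, so $H\psi = E\psi$.

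For (c), the boundedness $|\psi(n)|\le 1$ (from $|\sin|\le 1$ and $|e^{\i\theta}|=1$) makes $\psi$ a uniformly bounded, hence pb-generalized, eigenfunction of $H$, so $E(k_1,\ldots,k_{d_1},\kappa_1,\ldots,\kappa_{d_2}) \in \ee(H) \subset \overline{\ee(H)} = \sigma(H)$ by Theorem~\ref{thm:Sch}(1a). Letting the $\kappa_j$ range over $\RR$, the sum $\sum_{j=1}^{d_2}2\cos(\pi\kappa_j/2)$ sweeps out all of $[-2d_2,2d_2]$, so the full set of eigenvalues so constructed is exactly $\Ss_0 + [-2d_2,2d_2] = \Ss$. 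The only nonroutine step is (a), where one must match the arithmetic of the sine zeros with the hyperplanes $\{n_i\in\rho_i\ZZ\}$ that constitute $\Gamma$; steps (b) and (c) are a textbook separation-of-variables computation followed by a direct invocation of Sch'nol.
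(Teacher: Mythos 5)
Your overall route -- show $\psi$ vanishes on $\Gamma$ so that $V\psi\equiv 0$, compute $H_0\psi$ by separation of variables, then invoke Sch'nol -- is exactly the argument the paper relies on (the proposition is stated there without proof, and the same construction reappears, again without the vanishing being checked, in the proof of Proposition \ref{prop:trimmed}); your steps (b) and (c) are fine. The gap is in step (a), the one you yourself single out as nonroutine: the asserted vanishing is arithmetically false for odd $k_i$. For $n_i=\rho_i m$ one has $\sin\bigl(\tfrac{\pi}{2\rho_i}k_i\rho_i m\bigr)=\sin\bigl(\tfrac{\pi}{2}k_i m\bigr)$, which equals $\pm 1$ whenever $k_i$ and $m$ are both odd, and vanishes for all $m\in\ZZ$ precisely when $k_i$ is even. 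Concretely, take $d_1=1$, $\rho_1=2$, $k_1=1$: the factor $\sin(\pi n_1/4)$ equals $1$ at $n_1=2$, which lies on a hyperplane of $\Gamma$, so for any potential with $V(2,\cdot)\neq 0$ the equation $H\psi=E\psi$ fails at that site. Hence the reduction $H\psi=H_0\psi$, on which the whole proof rests, does not hold for odd $k_i$, and your argument does not cover the full range $k_i\in\ZZ\setminus\{0\}$ used in the definition of $\Ss_0$; indeed, for an arbitrary bounded $V$ supported on $\Gamma$ the odd-$k_i$ functions in \eqref{eq:gen_ef1} are simply not generalized eigenfunctions of $H$.

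What the statement must mean, and what your proof actually establishes, is the even case: write $k_i=2l_i$, so the sine factors are $\sin(\pi l_i n_i/\rho_i)$, which do vanish on every hyperplane $\{n_i\in\rho_i\ZZ\}$ (they are the Dirichlet-type modes between consecutive hyperplanes of $\Gamma$), with energy contribution $2\sum_i\cos(\pi l_i/\rho_i)$. One also needs $l_i\notin\rho_i\ZZ$ (equivalently $k_i\notin 2\rho_i\ZZ$, not merely $k_i\neq 0$), since otherwise the sine factor, and hence $\psi$, is identically zero and there is no generalized eigenfunction at all. With these two amendments your steps (a)--(c) go through verbatim and yield the proposition for the correspondingly restricted set $\Ss_0+[-2d_2,2d_2]$; as literally parametrized, the odd-$k_i$ part of $\Ss_0$ cannot be handled by this argument. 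You should state and verify this parity restriction explicitly instead of asserting the vanishing ``under the integer indexing of the $k_i$''.
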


\begin{rem}\quad
\begin{enumerate}
   \item The set $\Ss $ has no isolated points ($d_{2}\geq 1$) hence $\Ss\subset\sigma_{ess}(H) $.
   \item We make \emph{no} assumption on the behavior of $V$ on the set $\Gamma$ except that it be bounded. It may be random or even zero on $\Gamma $.
   \item It is an immediate consequence of Corollary \ref{cor:bnd} that $\mm_{q}$ is unbounded if $q>d-1 $. If $d_{2}\geq 2$ we can do better.
\end{enumerate}
\end{rem}

Our main result on these $\Gamma$-trimmed models is that the model exhibits nontrivial transport for any bounded potential supported on $\Gamma$.

\begin{thm}\label{thm:trimmed+}
Suppose $\Gamma $ is as in \eqref{sublat} with $d_{2}\geq 2$. If $H$ is $\Gamma $-trimmed, then for some $n_0 \in \Gamma^c \subset \ZZ^d$, 
we have for $q>d_{1}+1$
   \begin{align}
      \mm_{q;n_0}(T)~\geq~C\,T\,,
   \end{align}
   and for $d_{1}< q< d_{1}+1 $
   \begin{align}
      \mm_{q;n_0}(T)~\geq~C\,T^{q-d_{1}-\varepsilon}\,,
   \end{align}
\end{thm}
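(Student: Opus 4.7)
The plan is to apply Theorem~\ref{thm:main2} with the weight $\varphi(n)=\langle n\rangle^{q}$, using a sharper family of generalized eigenfunctions than those in \eqref{eq:gen_ef1}. The eigenfunctions of \eqref{eq:gen_ef1} are merely bounded, so Corollary~\ref{cor:bnd} only gives $\mm_{q;n_{0}}(T)\geq CT$ for $q>d$. To reach the improved threshold $q>d_{1}+1$, one must exploit the $d_{2}\geq 2$ transverse degrees of freedom to produce eigenfunctions that \emph{decay}, rather than merely stay bounded, in the $\ZZ^{d_{2}}$-direction. Writing $n=(n^{(1)},n^{(2)})\in\ZZ^{d_{1}}\times\ZZ^{d_{2}}$, I would seek product-type pb-generalized eigenfunctions
\begin{align*}
\psi_{E}(n^{(1)},n^{(2)})=\phi_{k}(n^{(1)})\,\chi_{\lambda}(n^{(2)}),\qquad \chi_{\lambda}(m)=\int_{\Sigma_{\lambda}}e^{\i\,\xi\cdot m}\,a(\xi)\,d\sigma_{\lambda}(\xi),
\end{align*}
where $\phi_{k}$ is the $\sin$-product from \eqref{eq:gen_ef1} chosen so that $\phi_{k}\equiv 0$ on $W_{\rho}^{d_{1}}$ (so $\psi_{E}\equiv 0$ on $\Gamma$), $\lambda=E-e(k)\in[-2d_{2},2d_{2}]$, $\Sigma_{\lambda}=\{\xi\in\TT^{d_{2}}:2\sum_{j}\cos\xi_{j}=\lambda\}$ is the Fermi surface of the free $d_{2}$-dimensional Laplacian, and $a\geq 0$ is a smooth cut-off supported on a portion of $\Sigma_{\lambda}$ of nondegenerate Gaussian curvature. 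Since $V$ vanishes off $\Gamma$ and $\psi_{E}$ vanishes on $\Gamma$, one has $H\psi_{E}=(e(k)+\lambda)\psi_{E}=E\psi_{E}$, so $\psi_{E}$ is a bounded generalized eigenfunction at energy $E$. For $d_{2}\geq 2$ and $\lambda$ outside the closed Lebesgue-null set of van Hove critical values, $\Sigma_{\lambda}$ is a smooth $(d_{2}-1)$-hypersurface with nonvanishing Gauss--Kronecker curvature, and stationary phase / Van der Corput yields the transverse decay $|\chi_{\lambda}(m)|\leq C_{\lambda}(1+|m|)^{-(d_{2}-1)/2}$.

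\medskip

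Pick $n_{0}=(n_{0}^{(1)},0)\in\Gamma^{c}$ with $\phi_{k}(n_{0}^{(1)})\neq 0$ (possible since $\phi_{k}$ is nonzero on generic points of $\Gamma^{c}$) and $\chi_{\lambda}(0)=\int a\,d\sigma_{\lambda}\neq 0$. Using $|\phi_{k}|\leq 1$ together with the transverse decay, one obtains
\begin{align*}
\sum_{n\in\Lambda_{L}(n_{0})}\frac{|\psi_{E}(n)|^{2}}{\langle n\rangle^{q}}~\lesssim~\sum_{|n^{(1)}|\leq CL}\,\sum_{n^{(2)}\in\ZZ^{d_{2}}}\frac{(1+|n^{(2)}|)^{-(d_{2}-1)}}{(1+|n^{(1)}|+|n^{(2)}|)^{q}}.
\end{align*}
Splitting the inner sum at $|n^{(2)}|=|n^{(1)}|$ and using $q>1$ evaluates it as $O\!\left((1+|n^{(1)}|)^{1-q}\right)$, so the whole expression is comparable to $\sum_{R=1}^{L}R^{d_{1}-q}$, which is $O(1)$ for $q>d_{1}+1$ and $O(L^{d_{1}+1-q})$ for $d_{1}<q<d_{1}+1$. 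This verifies \eqref{phibehav} with $\nu=0$ in the first range and $\nu=d_{1}+1-q$ in the second. After applying Lemmas~\ref{lemma:uniform1} and \ref{lemma:non_zero1} to extract a positive-measure subset of $\Ss_{0}+[-2d_{2},2d_{2}]$ on which $A_{E}$ is bounded and $\psi_{E}(n_{0})\neq 0$, Theorem~\ref{thm:main2} delivers $\mm_{q;n_{0}}(T)\geq CT^{1-\alpha\nu}$ for every $\alpha>1$, which specializes to the two bounds stated in the theorem upon letting $\alpha\downarrow 1$.

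\medskip

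The main technical obstacle is the stationary-phase step: one must verify that the Fermi surfaces of the $d_{2}$-dimensional lattice Laplacian are smooth hypersurfaces of nondegenerate Gaussian curvature for a.e.\ $\lambda\in[-2d_{2},2d_{2}]$ (the bad set being the closed null set of van Hove critical energies of $\xi\mapsto 2\sum_{j}\cos\xi_{j}$), and one must control the constant $C_{\lambda}$ measurably in $\lambda$ so that the uniformization of $A_{E}$ via Lemma~\ref{lemma:uniform1} survives on a set of positive measure. Once the dispersive decay $|\chi_{\lambda}(m)|\lesssim\langle m\rangle^{-(d_{2}-1)/2}$ is in place, the remainder of the argument is the elementary summation above followed by a direct appeal to Theorem~\ref{thm:main2}.
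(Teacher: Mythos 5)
Your overall architecture coincides with the paper's: build product generalized eigenfunctions $\psi_1\otimes\psi_2$ vanishing on $\Gamma$ (the sine product $\phi_k$ in the $d_1$ trimmed directions, a free generalized eigenfunction in the $d_2$ directions), verify the growth condition \eqref{phibehav} with $\nu=0$ for $q>d_1+1$ and $\nu=d_1+1-q$ for $d_1<q<d_1+1$, then invoke Lemmas \ref{lemma:uniform1} and \ref{lemma:non_zero1} and Theorem \ref{thm:main2} and let $\alpha\downarrow 1$. Your final summation is the same counting that appears in Proposition \ref{prop:trimmed}. Where you genuinely deviate is the transverse factor: you take a Fermi-surface measure $\chi_\lambda$ with a curvature cut-off and use stationary phase to get the pointwise decay $|\chi_\lambda(m)|\lesssim \langle m\rangle^{-(d_2-1)/2}$.

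That step is precisely where the proposal has a gap, and it is also more than is needed. You assert, but do not prove, that for a.e.\ $\lambda$ the level set $\Sigma_\lambda$ of $\xi\mapsto 2\sum_j\cos\xi_j$ contains a patch of nonvanishing Gauss--Kronecker curvature, with the stationary-phase constant $C_\lambda$ controlled measurably in $\lambda$; you flag this yourself as the ``main technical obstacle.'' For lattice Fermi surfaces this is delicate: e.g.\ for $d_2=2$, $\lambda=0$ the level set consists of straight segments, so no such patch exists at that energy, and in higher dimensions the curvature can degenerate on sizable subsets, so the ``bad set'' is not just the finitely many van Hove values and the claim needs a real argument. The paper's Lemma \ref{lem:free} sidesteps all of this: writing $\ZZ^{d_2}=\ZZ\times\ZZ^{d_2-1}$, it takes the Fourier transform of $e^{\i k\theta_1^{+}(\theta')}\chi_{\mathcal{S}_E}(\theta')$ and uses only Plancherel to get a free generalized eigenfunction $\psi_2$ with
\begin{align*}
\sup_{k\in\ZZ}\ \sum_{n\in\ZZ^{d_2-1}}|\psi_2(k,n)|^{2}<\infty\,,
\end{align*}
i.e.\ uniform $\ell^2$ bounds on codimension-one slices, with no pointwise decay, no curvature, and no stationary phase. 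This weaker input already gives \eqref{eq:trim_bd1}--\eqref{eq:trim_bd2}: bounding $\langle n\rangle^{-q}\le\langle (n^{(1)},m_1)\rangle^{-q}$ and summing out the slice variables leaves exactly $d_1+1$ ``extended'' directions, the same count your pointwise decay produces. So your route would be fine once the curvature and measurability input is actually supplied (restricting to the a.e.\ energies where it holds), but as written that key estimate is missing; the simplest repair is to replace the stationary-phase decay by the slice-wise Plancherel bound above, which is the paper's argument.
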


\begin{rem}\label{rmk:kl1}
We compare this result with the results obtained by applying a discrete version of Theorem \ref{thm:kl1} (\cite{KL}) to $\Gamma$-trimmed \Schr operators. 
Since the generalized eigenfunctions $\psi$ in \eqref{eq:gen_ef1} are products of bounded generalized eigenfunctions on $\ZZ^{d_1}$ with functions in $\ell^2(\ZZ^{d_2})$, the term in \eqref{eq:kl1} can be bounded by $R^{d_1-\gamma}$. If we could prove that the spectral measure $\mu_\psi^H$ is $\alpha$-continuous, then we would obtain the lower bound \eqref{eq:kl2} with $\gamma = d_1$. Unfortunately, we do not know how to prove the $\alpha$-continuity of the spectral measure. 
%\beq\label{eq:kl3}
%\eeq
\end{rem}

Furthermore, if we restrict to $\Gamma$-trimmed Anderson models, we have the following theorem.  

\begin{thm}
The $\Gamma$-trimmed Anderson model at high disorder exhibits dynamical localization on the complement of the spectrum of $H_{\Gamma^c}$ in the deterministic spectrum $\Sigma$. Furthermore, the transport moments are bounded from below, so the model also exhibits dynamical delocalization. There exist $n_0 \in \Gamma^c \subset \ZZ^d$, so that for $q>d_{1}+1 $, we have
   \begin{align}
      \mm_{q;n_0}(T)~\geq~C\,T\,,
   \end{align}
   and for $d_{1}< q< d_{1}+1 $
   \begin{align}
      \mm_{q;n_0}(T)~\geq~C\,T^{q-d_{1}-\varepsilon}\,,
   \end{align}
\end{thm}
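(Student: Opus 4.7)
The plan is to split the theorem into its two assertions and handle them separately, since the dynamical localization part is essentially an invocation of existing literature while the dynamical delocalization part is an immediate consequence of Theorem \ref{thm:trimmed+}.

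For the dynamical localization statement, I would cite the results of Elgart--Klein \cite{EK}, Elgart--Sodin \cite{ES}, and Cao--Rojas-Molina \cite{CR}. These establish, via a Wegner estimate valid on $\Sigma \setminus \sigma(H_{\Gamma^c})$ together with an initial scale estimate at large disorder, that the usual multiscale analysis (or the fractional moment method) applies in this spectral region. This produces Anderson localization: pure point spectrum with exponentially decaying eigenfunctions. To upgrade Anderson localization to \emph{dynamical} localization (that is, boundedness of all moments $\mathcal{M}_q$ on spectral intervals inside $\Sigma \setminus \sigma(H_{\Gamma^c})$), I would invoke the bootstrap arguments of Germinet--Klein \cite{GK-character}, which are compatible with the hypotheses verified in the cited papers. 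This part of the proof is thus essentially a citation chain; the content is in the cited works.

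For the dynamical delocalization statement, I would note that for a.e.\ realization $\omega$, the operator $H_\omega = H_0 + V_\omega$ is a $\Gamma$-trimmed \Schr operator with bounded potential supported on $\Gamma$ (bounded in $\omega$, not uniformly). Crucially, the generalized eigenfunctions $\psi$ constructed in \eqref{eq:gen_ef1} are supported on $\Gamma^c$ because the factors $\sin(\tfrac{\pi}{2\rho_i} k_i n_i)$ vanish whenever $n_i \in \rho_i \ZZ$, hence $V_\omega \psi \equiv 0$ deterministically. Therefore the family of bounded generalized eigenfunctions produced in Section \ref{sec:trimmed} is the \emph{same} family for every realization, with the same set $\Ss$ of generalized eigenvalues contained in $\sigma(H_\omega)$. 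Consequently the hypotheses of Theorem \ref{thm:trimmed+} are satisfied almost surely, and the two lower bounds on $\mathcal{M}_{q;n_0}(T)$ follow directly from that theorem, applied pointwise in $\omega$.

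The only delicate points I anticipate are the following. First, one should check that the point $n_0 \in \Gamma^c$ at which the delocalization bound holds can be chosen $\omega$-independent; this is clear because the eigenfunctions $\psi$ from \eqref{eq:gen_ef1} and the non-vanishing site $n_0$ provided by Lemma \ref{lemma:non_zero1} are deterministic. Second, one should verify that the two assertions coexist: localization holds on a deterministic interval inside $\Sigma \setminus \sigma(H_{\Gamma^c})$, whereas the set $\Ss$ providing the generalized eigenvalues for delocalization lies in $\sigma(H_{\Gamma^c})$ itself (since $\Ss = \Ss_0 + [-2d_2, 2d_2]$ is constructed from spectrum of $H_0$ restricted to $\Gamma^c$). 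Thus the two phenomena occur at different energies, which is precisely what makes the theorem interesting. The main obstacle is therefore not analytic but expository: formulating the almost-sure statement carefully so that the deterministic construction of bounded generalized eigenfunctions is clearly separated from the probabilistic localization statement on a disjoint energy region.
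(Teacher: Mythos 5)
Your proposal is correct and takes essentially the same route as the paper: the localization statement is delegated to the literature (the paper cites \cite{KiKr1}; you cite \cite{EK}, \cite{ES}, \cite{CR} together with \cite{GK-character}, but the move is the same), and the delocalization statement is obtained exactly as in the paper, by observing that the bounded generalized eigenfunctions of Section \ref{sec:trimmed} vanish on $\Gamma$ and are therefore independent of the random potential, so that Theorem \ref{thm:trimmed+} applies realization by realization with a deterministic $n_0 \in \Gamma^c$. Nothing further is needed.
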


The first part of the theorem on localization is proved in \cite{KiKr1}. For the second part, we use the fact that for each $E \in \mathcal{S}$, there is a bounded generalized eigenfunction. The proof is presented in section \ref{subsec:gamma_pfs1}

%%%%%%%%%%%%%%%%%%%%%%%%%%%%%%%%%%%%%%%%%%%%%%%%%%%%%%%%%%%%%%%%%%%%%%%%%%%%%%%%%%%

\subsection{Proofs for the $\Gamma$-trimmed models}\label{subsec:gamma_pfs1}

We turn to the proof of Theorem \ref{thm:trimmed+}. Given Theorem \ref{thm:main2} Theorem \ref{thm:trimmed+} follows from the following Proposition.

\begin{prop}\label{prop:trimmed}
   Suppose $\Gamma $ is as in \eqref{sublat} with $d_{2}\geq 2$. If $H$ is $\Gamma $-trimmed and
\begin{align}
   E~\in~\Ss_{0}~+~(-2d_{2},2d_{2})
\end{align}
then there is a solution $\psi $
 of $H\psi=E\psi $ such that for $q>d_{1}+1 $
\begin{align}
    \sum_{|n|\leq L}\frac{|\psi_{E}(n)|^{2}}{\langle n \rangle^{q}}~&\leq~C&&\text{for }q>d_{1}+1    \label{eq:trim_bd1} \\  
\text{and }\quad\sum_{|n|\leq L}\frac{|\psi_{E}(n)|^{2}}{\langle n \rangle^{q}}~&\leq~A\,L^{d_{1}+1-q}&&\text{for }d_{1}<q<d_{1}+1\,. \label{eq:trim_bd2}
\end{align}

\end{prop}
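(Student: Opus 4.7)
The plan is to build, for each $E \in \Ss_0 + (-2d_2, 2d_2)$, a generalized eigenfunction of tensor-product form $\psi_E(n',n'') = \chi(n') f(n'')$, where $\chi(n') = \prod_{i=1}^{d_1} \sin(\pi \ell_i n_i/\rho_i)$ is the sine product from the preceding Proposition (taking $k_i = 2\ell_i$, which makes $\chi$ vanish identically on the grid $W_\rho^{d_1}$), and $f:\ZZ^{d_2}\to\CC$ is a bounded generalized eigenfunction of the free Laplacian $H_0^{(d_2)}$ at the energy $E' := E - e(2\ell) \in (-2d_2, 2d_2)$. Since $\psi_E$ then vanishes on $\Gamma = W_\rho^{d_1}\times\ZZ^{d_2}$, we have $V\psi_E \equiv 0$, so $H\psi_E = H_0\psi_E$; combined with the computation $H_0^{(d_1)}\chi = e(2\ell)\chi$ and the tensor decomposition $H_0 = H_0^{(d_1)} \otimes I + I \otimes H_0^{(d_2)}$, this reduces the eigenvalue equation $H\psi_E = E\psi_E$ to the single requirement $H_0^{(d_2)} f = E' f$. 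The crucial additional requirement on $f$ is the decay estimate
\begin{align*}
   | f(n'') |~\leq~C\,\jap{n''}^{-(d_2-1)/2}\,,
\end{align*}
which, together with $|\chi|\leq 1$, yields $|\psi_E(n)|^2 \leq C^2 \jap{n''}^{-(d_2-1)}$.

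To construct such an $f$, the plan is to perform Fourier synthesis over the Fermi surface of $H_0^{(d_2)}$. Concretely, set
\begin{align*}
   f(n'')~=~\int_{\Sigma_{E'}} g(\xi)\, e^{i\xi\cdot n''}\,d\sigma(\xi)\,,
\end{align*}
where $\Sigma_{E'} = \{\xi\in\TT^{d_2} : 2\sum_j\cos\xi_j = E'\}$, $d\sigma$ is its surface measure, and $g$ is a smooth nonnegative cutoff supported on a patch of $\Sigma_{E'}$ on which the Gaussian curvature is nonvanishing. Each plane wave $e^{i\xi\cdot n''}$ with $\xi\in\Sigma_{E'}$ is itself a generalized eigenfunction of $H_0^{(d_2)}$ at energy $E'$, so $f$ inherits the same property, and the classical oscillatory-integral estimate for surface-carried measures on smooth hypersurfaces of nonvanishing Gaussian curvature delivers the decay $O(\jap{n''}^{-(d_2-1)/2})$. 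The main obstacle I anticipate is the existence of degenerate energies (for example, $E'=0$ in $d_2=2$, where $\Sigma_0$ is a union of straight lines carrying no curvature); such exceptional $E'$ would require either a refined oscillatory-integral estimate tuned to the finite-order degeneracy of the phase, or a separate ad hoc construction.

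With the eigenfunction in hand, the remaining step is the weighted-sum estimate. Split $\Lambda_L(n_0)$ into the two regions $\{|n'|\ge |n''|\}$ and $\{|n''|> |n'|\}$ and use $|n| = \max(|n'|,|n''|)$, so that $\jap{n}$ is comparable to $\jap{n'}$ or $\jap{n''}$ in the respective regions. In the first region, summing $\jap{n''}^{-(d_2-1)}$ over $|n''|\le R:=|n'|$ gives $O(R)$ since $d_2 \geq 2$, and then summing $R \cdot R^{d_1-1}\cdot R^{-q}$ over $R\leq L$ reduces to $\sum_R R^{d_1-q}$. In the second region, the bookkeeping interchanges the roles of $n'$ and $n''$ but produces the same series. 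This series is bounded uniformly in $L$ when $q > d_1+1$, proving \eqref{eq:trim_bd1}, and is $O(L^{d_1+1-q})$ when $d_1 < q < d_1+1$, proving \eqref{eq:trim_bd2}.
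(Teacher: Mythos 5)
Your overall architecture (tensor eigenfunction $\psi=\chi\otimes f$ vanishing on $\Gamma$ via the even choice $k_i=2\ell_i$, reduction to the free Laplacian on $\ZZ^{d_2}$, and the final two-region bookkeeping giving $\sum_R R^{d_1-q}$) matches the paper, and the bookkeeping is fine. The genuine gap is in the construction of $f$: you need the pointwise decay $|f(n'')|\lesssim \jap{n''}^{-(d_2-1)/2}$ from stationary phase on a patch of the Fermi surface $\Sigma_{E'}$ with nonvanishing Gaussian curvature, but for some energies in the required range no such patch exists. The proposition demands every $E'\in(-2d_2,2d_2)$; for $d_2=2$ and $E'=0$ the level set $\cos\xi_1+\cos\xi_2=0$ consists entirely of straight lines, so there is no curved patch at all, and the corresponding wave packets (e.g.\ $f(n_1,n_2)=(-1)^{n_2}\hat g(n_1-n_2)$) do not decay in the direction $n_1+n_2$. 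You flag this yourself but leave the fix (``refined oscillatory-integral estimate or ad hoc construction'') unexecuted, so as written the proof does not cover all the energies in the statement.

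The paper avoids curvature entirely, and this is the instructive difference: pointwise decay is much more than is needed. It suffices to have a solution of $h_0 f=E'f$ on $\ZZ^{d_2}$ with the mixed bound
\begin{align*}
\sup_{m_1\in\ZZ}\ \sum_{m_2,\ldots,m_{d_2}}|f(m_1,\ldots,m_{d_2})|^2<\infty ,
\end{align*}
which is produced for \emph{every} $E'$ in the open band by integrating plane waves over an open set $\mathcal S_{E'}$ of transverse momenta $(\theta_2,\ldots,\theta_{d_2})$ with the first momentum $\theta_1^+(\theta_2,\ldots,\theta_{d_2})$ solved from the dispersion relation; since $|e^{im_1\theta_1^+}|=1$, Plancherel in the transverse variables gives the bound uniformly in $m_1$ (Lemma \ref{lem:free}). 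This weaker bound still yields \eqref{eq:trim_bd1}--\eqref{eq:trim_bd2}: using $\jap{n}\ge\jap{(n',m_1)}$ and summing out $(m_2,\ldots,m_{d_2})$ first, the weighted sum is controlled by $\sum_{|(n',m_1)|\le L}\jap{(n',m_1)}^{-q}$ over $\ZZ^{d_1+1}$, which is $O(1)$ for $q>d_1+1$ and $O(L^{d_1+1-q})$ for $d_1<q<d_1+1$. If you replace your curvature step by this Plancherel argument (or prove the flat-energy cases separately with the explicit line-wave packets, which do satisfy the mixed bound), your proof closes. A last cosmetic point: restricting to even $k_i$ proves the statement only for the sub-family of $\Ss_0$ generated by even indices; the same implicit restriction is needed in the paper's construction for $\psi$ to vanish on $\Gamma$, so this is a matter of the paper's conventions rather than a defect of your argument.
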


To prove this Proposition we shall use the following Lemma.
\begin{lemma}\label{lem:free}
   Consider the free Hamiltonian $H_{0}$ on $\ell^{2}(\ZZ\times\ZZ^{m})$ with $m\geq 1$. For any $E\in\big(-2(m+1),2(m+1)\big)$ there is a solution $\psi$ of $H_{0}\psi=E\psi$ such that
\begin{align}\label{free}
   \sup_{k\in\ZZ}\,\sum_{n\in\ZZ^{m}}\,|\psi(k,n)|^{2}~<~\infty\,.
\end{align}
\end{lemma}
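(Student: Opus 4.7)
My plan is to use a partial Fourier transform in the $\ZZ^m$ variables to reduce the eigenvalue equation to a parametrized family of one-dimensional constant-coefficient difference equations in $k$, then build $\psi$ as a superposition of oscillating solutions over the slice of phase space where these 1D recurrences have bounded solutions.

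Concretely, write $\tilde\psi(k,q) = \sum_{n\in\ZZ^m} e^{-in\cdot q}\psi(k,n)$ for $q\in\TT^m$. The equation $H_0\psi = E\psi$ becomes, after this partial transform,
\begin{equation*}
  \tilde\psi(k+1,q) + \tilde\psi(k-1,q) = \bigl(E - S(q)\bigr)\,\tilde\psi(k,q),
  \qquad S(q):=2\sum_{j=1}^{m}\cos q_j .
\end{equation*}
For fixed $q$ with $|E-S(q)|<2$, the recurrence has the oscillatory solutions $e^{\pm i k\theta(q)}$ with $\theta(q):=\arccos\bigl((E-S(q))/2\bigr)\in(0,\pi)$. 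The set $\Omega:=\{q\in\TT^m:|E-S(q)|<2\}$ is an open subset of $\TT^m$, and since $E\in(-2(m+1),2(m+1))$, the interval $(E-2,E+2)$ meets $(-2m,2m)=\mathrm{int}\,S(\TT^m)$, so continuity of $S$ gives $\Omega\neq\emptyset$.

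Next I would pick $\chi\in C_c^{\infty}(\TT^m)$ with $\mathrm{supp}\,\chi\subset\Omega$ and $\chi\not\equiv 0$. On the support of $\chi$ the function $\theta$ is smooth (we are bounded away from the ramification points $|E-S(q)|=2$), so
\begin{equation*}
  \tilde\psi(k,q) := \chi(q)\,e^{ik\theta(q)}
\end{equation*}
is a smooth function of $q$ for each $k$, and solves the reduced recurrence pointwise in $q$. Defining $\psi(k,n)$ as the inverse Fourier transform of $\tilde\psi(k,\cdot)$ then produces a solution of $H_0\psi = E\psi$ on $\ZZ\times\ZZ^m$.

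Finally, Plancherel gives $\sum_{n\in\ZZ^m}|\psi(k,n)|^2 = (2\pi)^{-m}\|\chi\|_{L^2(\TT^m)}^2$, a quantity independent of $k$, so \eqref{free} holds with supremum $(2\pi)^{-m}\|\chi\|_{L^2}^2<\infty$. The one genuinely delicate point is making sure $\theta(\cdot)$ is smooth where we need it: this is why $\chi$ must be supported in the \emph{open} set $\Omega$, and it is also why the strict condition $E\in(-2(m+1),2(m+1))$ (rather than the closed interval) matters --- at the endpoints $\Omega$ would be empty, and near the boundary $|E-S(q)|=2$ the map $\theta$ has a square-root singularity that would spoil the smooth choice of branch.
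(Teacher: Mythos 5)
Your construction is essentially the paper's own proof: the paper also builds the solution as a superposition of plane waves $e^{ik\theta_1(q)}e^{in\cdot q}$ over the set $\{q:|E-S(q)|<2\}$ with $\theta_1(q)=\arccos\bigl((E-S(q))/2\bigr)$, and obtains \eqref{free} from Plancherel in the $\ZZ^m$ variables. The only (immaterial) difference is that the paper uses the characteristic function of that set as the amplitude, whereas you use a smooth cutoff $\chi\in C_c^\infty$ supported in it.
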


\begin{proofn}[\,(Proposition \ref{prop:trimmed})]  
   Any $E\in\Ss_{0}+(-2d_{2},2d_{2})$ can be written as
\begin{align}
   E~=~2\sum_{i=1}^{d_{1}} \cos{(\frac{\pi}{2\rho_{i}}k_{i})}~+~e
\end{align}
with $k_{i}\in\ZZ\setminus\{ 0 \}$ and $e\in\left(-2d_{2},2d_{2}\right) $.
Let $h_{0}$ denote the discrete Laplacian on $\ZZ^{d_{2}}$.
Since $d_{2}\geq 2$ by Lemma \ref{lem:free} there is a solution $\psi_{2}:\ZZ\times\ZZ^{d_{2}-1} $ of $h_{0}\psi_{2}=e\psi_{2}$
satisfying
\begin{align}
   \sup_{m_{1}}\sum_{m_{2},\ldots,m_{d_{2}}}\;|\psi_{2}(m_{1},\ldots,m_{d_{2}})|^{2}~<~\infty\,.
\end{align}
Set
\begin{align}
   \psi_{1}(n_{1},\ldots,n_{d_{1}})~=~\prod_{i=1}^{d_{1}}\,\sin{(\frac{\pi}{2\rho_{i}}k_{i}n_{i})}\,,
\end{align}
then
\begin{align}
   \Psi(n_{1},\ldots,n_{d_{1}},m_{1},\ldots,m_{d_{2}})~=~\psi_{1}(n_{1},\ldots,n_{d_{1}})\,\psi_{2}(m_{1},\ldots,m_{d_{2}})
\end{align}
solves $H_{0}\Psi=H\Psi=E\Psi$.
\end{proofn}

\medskip

\noindent
We now present the proof of Lemma \ref{lem:free}.

\medskip

\begin{proofn}[\,(Lemma \ref{lem:free})]
For $E\in\big(-2(m+1),2(m+1)\big)$, there are $\theta_{2},\ldots,\theta_{d_{m+1}}\in(0,2\pi)$ such that
\beq\label{eq:bound10}
\left|   E -  \sum_{j=2}^{m+1} 2 \cos \theta_j \right| < 2 .
%prod_{j=1}^{m+1}\,\cos(\frac{\pi}{2}\kappa_{j})
\eeq
Let $\mathcal{S}_E \subset \prod_{j=2}^{m+1} \TT^m$ be the set of $(\theta_2, \ldots, \theta_{m+1})$ so that \eqref{eq:bound10} is satisfied. 
We define $\theta_1^+  := \theta_1( \theta_2, \ldots, \theta_{m+1})  \in (0, \pi)$ so that 
\beq
2 \cos \theta_1^+ = E -  \sum_{j=2}^{m+1} 2 \cos \theta_j ,
\eeq
that is,  
\beq\label{eq:inv_cos1}
\theta_1^+ = \cos^{-1} \left( \frac{E}{2} -  \sum_{j=2}^{m+1}  \cos \theta_j \right) ,
\eeq
where $\cos^{-1} : (-1,1) \rightarrow (0, \pi)$, since the argument in \eqref{eq:inv_cos1} has absolute value less than one.
Let ${\mathcal{S}_E}$ be the open ball in $\TT^m$ described above so that the characteristic function $\chi_{\mathcal{S}_E} \in L^2 ( \TT^m)$. 
We construct a a function $\psi_E (k,n)$ by
 \beq
 \Psi_E(k,n) :=  \int_{\TT^m} e^{i k \theta_1^+} e^{i \sum_{j=2}^{m+1} n_j \theta_j} \chi_{\mathcal{S}_E}( \theta_2, \ldots, \theta_{m+1} ) \prod_{\ell=2}^{m+1} ~d \theta_\ell .
 \eeq
 It follows from the fact that the discrete Laplacian $H_0$ acts by multiplication by $\sum_{j=1}^{m+1} 2 \cos \theta_j$, that $H_0 \psi_E = E  \psi_E$, so $\psi_E$ is a generalized eigenfunction.  
The bound \eqref{free} follows from the Plancherel Theorem on $\ell^2( \ZZ^m)$. 
%For $E\in\big(-2m,2m)\big)$, there are $\kappa_{1},\ldots,\kappa_{d_{m+1}}
%\in(0,2\pi)$ such that
%\begin{align}
%   E~=~2\,\prod_{j=1}^{m+1}\,\cos(\frac{\pi}{2}\kappa_{j})
%\end{align}
\end{proofn}

\medskip

Given these two preparatory  results, we prove the main theorem, Theorem 
\ref{thm:trimmed+}, on the existence of nontrivial transport for the periodic $\Gamma$-trimmed Anderson model.  

\medskip

\begin{proofn}[\,(Theorem \ref{thm:trimmed+})]  
By Proposition \ref{prop:trimmed}, for $E~\in~\Ss_{0}~+~(-2d_{2},2d_{2})$, there exists a generalized eigenfunction $\psi_E$ of $H$
satisfying the bounds \eqref{eq:trim_bd1} and \eqref{eq:trim_bd2}, depending on the value of $q$. The result follows from Theorem \ref{thm:mainres}.
\end{proofn}

%%%%%%%%%%%%%%%%%%%%%%%%%%%%%%%%%%%%%%%%%%%%%%%%%%%%%%%%%%%%%%%%%%%%%%%%%%%%%%%%

\section{Eigenfunctions and Spectral Type}\label{sec:spectralType}

In this section, we relate the behavior of solutions of the eigenvalue equation of a  discrete Schr\"odinger operator to the spectral type of the operator. As we have seen, self-adjoint \Schr operators
coming from difference operators on $\ell^2({\mathbb Z}^d)$ may have solutions
corresponding to eigenvalue equations associated with a number $E$, which
are not in $\ell^2({\mathbb Z}^d)$. Theorem \ref{thm:mainres} relates the growth properties of such solutions to the transport properties of the operator. In this section, we relate properties of generalized eigenfunctions to the spectral properties of the \Schr operator. 

For example, we know that if the solutions are in
$\ell^2({\mathbb Z}^d)$, then $E$ is in the point spectrum of the 
operator.        
In the case that the solutions are not in $\ell^2({\mathbb Z}^d)$, 
there is a body of literature discussing the relationship between the behavior of such solutions and the spectrum of the operator.  We already mentioned  the results of 
Berezansky \cite{Be} and Sch'nol \cite{Schnol} on polynomially bounded generalized eigenfunctions. 
The results of Deift-Simon \cite{DS},  
Gilbert-Pearson \cite{GP},  Jitomirskaya-Last \cite{JL}, Christ-Kisalev-Last \cite{CKL} on approximate eigenfunctions are a few that study different aspects
of such a relationship. % {\color{red} give a full list of references}  

%In an earlier section we studied the transport behaviour of the Quantum 
%Mechanical system based on the behaviour of solutions.  In this section
%we look at the spectral type that the solutions indicate. 

%We denote by $\mu_{k, H}$, the spectral measure of $H$ associated with the vector $\delta_k, ~~ k \in {\mathbb Z}^d$, where $\{ \delta_k ~|~ k \in \ZZ^d \}$ is the standard orthonormal basis. We denote by $\mu_{k, H, \beta}$, the singular part of $\mu_{k, H}$ that is absolutely continuous 
%with respect to the Hausdorff measure $h^\beta$, see Last
% \cite{last} or Rogers \cite{Ro} or Demuth-Krishna \cite[Theorm 1.1.5]{DK}.
%Then we have the following theorem.   

%{\color{red} The following might work for non self-adjoint operators, so 
%need to check if the definitions hold and then formulate the final theorem
%accordingly.  I would think that this theory is likely to solve some spectral
%questions for non self-adjoint operators}.

%\begin{definition}
%\begin{itemize}
%\item We say that $\psi$ is a generalized eigenfunction of $H$ corresponding to the generalized eigenvalue $E$ 
%if the equation 
%$$
%(H\psi)(n) = E\psi(n),~~ n \in {\mathbb Z}^d
%$$
%is satisfied. 
%
%\item We also say that $H$ is Combes-Thomas type if the following
%bound holds for all $z \in \rho(H)$. We set $d(H,z) = dist(z, \sigma(H))$.
%\begin{equation}\label{eqnk2}
%|(H- z)^{-1}(n, m) | \leq \frac{C}{d(H,z)} e^{- D d(H,z)  | n - m | }, ~~ for ~ all ~ m \in {\mathbb Z}^d.
%\end{equation}
%\end{itemize}
%\end{definition}

\begin{thm}\label{thmk1}
Let $H= - \Delta + V$ be a bounded self-adjoint \Schr operator on $\ell^2({\mathbb Z}^d)$.  
%Suppose $H$ satisfies the Combes-Thomas estimate \eqref{eq:ct1} and  
Suppose that $\psi$ is a generalized eigenfunction of $H$ with generalized eigenvalue $E$,  and let $n \in \ZZ^d$ be such that $\psi(n) \neq 0$. %, for some $n \in \ZZ^D$.  
Then, for any $\alpha > 1$, and any $0 < \gamma < 1$, 
\begin{enumerate}
\item The condition on the generalized eigenfunction $\psi$:
$$
\liminf_{\varepsilon \rightarrow 0} \varepsilon^{\gamma} \sum_{|m - n|\leq [\varepsilon^{-\alpha}]} |\psi(m)|^{2} =0,  
$$
implies that 
$$
\limsup_{\varepsilon \rightarrow 0} ~ \varepsilon^{1-\gamma} Im \langle \delta_n, ~ (H - E - i\varepsilon)^{-1} \delta_n\rangle = \infty. 
$$
\item The condition on the generalized eigenfunction $\psi$:
$$
\limsup_{\varepsilon \rightarrow 0} \varepsilon^{\gamma} \sum_{|m - n|\leq [\varepsilon^{-\alpha}]} |\psi(m)|^{2} < \infty,  
$$
implies 
$$
\liminf_{\varepsilon \rightarrow 0} ~ \varepsilon^{1-\gamma} Im \langle \delta_n, ~ (H - E - i\varepsilon)^{-1} \delta_n\rangle > 0. 
$$ 
\end{enumerate}
\end{thm}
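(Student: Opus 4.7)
My approach would follow very closely the strategy of Theorem \ref{thm:mainres}, but without any weight function $\varphi$, and then convert the resulting estimate on $\sum_m |G_{E+i\varepsilon}(n,m)|^2$ into an estimate on $\operatorname{Im}\langle\delta_n,R_H(E+i\varepsilon)\delta_n\rangle$ via the standard identity
\begin{equation*}
\sum_{m \in \ZZ^d} |G_{E+i\varepsilon}(n,m)|^2 \;=\; \|R_H(E+i\varepsilon)\delta_n\|^2 \;=\; \frac{1}{\varepsilon}\,\operatorname{Im}\langle \delta_n,\, R_H(E+i\varepsilon)\,\delta_n\rangle,
\end{equation*}
which follows from the first resolvent identity and self-adjointness of $H$.

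Concretely, I would start from the representation in Proposition \ref{prop:basic} with $z=E+i\varepsilon$ and $L=[\varepsilon^{-\alpha}]$. Since $E-z=-i\varepsilon$, evaluating at the fixed site $n$ with $\psi(n)\neq 0$ gives
\begin{equation*}
|\psi(n)| \;\leq\; \varepsilon \sum_{m \in \Lambda_L(n)} |G_z(n,m)|\,|\psi(m)| \;+\; \sum_{m \in \overline{\partial}\Lambda_L(n)} |G_z(n,m)|\,|R_{L,\psi}(m)| \;=:\; Q(L)+S(L).
\end{equation*}
The remainder $S(L)$ is controlled by the Combes-Thomas bound (Theorem \ref{thm:CombesT}) combined with Cauchy-Schwarz on the boundary layer: $S(L) \leq \tfrac{2d}{\varepsilon} e^{-c\varepsilon L} |\overline{\partial}\Lambda_L|^{1/2} \big(\sum_{m\in\overline{\partial}\Lambda_L(n)}|\psi(m)|^2\big)^{1/2}$. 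Under either hypothesis the inner sum is at most of polynomial order in $\varepsilon^{-1}$ (in part (1) along the distinguished sequence $\varepsilon_k\to 0$; in part (2) uniformly in $\varepsilon$), while the prefactor $e^{-c\varepsilon L}=e^{-c\varepsilon^{1-\alpha}}$ decays superpolynomially because $\alpha>1$. Hence $S(L)\to 0$ and for $\varepsilon$ sufficiently small (resp.\ sufficiently small along the sequence), $S(L)\leq \tfrac{1}{2}|\psi(n)|$, so $\tfrac{1}{2}|\psi(n)|\leq Q(L)$.

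Now I would apply Cauchy-Schwarz to $Q(L)$ \emph{without} weight, extending the sum over $m$ in the $\ell^2$-factor to all of $\ZZ^d$:
\begin{equation*}
\tfrac{1}{4}|\psi(n)|^2 \;\leq\; \varepsilon^2 \Big(\sum_{m\in\ZZ^d}|G_z(n,m)|^2\Big) \Big(\sum_{|m-n|\leq L}|\psi(m)|^2\Big) \;=\; \varepsilon\,\operatorname{Im}\langle\delta_n,R_H(z)\delta_n\rangle \sum_{|m-n|\leq[\varepsilon^{-\alpha}]}|\psi(m)|^2,
\end{equation*}
using the identity above. Rearranging and multiplying by $\varepsilon^{-\gamma}$ yields
\begin{equation*}
\varepsilon^{1-\gamma}\,\operatorname{Im}\langle\delta_n,R_H(E+i\varepsilon)\delta_n\rangle \;\geq\; \frac{|\psi(n)|^2}{4\,\varepsilon^\gamma \sum_{|m-n|\leq[\varepsilon^{-\alpha}]}|\psi(m)|^2}.
\end{equation*}
Both conclusions now follow by direct inspection: if the denominator $\varepsilon^\gamma\sum|\psi(m)|^2$ tends to zero along a sequence $\varepsilon_k\downarrow 0$ (hypothesis of part (1)), the right-hand side blows up along that sequence, forcing $\limsup=\infty$; if instead the denominator stays bounded as $\varepsilon\to 0$ (hypothesis of part (2)), the right-hand side is bounded below by a positive constant, forcing $\liminf>0$.

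The only step requiring real care is the remainder estimate for $S(L)$, since the theorem does not assume \emph{a priori} polynomial growth of $\psi$. The fix is that the very hypotheses on $\sum_{|m-n|\leq[\varepsilon^{-\alpha}]}|\psi(m)|^2$ already supply the polynomial control on $|\psi|^2$ over $\Lambda_L$, and this sum of course dominates the boundary sum $\sum_{m\in\overline{\partial}\Lambda_L(n)}|\psi(m)|^2$. The Combes-Thomas exponential $e^{-c\varepsilon^{1-\alpha}}$ then beats every polynomial in $\varepsilon^{-1}$, which is exactly where the assumption $\alpha>1$ is used.
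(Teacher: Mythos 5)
Your proposal is correct and takes essentially the same route as the paper: the paper's proof simply reruns the estimates \eqref{eq:s_L1}--\eqref{eq:s_L3} from Theorem \ref{thm:mainres} (Proposition \ref{prop:basic} plus Combes--Thomas plus Cauchy--Schwarz at scale $L=[\varepsilon^{-\alpha}]$) with $\varphi\equiv 1$, and converts $\varepsilon^{2}\sum_{m}|G_{E+i\varepsilon}(n,m)|^{2}$ into $\varepsilon\,\operatorname{Im}\langle\delta_n,(H-E-i\varepsilon)^{-1}\delta_n\rangle$ exactly as you do. If anything, your handling of the boundary term $S(L)$ under the weaker $\liminf$/$\limsup$ hypotheses (polynomial control supplied by the hypotheses themselves, beaten by $e^{-c\varepsilon^{1-\alpha}}$ since $\alpha>1$) is spelled out more explicitly than in the paper's terse argument.
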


%%%%%%%%%%%%%%%%%%%%%%%%%%%%%%%%%%%%%%
%Then, for any $0 < \gamma < 1$, the condition 
%$$
%\liminf_{\varepsilon \rightarrow 0} \varepsilon^{1 - \gamma} \sum_{|m - n|\leq [\varepsilon^{-\alpha}]} |\psi(m)|^{2} =0,  
%$$
%implies 
%$$
%\limsup_{\varepsilon \rightarrow 0} ~ \varepsilon^{\gamma} Im \langle \delta_n, ~ (H - E - i\varepsilon)^{-1} \delta_n\rangle = \infty. 
%$$
%\end{thm}

\begin{proofn}
We use the inequalities \eqref{eq:s_L1}, \eqref{eq:s_L2}, and \eqref{eq:s_L3}, with $\varphi = 1$, to get the bound,
$$
\Big(\sum_{|n|\leq \varepsilon^{-\alpha}}\frac{|\psi(n)|^{2}}{\varphi(n)}\Big)
\leq  \varepsilon^{1- \gamma }\sum_{|n-m|\leq [\varepsilon^{-\alpha}]} |\psi(m)|^{2}.
$$
Then the inequality \eqref{eq:s_L3} gives, 
\begin{equation}\label{eqnKri20}
|\psi(n)|^2  \leq \varepsilon^\gamma Im \Big \langle \delta_n, (H - E - i\varepsilon )^{-1} \delta_n \Big) ~ 
\Big( \varepsilon^{ 1- \gamma }  \sum_{|n-m|\leq [\varepsilon^{-\alpha}]}|\psi(m)|^{2}\Big).
\end{equation}
The above inequality can be restated as: 
\begin{equation}\label{eqnKri21}
\frac{|\psi(n)|^2}{\varepsilon^{ 1- \gamma  } \displaystyle{ \sum_{|n-m|\leq [\varepsilon^{-\alpha}]} | \psi(m) |^{2}} } \leq \varepsilon^\gamma Im \Big( \langle \delta_n, (H - E - i\varepsilon )^{-1} \delta_n \rangle \Big). 
\end{equation}
%and
%\begin{equation}\label{eqnKri22}
%\frac{|\psi(n)|^2}{\varepsilon^\gamma Im \Big \langle \delta_n, (H - E - i\varepsilon )^{-1} \delta_n\rangle \Big) }    \leq \varepsilon^{ 1- \gamma } \Big(\sum_{|n-m|\leq [\varepsilon^{-\alpha}]} |\psi(m)|^{2}\Big).
%\end{equation}
Since $\psi(n)$ is not zero, the first part of the theorem follows. For the second part, we 
write 
 \begin{equation}\label{eqnKri22}
\frac{|\psi(n)|^2}{\varepsilon^\gamma Im \Big( \langle \delta_n, (H - E - i\varepsilon )^{-1} \delta_n \rangle \Big)}  \leq  \varepsilon^{ 1- \gamma }  
\displaystyle{ \sum_{|n-m|\leq [\varepsilon^{-\alpha}]} | \psi(m) |^{2}}  .
%\leq \varepsilon^\gamma Im \Big( \langle \delta_n, (H - E - i\varepsilon )^{-1} \delta_n \rangle \Big). 
\end{equation}
\end{proofn}

\medskip
\begin{rem}
We note that in Theorem \ref{thmk1} we start with solutions of eigenvalue equations
of the operator $H$ and do not consider approximate generalized eigenvectors. If we do take approximate eigenvectors to begin with, then we could take $\alpha = \gamma=1$ in the theorem and recover the results of Kiselev-Last \cite{KL}.  Our results should be thought of as lower bounds on the solutions of eigenvalue equations
associated with $H$.  Hence in the following corollary, a spectral type of a
part of the spectrum implies lower bounds on solutions of eigenvalue equations for that part of the spectrum.
\end{rem}
\medskip

We relate the behavior of the boundary-values of the resolvent to the spectral measures.
For any self-adjoint operator $H$ on $\ell^2({\mathbb Z}^d)$, and for any $n \in \ZZ^d$ and $\gamma \in (0,1)$, we define 
the set $S_{\gamma,n}$ as the set of $E$ for which there is a generalized eigenfunction $\varphi_E$ of $H$, with $H \varphi_E = E \varphi_E$, with $\varphi_E(n) \neq 0$, %associated with $E$ 
whose local $\ell^2$ norm satisfies a certain growth condition:
%restricted to cubes at a rate
%when the volume of the cube goes to infinity, that is,    
\begin{eqnarray}\label{eqnk1}
S_{n, \gamma} & :=  & \left\{  E \in \RR ~:~  \exists \varphi_E, ~ \varphi_E(n) \neq 0, H \varphi_E = E \varphi_E,~ and ~\exists ~   \alpha > 1 ,  \right. \nonumber \\
 & & \left. ~ so ~ that ~ \varphi_E ~ satisfies ~ condition ~ \mathcal{G}_E \right\} %with ~ \right.  \nonumber \\
\eea 
where the growth condition $\mathcal{G}_E$ is
 \beq\label{eq:growth1}
\mathcal{G}_E: ~~  \liminf_{\varepsilon \rightarrow 0}  \varepsilon^{1-\gamma}\displaystyle{ \sum_{|n-m|\leq [\varepsilon^{-\alpha}]} |\varphi_E(m)|^2 = 0}   .
 \eeq
 
 %\bea
 %\right\} .
%\end{eqnarray}

We denote by $\mu_{k, H}$, the spectral measure of $H$ associated with the vector $\delta_k, ~~ k \in {\mathbb Z}^d$, where $\{ \delta_k ~|~ k \in \ZZ^d \}$ is the standard orthonormal basis. We denote by $\mu_{k, H, \beta}$, the singular part of $\mu_{k, H}$ that is absolutely continuous 
with respect to the Hausdorff measure $h^\beta$, see Last
 \cite{last} or Rogers \cite{Ro} or Demuth-Krishna \cite[Theorm 1.1.5]{DK}.
%Then we have the following theorem.   
An immediate corollary of the above theorem is the following result on the spectral measures.

\begin{cor}\label{cor1}
Let $H$ be a self-adjoint bounded \Schr operator on $\ell^2({\mathbb Z}^d)$ as in Theorem \ref{thmk1}.
Let $\gamma \in (0, ~1)$.  Then the ${1-\gamma}$ Hausdorff measure of 
$S_{n, \gamma}$ is zero. This means that $\mu_{n,H,\beta}$, the $\beta$-singular part of $\mu_{n, H}$, is zero for any $\beta > 1- \gamma$. 
\end{cor}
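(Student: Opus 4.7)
The corollary would follow by combining Theorem~\ref{thmk1}(1) with the classical Rogers--Taylor comparison between boundary values of the Borel transform of a measure and its upper Hausdorff $s$-densities.

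First, I would fix $E\in S_{n,\gamma}$. Observing that the growth condition $\mathcal G_E$ in \eqref{eq:growth1} is precisely the hypothesis of Theorem~\ref{thmk1}(1) after the parameter substitution $\gamma\mapsto 1-\gamma$, and using $\varphi_E(n)\ne 0$, that theorem produces
\begin{equation*}
\limsup_{\varepsilon\to 0^{+}}\varepsilon^{\,\gamma}\,\mathrm{Im}\,\langle\delta_n,(H-E-i\varepsilon)^{-1}\delta_n\rangle=+\infty.
\end{equation*}

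Second, I would convert this resolvent blow-up into a statement on the Hausdorff density of $\mu_{n,H}$ at $E$. Writing the matrix element as a Borel transform,
\begin{equation*}
\mathrm{Im}\,\langle\delta_n,(H-E-i\varepsilon)^{-1}\delta_n\rangle=\int_{\RR}\frac{\varepsilon}{(\lambda-E)^{2}+\varepsilon^{2}}\,d\mu_{n,H}(\lambda),
\end{equation*}
the Rogers--Taylor/Last comparison (\cite{last},\cite{Ro},\cite[Theorem~1.1.5]{DK}) asserts that for every $s\in[0,1]$ there exist universal positive constants $a_s,b_s$ with
\begin{equation*}
a_{s}\,D^{s}_{\mu_{n,H}}(E)\;\le\;\limsup_{\varepsilon\to 0}\varepsilon^{\,1-s}\,\mathrm{Im}\,\langle\delta_n,(H-E-i\varepsilon)^{-1}\delta_n\rangle\;\le\;b_{s}\,D^{s}_{\mu_{n,H}}(E),
\end{equation*}
where $D^{s}_{\mu}(E):=\limsup_{\varepsilon\to 0}\mu((E-\varepsilon,E+\varepsilon))/(2\varepsilon)^{s}$. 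Taking $s=1-\gamma$, the first step forces $D^{1-\gamma}_{\mu_{n,H}}(E)=\infty$ for every $E\in S_{n,\gamma}$, so
\begin{equation*}
S_{n,\gamma}\;\subset\; T_{1-\gamma}(\mu_{n,H}):=\{E:D^{1-\gamma}_{\mu_{n,H}}(E)=\infty\}.
\end{equation*}

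Third, I would invoke the Rogers--Taylor theorem: for any finite positive Borel measure $\mu$ and any $s\in[0,1]$, $h^{s}(T_{s}(\mu))=0$, and the restriction of $\mu$ to $T_{s}(\mu)$ is exactly its $s$-singular part. Applied with $s=1-\gamma$, the inclusion $S_{n,\gamma}\subset T_{1-\gamma}(\mu_{n,H})$ combined with $h^{1-\gamma}(T_{1-\gamma}(\mu_{n,H}))=0$ yields $h^{1-\gamma}(S_{n,\gamma})=0$, the first assertion. Since a subset of $\RR$ with vanishing $h^{1-\gamma}$ measure has Hausdorff dimension at most $1-\gamma$ and hence vanishing $h^{\beta}$ measure for every $\beta>1-\gamma$, the Rogers--Taylor decomposition (cf.~\cite[Theorem~1.1.5]{DK}) locates any $\beta$-singular component of $\mu_{n,H}$ on a set of $h^{\beta}$ measure zero; the $\beta$-continuous part defining $\mu_{n,H,\beta}$ in the notation of the corollary therefore vanishes, giving $\mu_{n,H,\beta}=0$ for every $\beta>1-\gamma$.

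\textbf{Main obstacle.} The analytic content has already been extracted in Theorem~\ref{thmk1}, so the remainder is classical Hausdorff-measure bookkeeping. The two inputs to quote precisely are the Rogers--Taylor/Last comparison between the imaginary part of the diagonal resolvent and $D^{s}_{\mu_{n,H}}(E)$, and the identity $h^{s}(T_{s}(\mu))=0$. The only genuine subtlety is the parameter swap $\gamma\leftrightarrow 1-\gamma$ between $\mathcal G_E$ and the hypothesis of Theorem~\ref{thmk1}(1), which is resolved by a direct substitution.
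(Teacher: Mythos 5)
Your handling of the first assertion is correct and is, in substance, the argument the paper intends by its one-line citation of Rogers--Taylor (the paper gives no details beyond referring to \cite{Ro}, \cite{last}, \cite{DK}): apply Theorem \ref{thmk1}(1) with $\gamma$ replaced by $1-\gamma$, which matches the exponent in the condition $\mathcal{G}_E$ of \eqref{eq:growth1} and, since $\varphi_E(n)\neq 0$, yields $\limsup_{\varepsilon\to 0}\varepsilon^{\gamma}\,\mathrm{Im}\langle\delta_n,(H-E-i\varepsilon)^{-1}\delta_n\rangle=\infty$ for every $E\in S_{n,\gamma}$; then the standard comparison between $\mathrm{Im}$ of the Borel transform and the upper density $D^{1-\gamma}_{\mu_{n,H}}(E)$ (only the direction ``finite density implies finite $\limsup$'' is needed) gives $S_{n,\gamma}\subset\{E:\ D^{1-\gamma}_{\mu_{n,H}}(E)=\infty\}$, and the Rogers--Taylor density theorem gives $h^{1-\gamma}$ of that set equal to zero. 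So for the Hausdorff-measure statement your proposal is correct and follows the same route.

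The final step, however, contains a genuine gap. From $h^{1-\gamma}(S_{n,\gamma})=0$ you conclude that a component of the \emph{full} spectral measure $\mu_{n,H}$ vanishes, via the sentence that Rogers--Taylor ``locates any $\beta$-singular component of $\mu_{n,H}$ on a set of $h^{\beta}$ measure zero; the $\beta$-continuous part defining $\mu_{n,H,\beta}$ therefore vanishes.'' The first clause is just the definition of $\beta$-singularity and uses nothing you proved, and the conclusion is a non sequitur: nothing in the hypotheses prevents $\mu_{n,H}$ from having an absolutely continuous component (which is $h^{\beta}$-continuous for every $\beta\leq 1$) or point masses located off $S_{n,\gamma}$, so no unrestricted component of $\mu_{n,H}$ can be shown to vanish from information about the null set $S_{n,\gamma}$ alone; you also switch mid-sentence between calling $\mu_{n,H,\beta}$ the $\beta$-singular and the $\beta$-continuous part. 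What does follow, and what the second sentence of the corollary should be read as asserting given the paper's definition of $\mu_{n,H,\beta}$ as the part of $\mu_{n,H}$ absolutely continuous with respect to $h^{\beta}$, is a statement localized to $S_{n,\gamma}$: since $h^{1-\gamma}(S_{n,\gamma})=0$ implies $h^{\beta}(S_{n,\gamma})=0$ for every $\beta>1-\gamma$, any measure absolutely continuous with respect to $h^{\beta}$ assigns $S_{n,\gamma}$ zero mass, i.e.\ $\mu_{n,H,\beta}(S_{n,\gamma})=0$, equivalently the restriction $\mu_{n,H}|_{S_{n,\gamma}}$ is purely $\beta$-singular for all $\beta>1-\gamma$. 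You should state and prove the measure-theoretic conclusion in this restricted form; as written, your deduction of $\mu_{n,H,\beta}=0$ does not hold.
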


%\begin{rem} 
%\end{rem}

\begin{proofn} This proof is a direct application of a theorem of Rogers-Taylor \cite[Theorem 67]{Ro}.  (See also Last \cite{last}, Demuth-Krishna \cite[Theorem 1.3.6, Theorem 1.1.9]{DK}). 
\end{proofn}

%\section{Appendix}
%%%%%%%%%%%%%%%%%%%%%%%%%%%%%%%%%%%%%%%%%%%%%%%%%%%%%%%%%%%%%%%%%%%%%%%%%%%%%%%%%%%

%\bibliographystyle{plain}
%\bibliography{HKK}

\begin{thebibliography}{10}

\bibitem{Aizenman} M.\ Aizenman, \emph{Localization at weak disorder: some elementary bounds}, Rev.\ Math.\ Phys. \textbf{6} (1994) 1163--1182; Special issue dedicated to Elliott H.\ Lieb.

\bibitem{AizenmanW} M.\ Aizenman, S.\ Warzel, {\em Random operators, Disorder effects on quantum spectra and dynamics}, volume \textbf{168}, {Graduate Studies in Mathematics}, American Mathematical Society, Providence, RI, 2015.

\bibitem{bcm} J.\ M.\ Barbaroux, J.\ M.\ Combes, R.\ Montcho, \emph{Remarks on the relation between quantum dynamics and fractal spectra}, J.\ Math.\ Anal.\ Appl. \textbf{213}, (1997) 698-722. 

\bibitem{Be} J.\ Berezanskii, \emph{Expansions in eigenfunctions of self-adjoint operators,} {Trans.\ Math.\ Mono.}, Vol. \textbf{17}, Amer.\ Math.\ Soc., Providence, R.I., 1968.

\bibitem{CKL} M.\ Christ, A.\ Kiselev, Y.\ Last: \emph{Approximate eigenvectors and spectral theory}, Differential Equations and Mathematical Physics (Birmingham, AL, 1999), AMS/IP Stud.\ Adv.\ Math. \textbf{16},  Amer.\ Math.\ Soc., Providence, RI  (2000) 85 - 96.

\bibitem{combes} J.-M.\ Combes, \emph{Connection between quantum dynamics and spectral properties of time-evolution operators}, in Differential equations with applications to mathematical physics, 59–68, Academic Press, Boston, MA, 1993; %MR1207148

\bibitem{cfks} H.\ Cycon, R.\ Froese, W.\ Kirsch, B.\ Simon, \emph{\Schr operators with application to quantum mechanics and global geometry}, Texts Monogr.\ Phys., 
Springer Study Ed., Springer-Verlag, Berlin, 1987. 

\bibitem{DK} M.\ Demuth, M.\ Krishna, \emph{Determining Spectra in Quantum Theory}, Progress in Mathematical Physics, vol. \textbf{44}. Birkhäuser Boston Inc., Boston (2005).

\bibitem{DS} P.\ Deift, B.\ Simon,  \emph{A time-dependent approach to completeness of multiparticle quantum systems}, Communications on Pure and Applied Mathematics, \textbf{30}(5) (1977) 573-583.

\bibitem{EK} A\ Elgart, A.\  Klein, 
\newblock \emph{Ground state energy of trimmed discrete {S}chr\"{o}dinger operators
  and localization for trimmed {A}nderson models}, 
\newblock {J.\ Spectr.\ Theory}, \textbf{4}(2) (2014) 391--413.

\bibitem{ES} A.\ Elgart, S.\ Sodin, \emph{The trimmed {A}nderson model at strong disorder: localisation and its breakup}, {J.\ Spectr.\ Theory}, \textbf{7}(1) (2017) 87-110.

\bibitem{GK-character}
F.\  Germinet, A.\  Klein, 
\newblock \emph{A characterization of the {A}nderson metal-insulator transport
  transition},
\newblock {Duke Math.\ J.}, \textbf{124}(2):309--350, 2004.

\bibitem{GKS}  F.\ Germinet, A.\ Klein, and J.\ Schenker, \emph{Dynamical delocalization in random Landau Hamiltonians}, 
Ann.\ of Math. (2) \textbf{166} (2007), no.\ 1, 215–244.

\bibitem{GP} D.\ Gilbert and D.\ Pearson, \emph{On subordinacy and analysis of the spectrum of one-dimensional Schr\"odinger operators}, {J.\ Math.\ Anal.\ Appl.} \textbf{128}(1987), 30 - 56.

\bibitem{guarneri} I.\ Guarneri, \emph{Spectral properties of quantum diffusion on lattices,} Europhys.\ Lett. \textbf{10} (1989), no. 2, 95–100; Europhys. Lett. 21 (1993), no. 7, 729–733

\bibitem{GuarneriSB}
I.\ Guarneri, H.\ Schulz-Baldes, 
\newblock\emph{ Intermittent lower bound on quantum diffusion},
\newblock {Lett.\ Math.\ Phys.}, \textbf{49}(4):317--324, 1999.

\bibitem{JL} S.\ Jitomirskaya, Y.\ Last,  \emph{Power-law subordinacy and singular spectra. I. Half-line operators}, Acta Mathematica, 183 (2) (1999) 171-189.

\bibitem{JSB}
S.\ Jitomirskaya,  H.\ Schulz-Baldes,
\newblock \emph{Upper bounds on wavepacket spreading for random {J}acobi matrices}, Comm.\ Math.\ Phys., \textbf{273}(3):601--618, 2007.

\bibitem{JSS} S.\ Jitomirskaya, H.\ Schulz-Baldes, and G.\ Stolz, \emph{Delocalization in random polymer models}, Comm.\ Math.\ Phys. \textbf{233} (2003), no.\ 1, 27-48.



\bibitem{K94}
W.~Kirsch, 
\newblock \emph{Wegner estimates and {A}nderson localization for alloy-type
  potentials,}
\newblock {Math.\ Z.}, \textbf{221}(3):507--512, 1996.

\bibitem{Invitation}
W.\ Kirsch, \emph{An invitation to random {S}chr\"{o}dinger operators,}
\newblock In {\em Random {S}chr\"{o}dinger operators}, volume~\textbf{25} of {\em Panor.\   Synth\`eses}, pages 1--119. Soc. Math. France, Paris, 2008.
\newblock With an appendix by Fr\'{e}d\'{e}ric Klopp.

\bibitem{KiKr1}
W.\ Kirsch, M.\ Krishna,
\emph{Spectral statistics for {A}nderson models with sporadic potentials}, {J.\ Spectr.\ Theory},  \textbf{10}(2):581--597, 2020.

\bibitem{KiKr2}
W.\ Kirsch, M.~Krishna,
\newblock \emph{Localisation and {D}elocalisation for a {S}imple {Q}uantum {W}ave
  {G}uide with {R}andomness,}
\newblock {Ann.\ Henri Poincar\'{e}}, 23(11):4131--4148, 2022.

%\bibitem{kiselevLast} A. Kiselev and Y.\ Last, \emph{Solutions, spectrum, and %dynamics for Schr\"odinger operators on infinite domains}, Duke Math.\ J.\ \textbf{102} no.\ 1 (2000), 125 - 150; MR1741780. 

\bibitem{KL} A.\ Kiselev, Y.\ Last, \emph{Solutions, spectrum, and dynamics for Schr\"odinger operators on infinite domains}, Duke Math.\ J. \textbf{102} (2000), 12 - 150.

\bibitem{last} Y. Last, \emph{Quantum dynamics and decompositions of singular continuous spectra}, J.\ Funct.\ Anal. \textbf{142 }(1996), no. 2, 406–445.


\bibitem{MartinelliS}
F.~Martinelli, E.~Scoppola, 
\newblock \emph{Remark on the absence of absolutely continuous spectrum for
  {$d$}-dimensional \Schr operators with random potential for large
  disorder or low energy}, 
\newblock {Comm.\ Math.\ Phys.} \textbf{97}(3)(1985), 465-471.

\bibitem{JO}
J.\ Obermeit, 
\newblock {\em Das Anderson-Modell mit Fehlpl\"atzen, PhD-thesis (in German)}.
\newblock PhD thesis, Ruhr-Universit\"at Bochum, 1997.

\bibitem{Ro} C.\ A.\ Rogers, \emph{Hausdorff Measures}, Cambridge Univ.\ Press, London, 1979.

\bibitem{CR}
C.~Rojas-Molina, 
\newblock \emph{The Anderson model with missing sites}, 
\newblock {Oper.\ Matrices}, \textbf{8} (2014) 287-299.

\bibitem{BS:semigroups}
Barry Simon, 
\newblock \emph{Schr\"{o}dinger semigroups}  ,
\newblock {Bull.\ Amer.\ Math.\ Soc. (N.S.)}, \textbf{7}(3):447--526, 1982.

\bibitem{Schnol}
\`E.~\`E. \v{S}nol,
\newblock \emph{On the behavior of the eigenfunctions of {S}chr\"{o}dinger's
  equation,}
\newblock {Mat.\ Sb.\ (N.S.) \textbf{42 }(84) (1957), 273-286; erratum}, \textbf{46} (88):259,
  1957. 
  
 % On the behavior of eigenfunctions (Russian), Doklady Akad. Nauk SSSR (N.S.) 94 %(1954), 389--392. MR0062902

\bibitem{strichartz} R.\ S.\ Strichartz, \emph{Fourier asymptotics of fractal measures},
J.\ Funct.\ Anal. \textbf{89} (1990), no.\ 1, 154–187.


\bibitem{Weidmann}
J.\ Weidmann, 
\newblock {\em Linear operators in {H}ilbert spaces}, volume~\textbf{68}, {
  Graduate Texts in Mathematics},
  \newblock Springer-Verlag, New York-Berlin, 1980;
\newblock Translated from the German by Joseph Sz\"{u}cs.

%%%%%%%%%%%%%%%%%%%%%%%%%%%%%%%%%%



%\bibitem{GP} D. Gilbert and D. Pearson, On subordinacy and analysis of the spectrum of one-dimensional Schr\"odinger operators, {J. Math. Anal. Appl.} 128 (1987), 30 - 56.
%\bibitem{KL} A. Kiselev and Y. Last: Solutions, spectrum, and dynamics for Schr\"odinger operators on infinite domains, Duke Math. J. 102 (2000), 12 - 150.
%
%\bibitem{La1} Y. Last, Quantum dynamics and decompositions of singular continuous spectra, {J. Funct. Anal.} 142 (1996), 406 -  445.
%\bibitem{Ro} C.A. Rogers, Hausdorff Measures, Cambridge Univ. Press, London, 1979.

\end{thebibliography}
\end{document}